\begin{document}

\newtheorem{claim}{Claim}
\newtheorem{definition}{Definition}
\newtheorem{lemma}{Lemma}
\newtheorem{theorem}{Theorem}
\newtheorem{proposition}{Proposition}

%
\title{Analyzing and Visualizing Scalar Fields on Graphs}

\author{\IEEEauthorblockN{Yang Zhang}
\IEEEauthorblockA{Department of Computer Science\\
and Engineering\\
The Ohio State University\\
Columbus, Ohio, USA, 43210\\
Email: zhang.863@osu.edu}
\and
\IEEEauthorblockN{Yusu Wang}
\IEEEauthorblockA{Department of Computer Science\\
and Engineering\\
The Ohio State University\\
Columbus, Ohio, USA, 43210\\
Email: yusu@cse.ohio-state.edu}
\and
\IEEEauthorblockN{Srinivasan Parthasarathy}
\IEEEauthorblockA{Department of Computer Science\\
and Engineering\\
The Ohio State University\\
Columbus, Ohio, USA, 43210\\
Email: srini@cse.ohio-state.edu
}}


%


\maketitle
\baselineskip=0.90\normalbaselineskip

\begin{abstract}
The value proposition of a dataset often resides in the implicit interconnections or explicit
relationships (patterns) among individual entities, and is often modeled as a graph.
Effective visualization of such graphs can lead to key insights uncovering such value.
In this article we propose a visualization method to explore graphs with numerical
attributes associated with nodes (or edges) -- referred to as scalar graphs.
Such numerical attributes can represent raw content information, similarities, or
derived information reflecting important network measures such
as triangle density and centrality.
The proposed visualization strategy
seeks to simultaneously uncover the relationship between attribute values and graph topology,
 and relies on transforming the network to generate a terrain map.
A key objective here is to ensure that the terrain map
reveals the overall distribution of
components-of-interest (e.g. dense subgraphs, k-cores) and the relationships among them
while being sensitive to the attribute values over the graph.
We also design extensions that can capture the  relationship across multiple
numerical attributes (scalars).
We demonstrate the efficacy of our method on several real-world data science tasks
while scaling to large graphs with millions of nodes.
\end{abstract}


%
\IEEEpeerreviewmaketitle

\section{Introduction}
Our ability to produce and store data has far outstripped our ability to analyze and utilize this data to derive actionable insight.
Many phenomena and  problems from all walks of human endeavor can often be represented in graph or network form, where nodes represent entities-of-interest and edges represent interactions or relationships among them. Examples abound across the physical, biological, business, technological, sociological and health domains.
A fundamental challenge is the ability to visualize such interconnections at scale while working within the pixel limits of modern displays.
This in turn has led to several recent advances in the database systems~\cite{csv,clus_model,wang:dngraph,tkcore12}, network science\cite{Vlad:kcore,kcore_vis}, geometry\cite{isosurface,multi_field,landscapes,terrain10}, and information visualization~\cite{graphvis1,measurevis1} communities.

However, the data scientist is often interested in uncovering patterns that go beyond layout
and visualization designs -- for instance accounting for measures
or attributes defined on both nodes and edges of the graph~\cite{clus_attr, clus_model}.
Such measures often encode information about connectivity -- both locally (e.g. triangle density,
clustering coefficient, K-Core, K-Truss) as well as globally (e.g.
various centrality measures such as closeness, betweenness and harmonic,
and also measures of importance such as PageRank and Influence).
Beyond just measures from the topology, such measures may also incorporate heterogeneous information related to content (e.g. sequence information of a protein etc.).
Visualizing the measure information in such graphs (where each node or edge has one or more scalar measures associated with it) further exacerbates the challenge.

In this article we propose a novel visualization method to explore graphs with numerical
measures associated with nodes (or edges) -- referred to as \emph{scalar graphs}, and the measure values are referred to as \emph{scalar values}.
Each scalar value could either be a natural attribute or a derived attribute summarizing information from
multiple natural attributes (e.g. triangle density, centralities, cliquishness)~\cite{tkcore12,csv}.
We model the problem
as a continuous function $f: X \to \mathcal{R}$: the domain $X$ is a simplicial complex whose vertex set is the set of input graph nodes, and its topology is determined by the input graph topology.
We call such a representation of a graph a \emph{graph-induced scalar field}.
We then leverage a powerful ``terrain metaphor" idea for visualizing scalar fields~\cite{landscapes, terrain10} and adapt it to our context.
Our visualization model {\it naturally encodes both topological and numerical measure information together},
and is capable of handling large graphs with millions of nodes/edges.
Empirically we demonstrate the use of our methodology on a range of data science tasks while providing a comparative assessment
against state-of-the-art alternatives from the perspectives of efficacy, efficiency and usability.

Figure~\ref{TerrainExample} previews our methodology for the tasks of:  i)
visualizing dense subgraphs (K-Cores, K-Trusses, etc); and ii)
visualizing communities in social networks.
In Figure~\ref{GrQc_3d_intro}, we use K-Core number~\cite{Vlad:kcore} as a measure on each node and use it to visualize K-Cores in a collaboration network (a fundamental
network science task), where the top part of high peaks contains dense K-Cores
(Clicking on the circled part of the high peak will show the details of a dense K-Core in the red box).
In Figure~\ref{dblp4comm_intro}, we use community score~\cite{overlap13} as a measure to visualize the four communities in a DBLP network,
where each major peak is a community, and the sub-peaks in a major peak indicate sub-communities.
The top part of a peak contains key members of the community.
Our proposed tool can also color the terrain using a second measure,
so one can analyze the correlation between two different measures on the graph and furthermore allows for various rotation/transformation operators which allow an end-user
to interrogate the data from multiple perspectives.
Further details of our evaluation is detailed in Section~\ref{evaluation}.
In summary, our contributions are:

\begin{figure}[h]
\centering
\subfigure[Visualizing K-Cores]
{
\includegraphics[width=0.23\textwidth]{./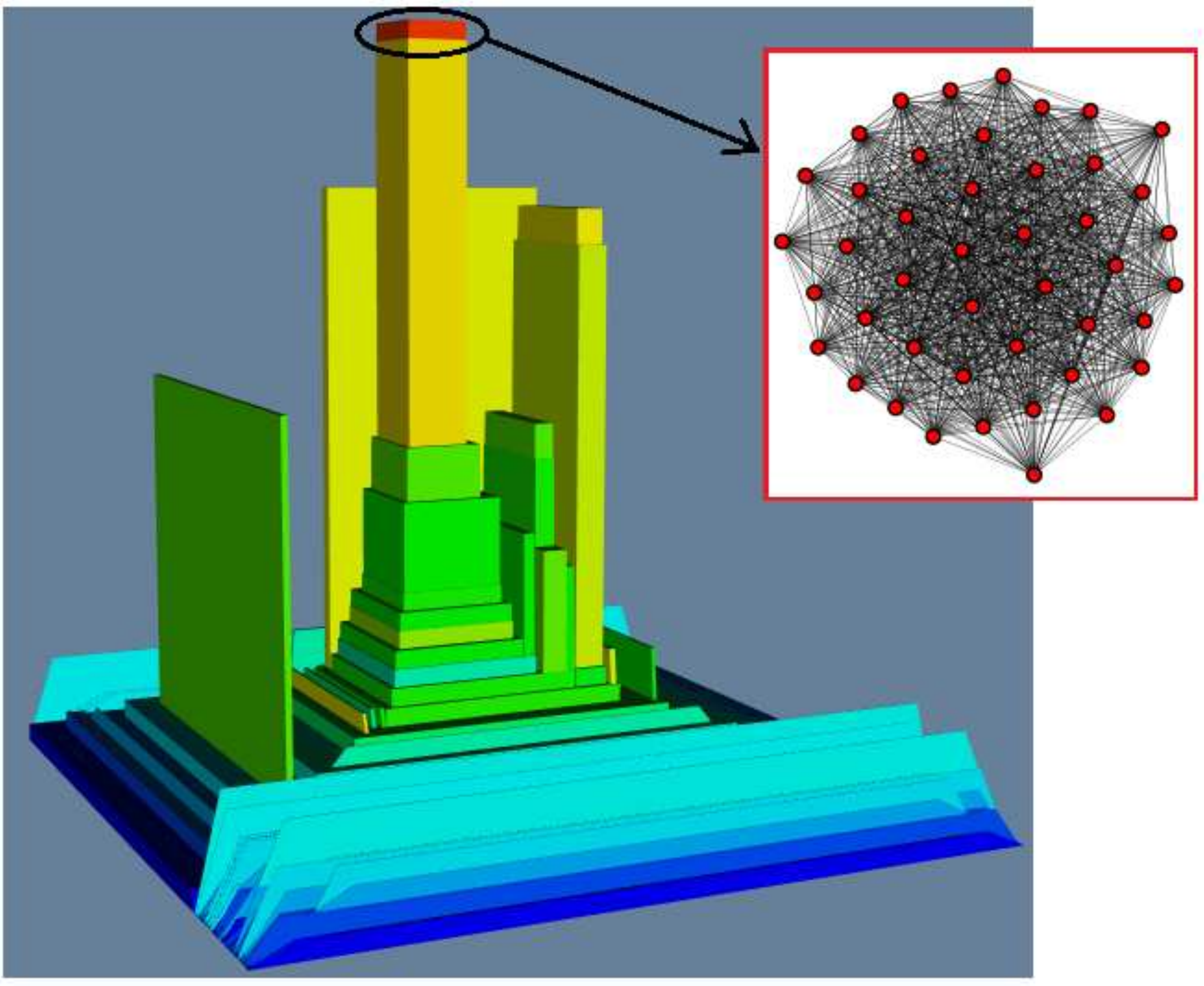}
\label{GrQc_3d_intro}
}
 \quad
\subfigure[Visualizing Communities]
{
\includegraphics[width=0.195\textwidth]{./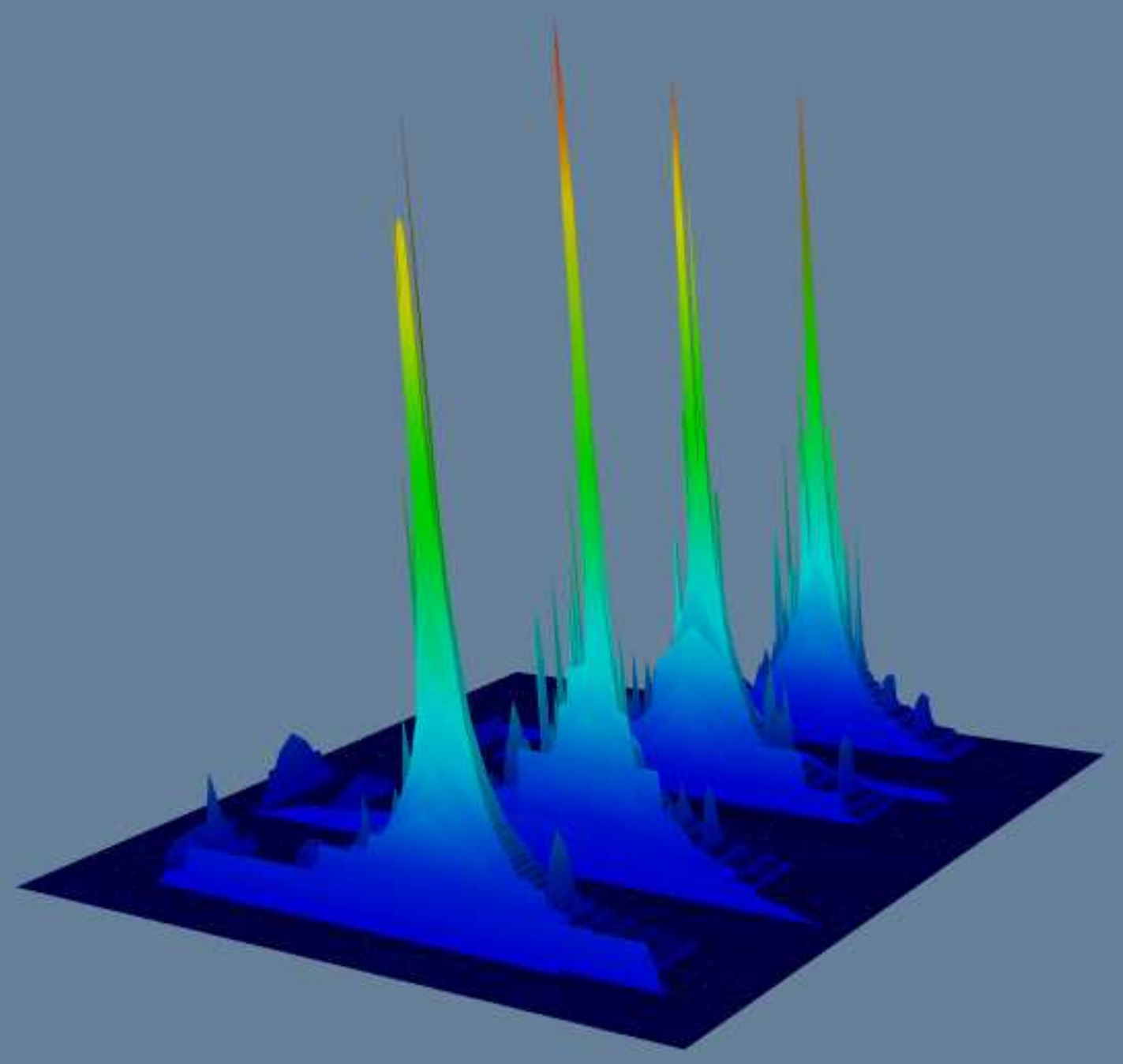}
\label{dblp4comm_intro}
}

\vspace{-0.10in}
\caption[Optional caption for list of figures]{Examples of Terrain Visualization}
\label{TerrainExample}
\end{figure}

%

\begin{itemize}
  \item We advocate a novel visualization method to visualize a graph whose vertices/edges are associated with numerical measures. Central to our approach is to embed the graph
        in {\it terrain}  space.
  The visualization method enables users to quickly capture how the numerical measure value is distributed over the graph.
  \item We propose the concept of maximal $\alpha$-connected component to represent various graph patterns (dense subgraphs, communities, K-Cores, K-Truss etc.), and show that our visualization method not only captures the distribution of graph patterns, but also their global relationships.
  \item Based on our visualization method, we propose methods to analyze the relation between multiple numerical measures on one graph.

\item Finally, we empirically demonstrate that the visualization method is both general-purpose and scalable and can handle graphs with millions of nodes and edges. A key part of our evaluation strategy includes a user-study following best practices (Section~\ref{user_sec}).

\end{itemize}

\vspace{-0.05in}
\section{Scalar Graph Visualization}

\noindent {\bf Notation:}
We define a vertex-based scalar graph \emph{G(V, E)} as a graph comprising edges \emph{E} and vertices \emph{V},
where each vertex $v$ has one scalar value associated with it, denoted as \emph{v.scalar}.
In the following we refer to vertex-based scalar graph as a {\it scalar graph} for expository
simplicity. We also assume the following notion in the rest of this section.
For two subgraphs $G'(V', E')$ and $G''(V'', E'')$ of scalar graph G,
$G'$ is the same as $G''$ (denoted as $G' = G'') $ if $V'=V''$ and $E'=E''$,
$G'$ is a subgraph of $G''$ (denoted as $G' \subseteq G''$) if $V' \subseteq V''$ and $E' \subseteq E''$,
$G'$ is connected to $G''$ (denoted as $G' \leftrightarrow G''$) if there is a vertex $v'\in V'$ and a vertex $v''\in V''$ such that $v'$ is connected to $v''$.

\subsection{Theoretical Insights}
We now describe the key insights underpinning our visualization strategy.
We first define \textbf{maximal $\alpha$-connected component} as follows:

\noindent \begin{definition}
A connected component $C(V_{C}, E_{C})$ of scalar graph $G(V,E)$ is a \textbf{maximal $\alpha$-connected component} if it satisfies following conditions: \\
(1) for every vertex $v \in V_{C}$, $v.scalar >= \alpha$.\\
(2) for any vertex $v \in V_{C}$, if $v'$ is connected to $v$ and $v' \not\in V_{C}$, then $v'.scalar < \alpha$.\\
(3) for any edge $e(v_{1}, v_{2}) \in E$, if $v_{1} \in V_{C}$ and $v_{2} \in V_{C}$, then $e(v_{1}, v_{2}) \in E_{C}$.
\end{definition}

Maximal $\alpha$-connected components are an important aspect of our visualization strategy.
We note that the definition accommodates both connectivity (topology) and scalar value information.
The scalar values may be inherited from the domain or derived (e.g. local density or k-core values\cite{Vlad:kcore})). The
distribution of maximal $\alpha$-connected components in a graph would allow the data scientist to
interrogate the distribution and inter-connectivity of such structures within the graph (e.g.
distribution and relationships of K-cores).




\noindent \begin{definition}
For vertex $v$, we define \textbf{MCC($v$)}
as the maximal $v.scalar$-connected component containing $v$ (i.e. $\alpha = v.scalar$).
\end{definition}

\noindent \begin{theorem}
For any maximal $\alpha$-connected component $C$ in $G$, there is a vertex $v$ in $G$ such that $MCC(v)= C$.
\label{theorem_sg1}
\end{theorem}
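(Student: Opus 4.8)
The plan is to produce the required vertex explicitly: given a maximal $\alpha$-connected component $C(V_C,E_C)$, I would let $v$ be a vertex of $V_C$ attaining the \emph{minimum} scalar value over $C$, and set $\beta = v.scalar$. By condition~(1) of the definition every vertex of $C$ has scalar at least $\alpha$, so $\beta \ge \alpha$. The goal is then to show that $C$ is precisely the maximal $\beta$-connected component containing $v$, which is exactly $MCC(v)$.

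First I would verify that $C$ itself satisfies the three defining conditions of a maximal $\beta$-connected component. Condition~(1) holds because $\beta$ was chosen as the minimum scalar value over $V_C$, so every $u\in V_C$ has $u.scalar \ge \beta$. Condition~(3) is purely a statement about which edges of $G$ lie inside $E_C$ and does not reference $\alpha$ at all, so it transfers verbatim. The crucial point is condition~(2): for a vertex $v'\notin V_C$ adjacent to some vertex of $V_C$, the fact that $C$ is maximal $\alpha$-connected gives $v'.scalar < \alpha$, and since $\beta \ge \alpha$ this yields $v'.scalar < \beta$. Thus the boundary condition, which a priori only controls the threshold $\alpha$, automatically controls the larger threshold $\beta$; this is exactly why choosing the minimum-scalar vertex is essential, since a vertex with a strictly larger scalar would define a threshold that the boundary vertices need not fall below.

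To finish I would establish that $C$ is \emph{the} (unique) maximal $\beta$-connected component containing $v$, i.e.\ that it coincides with $MCC(v)$. I would characterize $MCC(v)$ concretely as the connected component, containing $v$, of the subgraph of $G$ induced by $\{u : u.scalar \ge \beta\}$, and then argue equality by double inclusion. The inclusion $C \subseteq MCC(v)$ is immediate: $C$ is connected, contains $v$, and all its vertices have scalar $\ge \beta$, so its vertices and (by condition~(3)) edges all lie in that component. For the reverse inclusion, suppose some vertex $u\in MCC(v)\setminus V_C$; since $MCC(v)$ is connected there is a path from $v$ to $u$ inside it, and I would take the first edge along this path that leaves $V_C$, say from $w\in V_C$ to $w'\notin V_C$. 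Then $w'$ is adjacent to a vertex of $C$ and not in $C$, so $w'.scalar < \alpha \le \beta$, contradicting $w'\in MCC(v)$.

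I expect the main obstacle to be this last reverse-inclusion step, since it is where the interaction between topology (following a connecting path) and the scalar threshold must be combined, and where the choice $\beta=\min_{u\in V_C} u.scalar$ is genuinely used. A secondary point worth stating cleanly is the well-definedness of $MCC(v)$ itself: the induced-subgraph description above shows the maximal $\beta$-connected component containing a fixed vertex is unique, which is what licenses the phrase ``$MCC(v) = C$''.
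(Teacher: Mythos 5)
Your proposal is correct and takes essentially the same approach as the paper: the paper's proof sketch simply observes that the required $v$ is the vertex of minimum scalar value in $C$ and declares the rest trivial. You have chosen exactly that vertex and supplied the details (verification of the three conditions at threshold $\beta=v.scalar$, uniqueness of $MCC(v)$ via the induced-subgraph characterization, and the double inclusion) that the paper leaves implicit.
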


\noindent \emph{\textbf{Proof Sketch:}}
It is easy to show that $v$ in Theorem~\ref{theorem_sg1} is the vertex with minimum scalar value in $C$.
The proof trivially follows from this observation.

\noindent \begin{theorem}
For two vertices $v$ and $v'$, if $v.scalar = v'.scalar$ and $MCC(v)$ contains $v'$, then $MCC(v) = MCC(v')$.
\label{theorem_sg2}
\end{theorem}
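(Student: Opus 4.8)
The plan is to reduce the claim to a uniqueness statement about maximal $\alpha$-connected components and then establish that uniqueness directly. First I would set $\alpha = v.scalar = v'.scalar$ and observe that, by the definition of $MCC$, both $MCC(v)$ and $MCC(v')$ are maximal $\alpha$-connected components for this single common threshold $\alpha$; this is exactly where the hypothesis $v.scalar = v'.scalar$ is used. Since the hypothesis also gives $v' \in MCC(v)$, the component $MCC(v)$ is itself a maximal $\alpha$-connected component that contains $v'$. Thus it suffices to show that for a fixed $\alpha$ there is at most one maximal $\alpha$-connected component containing any given vertex; applying this with the vertex $v'$ then forces $MCC(v) = MCC(v')$.

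The heart of the argument, and what I expect to be the main obstacle, is this uniqueness claim: if $C_1$ and $C_2$ are both maximal $\alpha$-connected components containing a common vertex $w$, then $C_1 = C_2$. I would prove $V_{C_1} \subseteq V_{C_2}$ by a path-tracing induction and then invoke symmetry. Take any $u \in V_{C_1}$; since $C_1$ is connected there is a path $w = w_0, w_1, \ldots, w_k = u$ lying entirely in $C_1$, and by condition (1) every $w_i$ satisfies $w_i.scalar \geq \alpha$. Starting from $w_0 = w \in V_{C_2}$, I would argue inductively that each $w_i \in V_{C_2}$: if $w_i \in V_{C_2}$ but its neighbor $w_{i+1} \notin V_{C_2}$, then condition (2) applied to $C_2$ would force $w_{i+1}.scalar < \alpha$, contradicting $w_{i+1}.scalar \geq \alpha$. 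Hence $u = w_k \in V_{C_2}$, giving $V_{C_1} \subseteq V_{C_2}$, and by symmetry the two vertex sets coincide.

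Finally I would upgrade equality of vertex sets to equality of the full subgraphs. Once $V_{C_1} = V_{C_2}$, condition (3) forces the edge sets to agree as well, since every edge of $G$ with both endpoints in the common vertex set must belong to both $E_{C_1}$ and $E_{C_2}$; hence $C_1 = C_2$. Chaining everything together, $MCC(v)$ and $MCC(v')$ are two maximal $\alpha$-connected components sharing the vertex $v'$, so by the uniqueness just established they are equal. The one subtlety to pin down is the intended reading of ``$v'$ is connected to $v$'' in condition (2): I will interpret it as adjacency (an edge of $G$), which is precisely what makes the inductive step advance one edge at a time, and the path-tracing argument is exactly the mechanism that lets this local, edge-by-edge maximality condition propagate into global equality of the two components.
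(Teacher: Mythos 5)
Your proof is correct and takes essentially the same route as the paper's: both arguments trace a path inside one component, note that every vertex on it has scalar value $\geq \alpha$ by condition (1), and use the maximality condition (2) to propagate membership in the other component one edge at a time, concluding by symmetry. Your repackaging of this as a uniqueness lemma (two maximal $\alpha$-connected components sharing a vertex coincide), together with the explicit induction along the path and the condition-(3) argument for the edge sets, simply makes rigorous the steps the paper's sketch leaves implicit.
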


\noindent \emph{\textbf{Proof Sketch:}} For every vertex $v_{i}$ in $MCC(v)$,
we can show that in $MCC(v)$ there is a path $v_{i}\rightarrow...\rightarrow v\rightarrow...\rightarrow v'$, which starts with $v_{i}$, passes through $v$, and ends at $v'$,
and all the vertices on the path have scalar value greater than or equal to $v'.scalar$, so $v_{i}$ is in $MCC(v')$. Similarly, we can show for every vertex $v_{j}$ in $MCC(v')$, $v_{j}$ is in $MCC(v)$. Thus $MCC(v) = MCC(v')$.

\noindent \begin{theorem}
For a maximal $\alpha_{1}$-connected component $C_{1}$ and another maximal $\alpha_{2}$-connected component $C_{2}$,
if $C_{1} \leftrightarrow C_{2}$ then $C_{1} \subseteq C_{2}$ or $C_{2} \subseteq C_{1}$ .
\label{theorem_sg3}
\end{theorem}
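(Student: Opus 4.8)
The plan is to reduce to a single case by symmetry: assume without loss of generality that $\alpha_1 \le \alpha_2$ and prove the sharper statement $C_2 \subseteq C_1$. The symmetric labeling covers $\alpha_1 \ge \alpha_2$, and when $\alpha_1 = \alpha_2$ the two inclusions combine to give $C_1 = C_2$, which still satisfies the disjunction. The guiding intuition is that a smaller threshold yields a \emph{larger} component (a weaker membership condition $v.scalar \ge \alpha$), so the lower-$\alpha$ component $C_1$ should swallow the higher-$\alpha$ component $C_2$ whenever the two touch.

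First I would unpack the hypothesis $C_1 \leftrightarrow C_2$: there exist $u \in V_{C_1}$ and $w \in V_{C_2}$ that are adjacent, i.e. $e(u,w)\in E$. Since $w \in V_{C_2}$ we have $w.scalar \ge \alpha_2 \ge \alpha_1$. I then invoke the maximality condition~(2) of the definition applied to $C_1$ at the vertex $u$: were $w$ \emph{not} in $V_{C_1}$, then, being adjacent to $u \in V_{C_1}$, it would be forced to satisfy $w.scalar < \alpha_1$, contradicting $w.scalar \ge \alpha_1$. Hence $w \in V_{C_1}$, so $w \in V_{C_1}\cap V_{C_2}$ serves as a common anchor from which to grow the inclusion.

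Next I would propagate this containment throughout $C_2$. Take an arbitrary $x \in V_{C_2}$. Because $C_2$ is a connected component that carries all its induced edges by condition~(3), there is a path $w = p_0, p_1, \dots, p_k = x$ lying entirely inside $C_2$, and every $p_i.scalar \ge \alpha_2 \ge \alpha_1$. I argue by induction along the path that each $p_i \in V_{C_1}$: the base case $p_0 = w \in V_{C_1}$ is already established, and in the inductive step $p_{i+1}$ is adjacent to $p_i \in V_{C_1}$ with $p_{i+1}.scalar \ge \alpha_1$, so condition~(2) of $C_1$ again forbids $p_{i+1}$ from lying outside $V_{C_1}$. Thus $x \in V_{C_1}$, which gives $V_{C_2}\subseteq V_{C_1}$. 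The edge inclusion then follows immediately from condition~(3) applied to $C_1$: any edge of $C_2$ has both endpoints in $V_{C_2}\subseteq V_{C_1}$ and is therefore forced into $E_{C_1}$, yielding $C_2 \subseteq C_1$.

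The step I expect to require the most care is not the induction but the correct reading of $\leftrightarrow$ and of the word ``connected'' in condition~(2): both must mean adjacency (a single edge) rather than reachability by an arbitrary path. Otherwise two high components joined through a low-valued ``valley'' would be related by $\leftrightarrow$ with neither containing the other, and the theorem would be false. Once adjacency is fixed as the intended meaning, the maximality condition~(2) does essentially all the work and the remaining path argument is routine.
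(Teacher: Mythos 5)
Your proof is correct, and it takes a more elementary route than the paper's. The paper leans on its Theorem~\ref{theorem_sg1}: it writes $C_{1} = MCC(v_{1})$ and $C_{2} = MCC(v_{2})$ for the minimum-scalar vertices $v_{1}, v_{2}$, assumes $v_{1}.scalar \le v_{2}.scalar$ w.l.o.g., and argues that every vertex $v'$ of $C_{2}$ is joined to $v_{1}$ by a path whose vertices all have scalar value at least $v_{1}.scalar$, hence $v' \in MCC(v_{1}) = C_{1}$. You never invoke Theorem~\ref{theorem_sg1}; instead you anchor at the contact edge supplied by $C_{1} \leftrightarrow C_{2}$, absorb its $C_{2}$-endpoint into $V_{C_{1}}$ via the maximality clause (condition (2)), and propagate by induction along paths inside $C_{2}$, closing with condition (3) for the edge set. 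The underlying mechanism is the same in both proofs --- reachability through vertices whose scalar clears the lower threshold forces membership in the lower-threshold component --- but your version makes explicit what the paper leaves implicit: the absorption step is exactly condition (2), the induction is spelled out, and containment of edges (not just vertices) is actually verified, a point the paper's sketch omits entirely. The paper's reduction buys brevity and reuse of already-established structure; yours buys self-containment and rigor. Your closing observation that ``connected'' in $\leftrightarrow$ and in condition (2) must mean adjacency is also well taken: under the reachability reading the theorem is false, and the paper never flags this.
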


\noindent \emph{\textbf{Proof Sketch:}} Based on Theorem \ref{theorem_sg1}, there are two vertices $v_{1}$ and $v_{2}$ such that $MCC(v_{1})= C_{1}$ and
$MCC(v_{2})=C_{2}$. Assume $v_{1}.scalar \leq v_{2}.scalar$ (w.l.g).
Since $MCC(v_{1})\leftrightarrow MCC(v_{2})$, for any vertex $v'$ in $MCC(v_{2})$,
we can find a path $v_{1}\rightarrow...\rightarrow v'$ connecting $v_{1}$ and $v'$, and
all the vertices on the path have scalar value greater than or equal to $v_{1}.scalar$, so $v'$ is in $MCC(v_{1})$. Thus $MCC(v_{2}) \subseteq MCC(v_{1})$.

Theorems \ref{theorem_sg1} and \ref{theorem_sg2} bound the
number of distinct maximal $\alpha$-connected components that need to be processed by any
algorithm (number of vertices in the graph).  Theorem \ref{theorem_sg3} highlights an
important hierarchical relationship between different maximal $\alpha$-connected components which
will be exploited by our visualization strategy.

\subsection{Vertex Scalar Tree}
In this section, we describe
the vertex scalar tree (\emph{scalar tree} for short) to analyze a scalar graph.
We note that if one views the graph as a 1-dimensional simplicial complex,
the vertex scalar values induce a piecewise-linear function on this domain,
then the maximal $\alpha$-connected components we defined previously are
akin to level sets or contours for each $\alpha$~\cite{contour00, isosurface}.
We also note that this perspective also helps us easily extend the notion of scalar tree for edge-based scalar graphs (in Section 2.3).

A scalar tree is a tree in which every node is associated with a scalar value,
and the scalar tree T of scalar graph G has the following properties:
\begin{enumerate}
  \item Every node in T corresponds to a vertex in G with the same scalar value, and vice versa (i.e. one-to-one correspondence).
  \item Every maximal $\alpha$-connected component in G corresponds to a subtree in T, and vice versa (i.e. one-to-one correspondence).
  \item Assume a maximal $\alpha_{1}$-connected component $C_{1}$ corresponds to subtree $T_{1}$ in T, and another maximal $\alpha_{2}$-connected component $C_{2}$ corresponds to subtree $T_{2}$ in T,
      then $C_{1}$ is a subgraph of $C_{2}$ if and only if $T_{1}$ is subtree of $T_{2}$.
  \item Assume a maximal $\alpha_{1}$-connected component $C_{1}$ corresponds to subtree $T_{1}$ in T, and another maximal $\alpha_{2}$-connected component $C_{2}$ corresponds to subtree $T_{2}$ in T,
      then $C_{1}$ and $C_{2}$ are not connected if and only if $T_{1}$ and $T_{2}$ are not connected.
\end{enumerate}

\smallskip
\noindent \textbf{Notation:} In the following text,
the scalar tree node corresponding to vertex \emph{v} is denoted as \emph{n(v)},
and the vertex corresponding to scalar tree node \emph{n} is denoted as \emph{v(n)}.
The parent of tree node $n$ is denoted as $parent(n)$.
The subtree corresponding to a maximal $\alpha$-connected component \emph{C} is denoted as \emph{ST(C)}.
A subtree \emph{ST} containing nodes $n_{1}, n_{2} ... n_{k}$ is denoted as $ST(n_{1}, n_{2}, ... , n_{k})$,
and a connected component C containing vertices $v_{1}, v_{2}, ... v_{k}$ is denoted as $C(v_{1}, v_{2}, ... ,v_{k})$.

\smallskip
\noindent {\bf Example:} In Figure~\ref{ScalarExample} we use an example to illustrate the properties of scalar tree.
Figure~\ref{example_graph} is a scalar graph G, in which every vertex's label is its scalar value.
Figure~\ref{example_tree} is a correspondent scalar tree T of Figure~\ref{example_graph},
rooted at node $n_{9}$.
Node $n_{i}$ in Figure~\ref{example_tree} corresponds to vertex $v_{i}$ in Figure~\ref{example_graph} (Property 1).
In Figure~\ref{example_graph_cut}, we extract all the maximal 2.5-connected components of G:\\
$C_{1}(v_{1}, v_{2}, v_{3}, v_{5})$ and $C_{2}(v_{4}, v_{6})$,
and in Figure~\ref{example_tree},
their correspondent subtrees are: $ST(C_{1}) = ST(n_{1}, n_{2}, n_{3}, n_{5})$ and $ST(C_{2}) = ST(n_{4}, n_{6})$,
this satisfies Property 2.
We notice that $C_{1}$ and $C_{2}$ are not connected, and $ST(C_{1})$ and $ST(C_{2})$ are not connected either, this satisfies Property 4.
We also observe that $C_{1}$ is a subgraph of a maximal 2-connected component $C_{3}(v_{1}, v_{2}, v_{3}, v_{4}, v_{5}, v_{6}, v_{7})$,
and $ST(C_{1})$ is a subtree of $ST(C_{3}) = ST(n_{1}, n_{2}, n_{3}, n_{4}, n_{5}, n_{6}, n_{7})$,
this satisfies Property 3.

\begin{figure}[h]
\centering
\subfigure[Scalar Graph G]
{
\includegraphics[width=0.13\textwidth]{./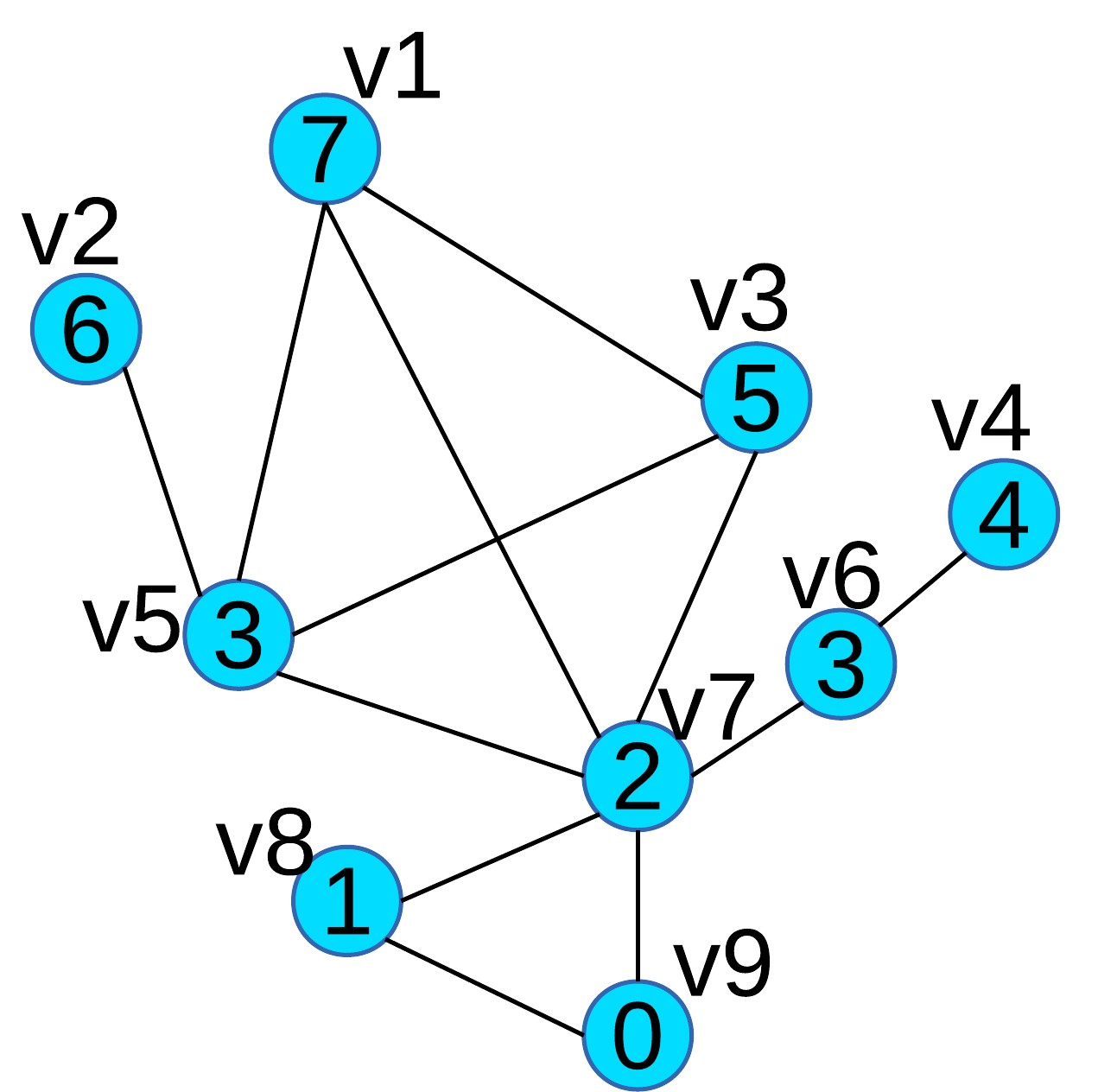}
\label{example_graph}
}
 \quad
\subfigure[Scalar Tree T]
{
\includegraphics[width=0.11\textwidth]{./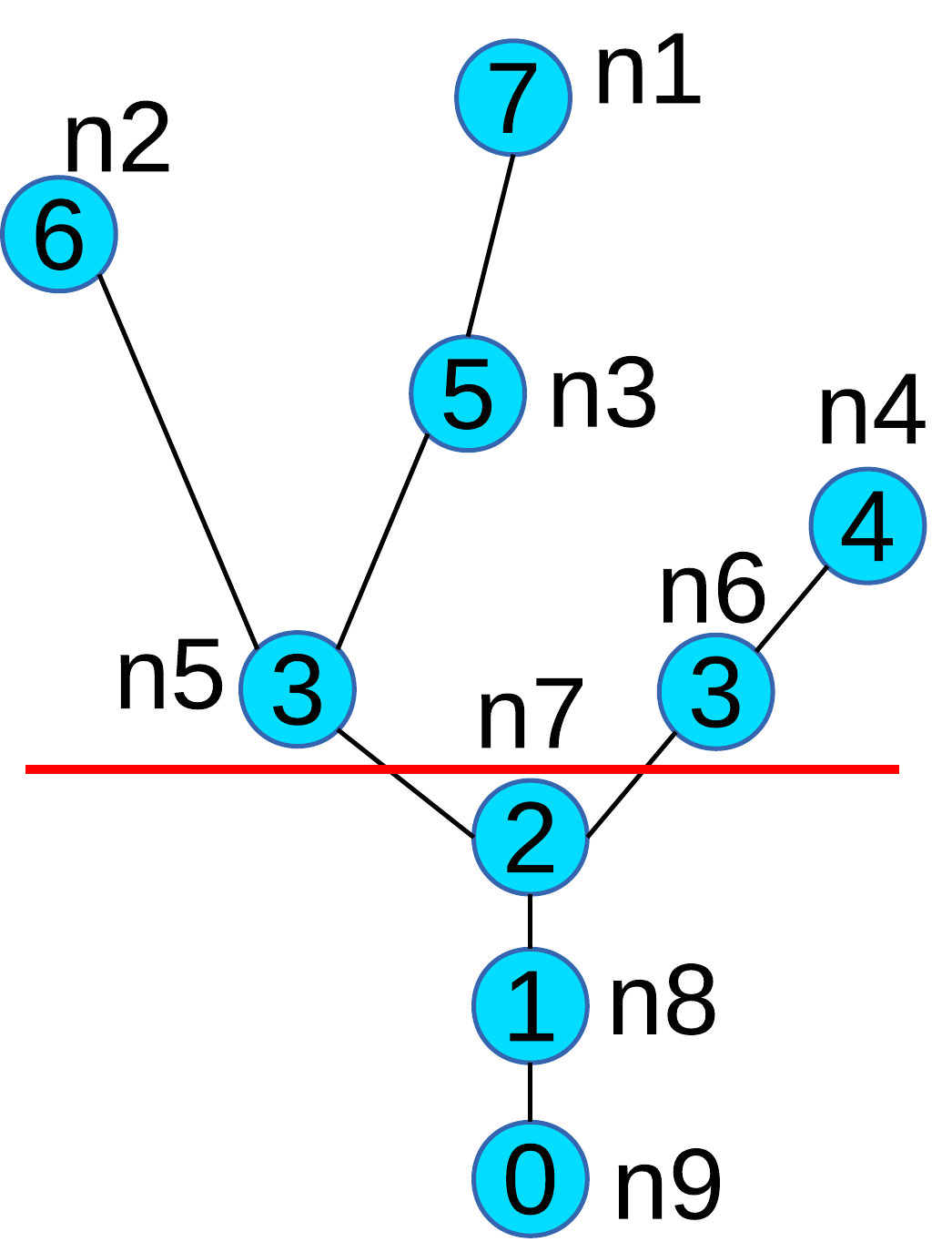}
\label{example_tree}
}
 \quad
\subfigure[maximal 2.5-connected components]
{
\includegraphics[width=0.13\textwidth]{./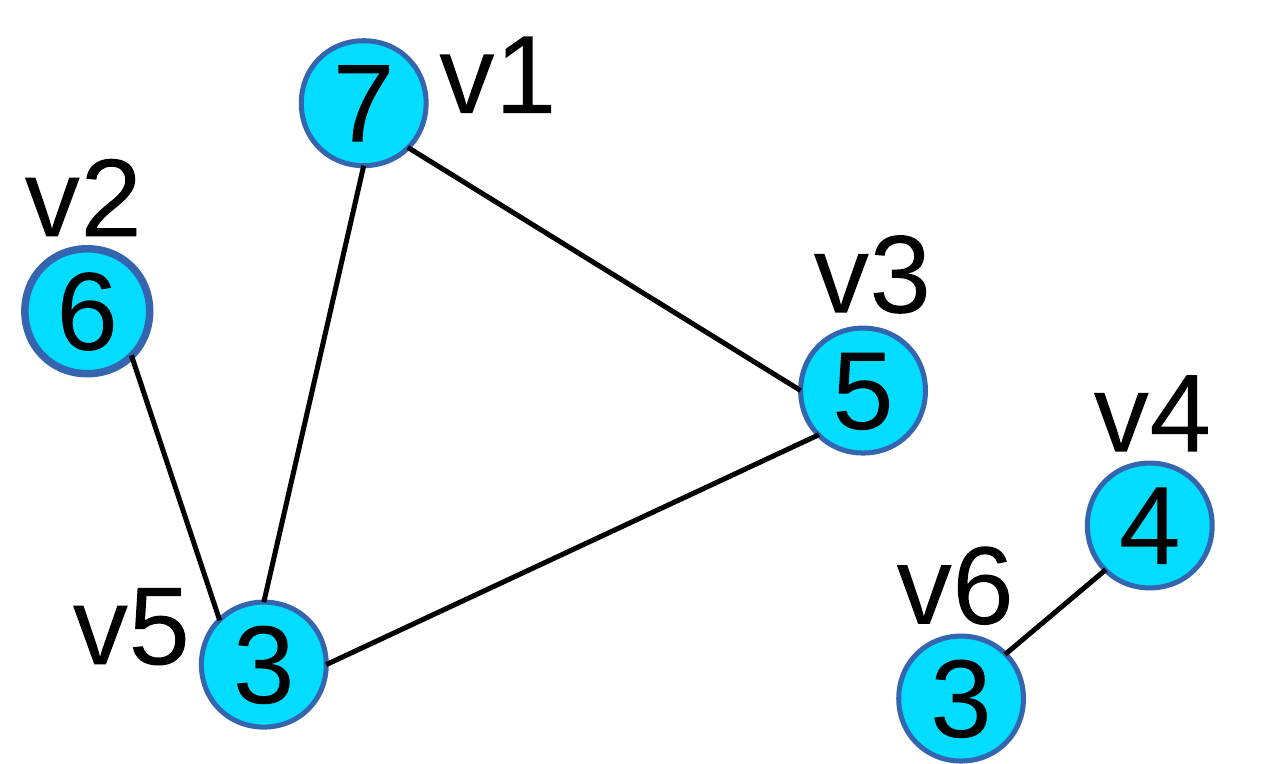}
\label{example_graph_cut}
}

\vspace{-0.10in}
\caption[Optional caption for list of figures]{Scalar Graph and Scalar Tree}
\label{ScalarExample}
\end{figure}

From the properties above, we can see that scalar tree captures the distribution and relationship of
all maximal $\alpha$-connected components within a scalar graph. This allows one
to analyze the scalar tree instead of the scalar graph and leverage it for our visualization
strategy.

\smallskip
\noindent{\bf Constructing the Vertex Scalar Tree:}
Algorithm~\ref{ScalarTree} constructs the scalar tree by leveraging
the observation connecting our problem with level sets and
contour trees\cite{contour00, isosurface}.
Algorithm~\ref{ScalarTree} processes all the vertices in decreasing order of scalar values (line 1-3).
If the current vertex $v_{i}$ is connected to a previously processed vertex $v_{j}$,
but $n(v_{i})$ is not in the current subtree of $n(v_{j})$ (line 4-5),
then we connect $n(v_{i})$ with the root of the current subtree of $n(v_{j})$ (line 6).
Here $root(n(v_{j}))$ denotes the root node of the current subtree containing $n(v_{j})$.
Note that now $n(v_{i})$ is parent of $root(n(v_{j}))$, so $n(v_{i})$ becomes the new root of the current subtree containing $n(v_{j})$.

\begin{algorithm}[]
\fontsize{9}{10}\selectfont
\caption{Constructing Scalar Tree}
\begin{algorithmic}[1]
\REQUIRE A scalar graph G(V, E).
\ENSURE The scalar tree T of G.
\STATE Sort vertices in decreasing order of scalar values, the sorted vertices are $v_{1}, v_{2}, ... v_{n}$;
\STATE Create a tree node $n(v_{i})$ for each vertex $v_{i}$;
\FOR{$i=1 $ to $n$}
\FOR { every neighbor $v_{j}$ of $v_{i}$}
\IF {$j < i$ and currently $n(v_{i})$ and $n(v_{j})$ are not in the same subtree}
\STATE Connect $n(v_{i})$ to $root(n(v_{j}))$; //$n(v_{i})$ is parent
\ENDIF
\ENDFOR
\ENDFOR
\end{algorithmic}
\label{ScalarTree}
\end{algorithm}

The running time of line 1 is $O(|V|*log|V|)$.
An efficient implementation of line 5 uses the ``Union Find'' data structure, which compares $root(n(v_{i}))$ and $root(n(v_{j}))$.
The amortized time for ``Union Find'' (line 5) is $O(\alpha(n))$ per operation, where $\alpha(n)$ is  inverse of Ackermann function,
and is usually a small constant. So the total worst-case time cost of Algorithm~\ref{ScalarTree} is $O(|E|*\alpha(n) + |V|*log|V|)$

In the scalar tree T generated by Algorithm~\ref{ScalarTree},
every node's scalar value is greater than or equal to its parent's scalar value.
If we layout a scalar tree T in such a way that the height of each node is the scalar value of the node,
then we could get all the maximal $\alpha$-connected components for a particular $\alpha$ in a simple way:
draw a line with $height = \alpha$ to cross T,
and each of the subtrees above the line corresponds to a maximal $\alpha$-connected component.
For example, as Figure~\ref{example_tree} shows,
the two subtrees above red line $height = 2.5$ correspond to all the maximal $2.5$-connected components.
When every vertex in the input scalar graph G has a distinct scalar value,
the scalar tree T generated by Algorithm~\ref{ScalarTree} has the following property:
\begin{proposition}
For every vertex v in G, assume it corresponds to n(v) in T,
then the subtree rooted at n(v) (denoted as $ST$) in T corresponds to the MCC(v).
\label{tree_mcc}
\end{proposition}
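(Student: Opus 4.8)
The plan is to identify the subtree rooted at $n(v)$ with the unique connected piece of $T$ lying above the horizontal line at height $\alpha := v.scalar$ in the layout described above, and then invoke the already-established layout fact that every such above-the-line subtree corresponds to a maximal $\alpha$-connected component. Since $n(v)$ corresponds to $v$, the above-the-line subtree containing $n(v)$ must correspond to the maximal $\alpha$-connected component containing $v$, which is exactly $MCC(v)$ by definition. So it suffices to prove that this above-the-line subtree coincides with the subtree $ST$ rooted at $n(v)$, and I would argue this by two-sided containment.

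First I would establish the inclusion $ST \subseteq S$, where $S$ denotes the connected above-the-line subtree containing $n(v)$. Every node of $ST$ is a descendant of $n(v)$, and by the monotonicity property of $T$ (every node's scalar value is at least that of its parent) the scalar value only increases as one descends from $n(v)$; hence every node of $ST$ has scalar value $\geq \alpha$ and lies on or above the line. As $ST$ is connected and contains $n(v)$, it is contained in the maximal connected above-the-line region $S$.

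For the reverse inclusion $S \subseteq ST$, I would use the distinctness assumption together with Theorem~\ref{theorem_sg1}. If $n(v)$ is the root of $T$ the claim is immediate, since then $ST = T$. Otherwise $n(v)$ has a parent $parent(n(v))$, whose scalar value is $\leq \alpha$ by monotonicity and therefore $< \alpha$ strictly, because all scalar values are distinct. Thus $parent(n(v))$ lies strictly below the line and cannot belong to $S$. Since the only tree edge leaving $ST$ connects $n(v)$ to $parent(n(v))$, the connected region $S$ cannot escape $ST$ through the parent, so $S \subseteq ST$. Combining the two inclusions gives $S = ST$, and since $S$ corresponds to $MCC(v)$ the claim follows.

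The main obstacle is the reverse inclusion, and specifically the role of the distinctness hypothesis: it is exactly what guarantees that $parent(n(v))$ sits \emph{strictly} below the line, preventing $S$ from merging with sibling subtrees across the parent. Here Theorem~\ref{theorem_sg1} supplies the intuition that $v$ is the minimum-scalar vertex of $MCC(v)$, and distinctness upgrades this to a strict minimum. Without distinct scalar values the parent could land on the line, the above-the-line region through $n(v)$ could be strictly larger than $ST$, and the stated one-to-one correspondence with $MCC(v)$ could fail, which is why the proposition is stated under the distinct-value assumption.
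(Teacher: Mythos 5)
Your two tree-side inclusions are correct as far as they go: monotonicity of the tree produced by Algorithm~\ref{ScalarTree} gives $ST \subseteq S$, and distinctness of scalar values forces $parent(n(v))$ strictly below height $\alpha = v.scalar$, giving $S \subseteq ST$. The gap is the step where you invoke ``the already-established layout fact'' that every above-the-line subtree corresponds to a maximal $\alpha$-connected component. That fact is not established anywhere before Proposition~\ref{tree_mcc}; the paper states it only as an informal usage note illustrated on Figure~\ref{example_tree}, and its justification for the tree produced by Algorithm~\ref{ScalarTree} is precisely what Proposition~\ref{tree_mcc} is later used to supply --- the paper's very next step derives the four scalar-tree properties from Theorems~\ref{theorem_sg1}, \ref{theorem_sg3} and Proposition~\ref{tree_mcc}, and only those properties legitimize the line-cutting procedure. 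So your argument is circular: it assumes the substance of what is to be proved, and the part you do prove (that $ST$ equals the above-the-line component $S$ containing $n(v)$) is the easy, purely tree-internal half that says nothing yet about the graph $G$.

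The missing content is the graph-side correspondence, and this is where the paper's proof does its work by arguing directly about Algorithm~\ref{ScalarTree}. One must show: (i) the vertices corresponding to the nodes of $ST$ form a connected subgraph $SG$ of $G$ containing $v$; (ii) every vertex of $SG$ has scalar value at least $v.scalar$ (your monotonicity observation yields this); and (iii) maximality: if some vertex $v_i \notin SG$ were adjacent in $G$ to some $v_j \in SG$ with $v_i.scalar > v.scalar$, then both $v_i$ and $v_j$ are processed by the algorithm no later than $v$, so the union step on the edge between them would already have placed $n(v_i)$ and $n(v_j)$ in the same subtree by the time $v$ is processed, forcing $n(v_i)$ into $ST$ and contradicting $v_i \notin SG$. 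Claims (i) and (iii) are statements about how the forest connectivity maintained by the algorithm mirrors graph connectivity; they cannot be extracted from height-and-line reasoning inside the tree, which is all your proof contains. Your diagnosis of where distinctness matters (the parent landing strictly below the line) is a correct account of why Property 2 fails under ties, but without (i) and (iii) the identification of $S$ with $MCC(v)$ is unsupported.
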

\begin{proof}
Obviously $ST$ corresponds to a connected component $SG$.
Since $n(v).scalar$ is the minimum scalar value in $ST$,
every vertex in $SG$ has scalar value greater than or equal to $v.scalar$.
If there is a vertex $v_{i}$ that connects to a vertex $v_{j}$ in $SG$, and $v_{i}.scalar > v.scalar$,
then due to Algorithm~\ref{ScalarTree},
$n(v_{i})$ and $n(v_{j})$ will be in the same subtree $ST$,
which indicates that $v_{i}$ is in $SG$.
So $SG$ is a maximal $v.scalar$-connected component, and it is MCC(v).
\end{proof}

Based on Theorem \ref{theorem_sg1}, Theorem \ref{theorem_sg3} and Proposition \ref{tree_mcc},
it follows that when every vertex in the input scalar graph has a {\it distinct value},
the tree generated by Algorithm~\ref{ScalarTree} has the four properties of the scalar tree.
When some vertices in scalar graph have the same scalar value,
Algorithm~\ref{ScalarTree} may not guarantee Property 2,
and we need to do some additional postprocessing, described next.

\smallskip
\noindent {\bf Postprocessing the Vertex Scalar Tree:}
If some vertices in G have the same scalar values,
the scalar tree generated by Algorithm~\ref{ScalarTree} may not satisfy Property 2
 -- a subtree may not correspond to a maximal $\alpha$-connected component.
For example, Figure~\ref{example_super_graph} is a scalar graph,
and Figure~\ref{example_super_tree} is the scalar tree generated by Algorithm~\ref{ScalarTree},
in which $n_{i}$ corresponds to $v_{i}$ in the scalar graph.
The subtree rooted at $n_{3}$ is $ST(n_{1}, n_{3})$ corresponds to connected component $C(v_{1}, v_{3})$,
however, $C(v_{1}, v_{3})$ is not a maximal $\alpha$-connected component.

\begin{figure}[h]
\centering
\subfigure[Scalar Graph G]
{
\includegraphics[width=0.13\textwidth]{./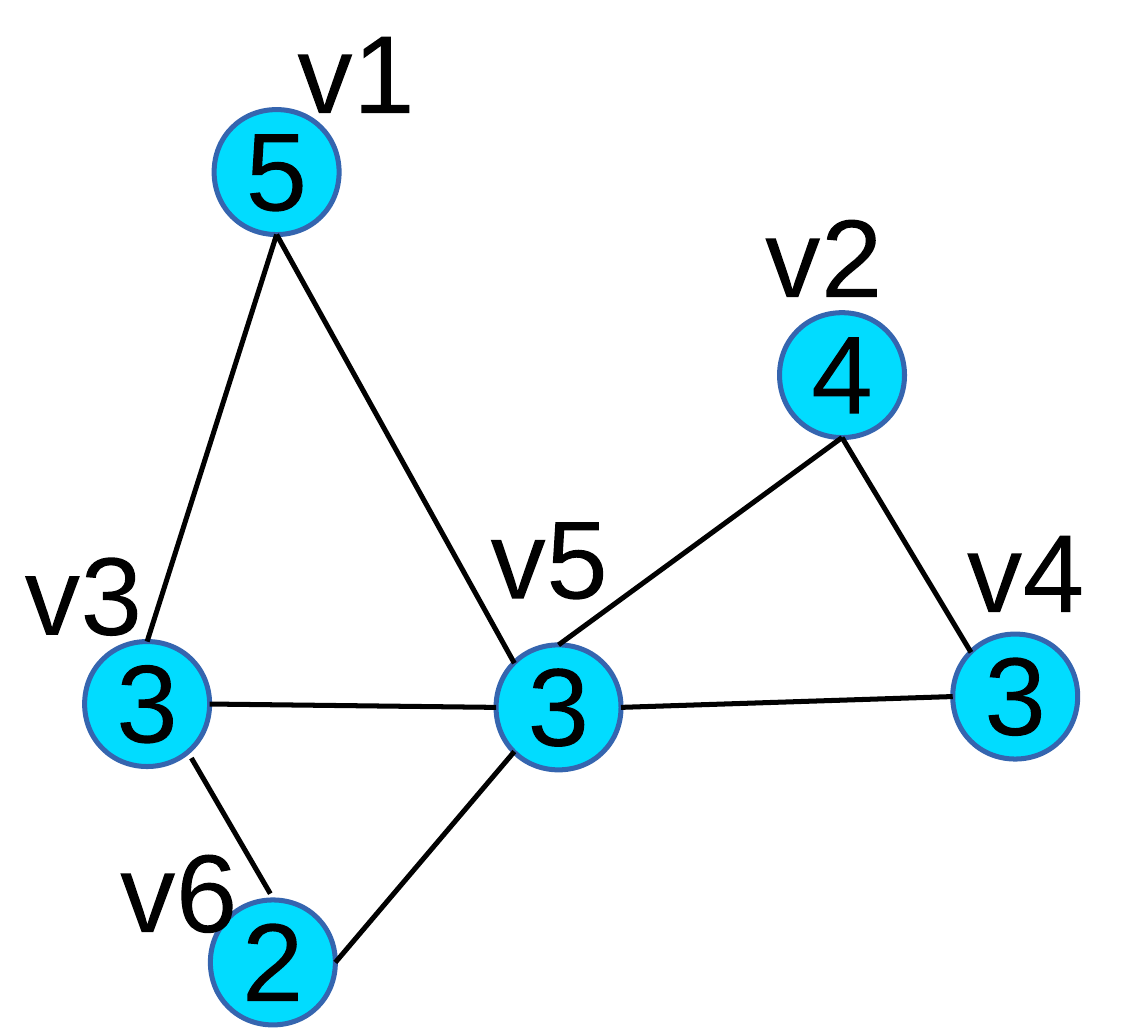}
\label{example_super_graph}
}
 \quad
\subfigure[Scalar Tree T]
{
\includegraphics[width=0.11\textwidth]{./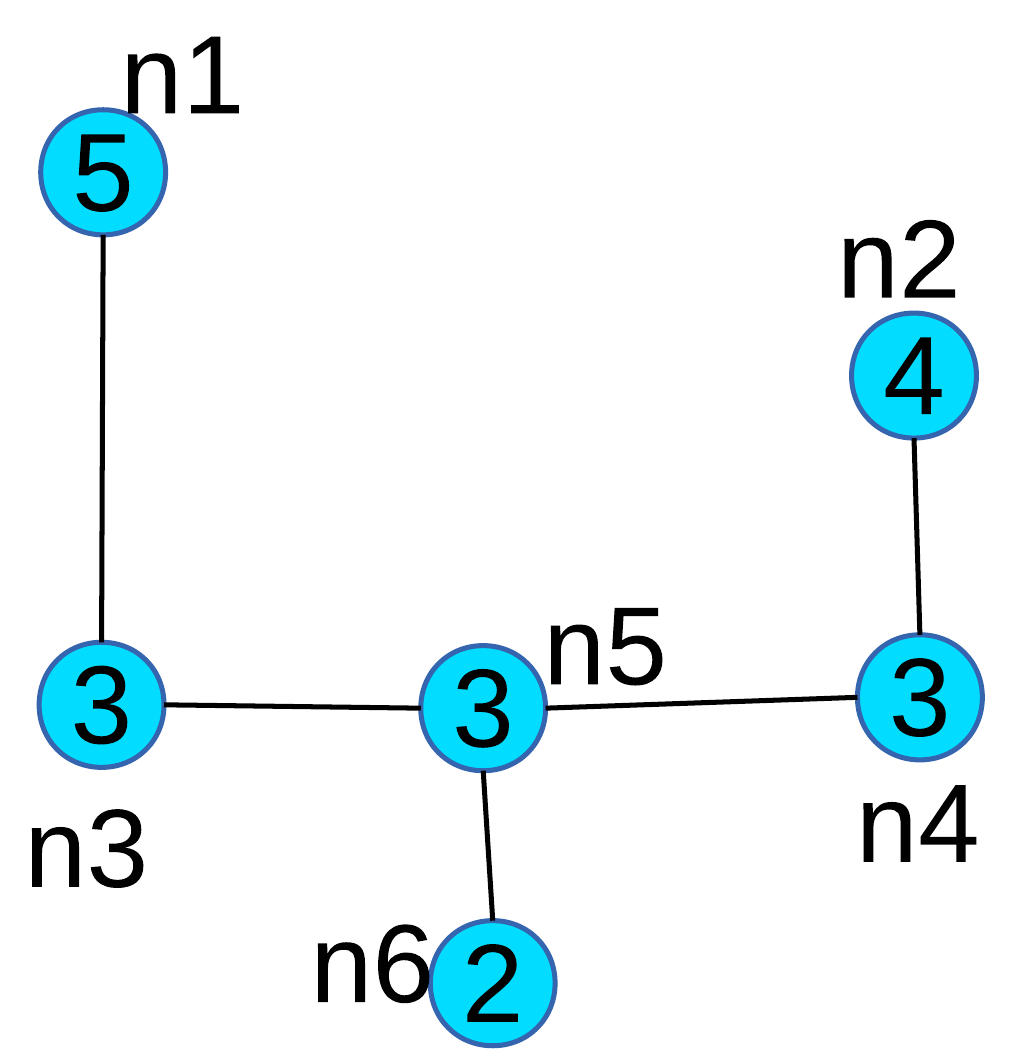}
\label{example_super_tree}
}
 \quad
\subfigure[Super Scalar Tree]
{
\includegraphics[width=0.13\textwidth]{./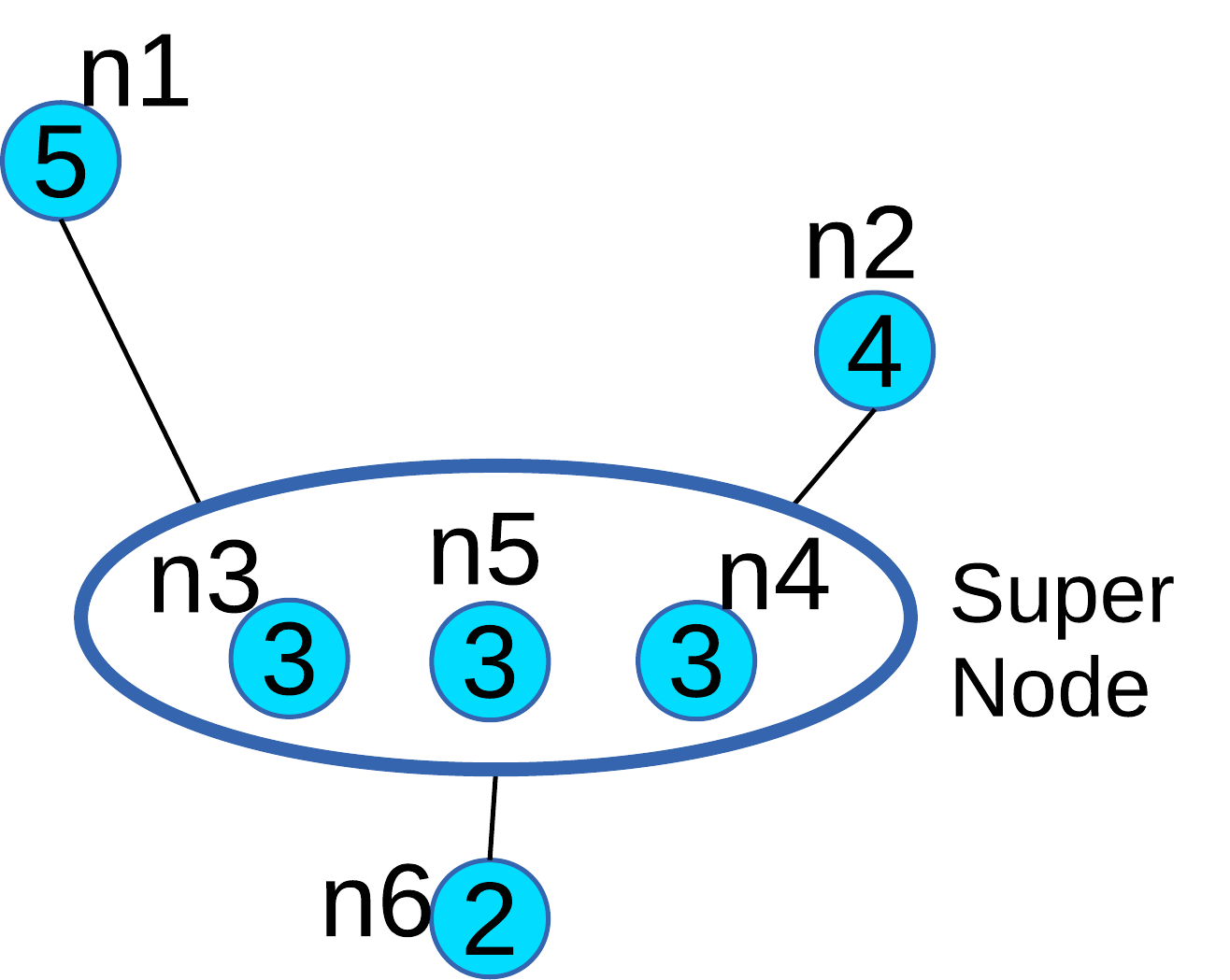}
\label{example_super_tree_merge}
}
\vspace{-0.10in}
\caption[Optional caption for list of figures]{Postprocessing Scalar Tree}
\label{SuperScalarExample}
\end{figure}

To solve the problem, we use Algorithm~\ref{PostScalarTree} to postprocess the scalar tree T generated by Algorithm~\ref{ScalarTree}:
we merge the ancestor node with all its descendants with the same scalar value into a super node.
The correctness of the algorithm is based on the following proposition:

\begin{proposition}
For any tree node $n$ in T, assume it has an ancestor $n_{anc}$ in T, such that
$n_{anc}.scalar = n.scalar$, and $parent(n_{anc})$ is null or $parent(n_{anc}).scalar < n_{anc}.scalar$,
then the subtree rooted at $n_{anc}$ (denoted as $ST_{anc}$) corresponds to the MCC(v(n)).
\label{anc_mcc}
\end{proposition}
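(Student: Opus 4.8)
The plan is to generalize the argument behind Proposition~\ref{tree_mcc} to repeated scalar values, and the whole proof rests on one invariant about Algorithm~\ref{ScalarTree}. Writing $v_1, v_2, \ldots, v_n$ for the vertices in the decreasing-scalar order used by the algorithm, I would first prove by induction on $m$ that, immediately after $v_m$ has been processed, the connected components of the partial forest are in bijection (with matching vertex sets) with the connected components of the induced subgraph $G[\{v_1,\ldots,v_m\}]$. The inductive step is the standard incremental connected-components argument: creating $n(v_m)$ adds the singleton $\{v_m\}$, and the union step (lines 5--6) merges $v_m$ with the component of each already-processed neighbor, which is exactly how the components of $G[\{v_1,\ldots,v_m\}]$ arise from those of $G[\{v_1,\ldots,v_{m-1}\}]$. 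Alongside this I would record two structural facts: every node's scalar is at least its parent's, so the root of any subtree carries the minimum scalar in that subtree; and once $v_m$ is processed the subtree rooted at $n(v_m)$ never again acquires a descendant, because every later step only attaches the currently-processed node as a new \emph{parent} above an existing root. Hence the final subtree rooted at $n(v_m)$ equals the one present the instant $v_m$ was processed.

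Next I would restate the target object graph-theoretically. Setting $\alpha = v(n).scalar$, the maximal $\alpha$-connected component $M := MCC(v(n))$ is exactly the connected component of $v(n)$ in $G[V_{\geq\alpha}]$, where $V_{\geq\alpha}=\{u : u.scalar \geq \alpha\}$; this is immediate from conditions (1)--(3) in the definition of a maximal $\alpha$-connected component. Let $w$ be the vertex of $M$ processed last, i.e. of largest index. Since the minimum scalar over $M$ is $\alpha$ and larger scalars are processed earlier, $w.scalar = \alpha$ and every vertex of $M$ has index at most that of $w$. Applying the invariant at the step where $w$ is processed, $M$ is already a full component of $G[\{v_1,\ldots,v_{\mathrm{index}(w)}\}]$ (every neighbor of $M$ outside $M$ has scalar $<\alpha$ and so has not yet been processed), and $n(w)$ is its root. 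By the ``no new descendants'' fact this subtree is final, so the subtree rooted at $n(w)$ corresponds to exactly $M$; moreover $parent(n(w))$ is either null or was created later by a vertex whose scalar is below $\alpha$, so $parent(n(w))$ is null or has scalar strictly below $\alpha$.

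It remains to identify the hypothesized $n_{anc}$ with $n(w)$. Because $v(n) \in M$, the node $n$ lies in the subtree rooted at $n(w)$, so $n(w)$ is an ancestor of $n$; and along the root-path from $n$ the scalar values are non-increasing, so the nodes of scalar $\alpha$ on that path form a contiguous top-segment whose topmost node is the \emph{unique} one whose parent is null or has scalar $<\alpha$. Both $n(w)$ and the given $n_{anc}$ match precisely this description, forcing $n_{anc}=n(w)$ and hence $ST_{anc} = MCC(v(n))$. I expect the main obstacle to be the tie-handling heart of the argument, namely showing that $n(w)$ (and not, say, the subtree rooted at $n$ itself) captures \emph{all} of $M$: when several vertices share the value $\alpha$, an arbitrarily tie-broken order can scatter the $\alpha$-level across the forest, and the proof must invoke the invariant at exactly the moment the last scalar-$\alpha$ vertex of $M$ is processed, together with the outward-maximality of $M$, to guarantee that no vertex of $M$ is stranded in a separate subtree. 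This is precisely the phenomenon that the postprocessing of Algorithm~\ref{PostScalarTree} is designed to repair.
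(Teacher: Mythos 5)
Your proof is correct, but it takes a genuinely different route from the paper's. The paper's own argument (given only in a comment block, and very terse) is modular and two-step: first it asserts that the subtree rooted at $n_{anc}$ is $MCC(v(n_{anc}))$ -- essentially re-running the argument of Proposition~\ref{tree_mcc} at $n_{anc}$, which is legitimate there because the hypothesis $parent(n_{anc}).scalar < n_{anc}.scalar$ restores exactly the strictness that ties destroy; second, since $ST_{anc}$ contains $n$ and $n_{anc}.scalar = n.scalar$, it invokes Theorem~\ref{theorem_sg2} to conclude $MCC(v(n_{anc})) = MCC(v(n))$. You instead work directly from the execution of Algorithm~\ref{ScalarTree}: a union-find invariant (forest components match components of the induced subgraph on processed vertices), the observation that a subtree is frozen once its root is processed, the choice of the last-processed vertex $w$ of $M$, and a uniqueness argument identifying $n_{anc}$ with $n(w)$ via monotonicity of scalars along root paths. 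What your approach buys is rigor precisely where the paper is weakest: Proposition~\ref{tree_mcc} is stated only for distinct scalar values, so the paper's ``obviously'' quietly extends it to the tie case, whereas your invariant-plus-last-vertex argument handles ties head-on and is fully self-contained (never needing Theorem~\ref{theorem_sg2} or Proposition~\ref{tree_mcc}). What the paper's approach buys is brevity and reuse: Theorem~\ref{theorem_sg2} does all the work of transferring the conclusion from $n_{anc}$ to $n$, collapsing your entire third paragraph (the $n_{anc}=n(w)$ identification) into one line. Your closing worry about tie-handling is exactly the right instinct -- it is the point your invariant resolves and the paper elides.
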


In the example in Figure~\ref{example_super_tree}, Algorithm~\ref{PostScalarTree} will merge the nodes $n_{3}, n_{4}, n_{5}$ into a super node,
and build a super tree (Figure~\ref{example_super_tree_merge}).
Every subtree of the super tree will correspond to a maximal $\alpha$-connected component.
After postprocessing, the scalar tree will still satisfy Property 2, 3, 4, but may not satisfy Property 1,
because a super node may correspond to multiple vertices, however, this does not affect the further analysis.

Algorithm~\ref{PostScalarTree} only needs one pass of the scalar tree T, so the time complexity is $O(|V|)$,
 where $|V|$ is the number of vertices in the original scalar graph.

\begin{algorithm}[]
\fontsize{9}{10}\selectfont
\caption{Postprocess Scalar Tree}
\begin{algorithmic}[1]
\REQUIRE The scalar tree $T$ generated by Algorithm 1.
\ENSURE The super scalar tree $T_{super}$.
\STATE Create an array $ancestors$;
\STATE Assume the root of $T$ is $n_{r}$, $n_{r}.super\_node =$ new $Node()$;
\STATE $ancestors.add(n_{r})$;
\STATE $T_{super}.add(n_{r}.super\_node$);
\FOR{each node $n_{anc}$ in $ancestors$}
\STATE Create a queue Q;
\STATE Q.push($n_{anc}$);
\WHILE {!Q.empty()}
\STATE $n_{q} = Q.pop()$;
\STATE $n_{anc}.super\_node.members.add(n_{q})$;
\FOR {every child $n_{c}$ of $n_{q}$}
\IF {$n_{c}.scalar = n_{q}.scalar$}
\STATE Q.push($n_{c}$);
\ELSE
\STATE $n_{c}.super\_node =$ new $Node()$;
\STATE $ancestors.add(n_{c})$;
\STATE $T_{super}.add(n_{c}.super\_node)$;
\STATE Connect $n_{anc}.super\_node$ with $n_{c}.super\_node$;
\ENDIF
\ENDFOR
\ENDWHILE
\ENDFOR

\end{algorithmic}
\label{PostScalarTree}
\end{algorithm}

\subsection{Edge-based Scalar Graph}
 Here we describe a {\it novel} approach for modeling scalar values on edges.
We define an edge-based scalar graph $G(V, E)$ as a graph comprising edges $E$ and vertices $V$,
where each edge \emph{e} has one scalar value associated with it, denoted as \emph{e.scalar}.
Similarly, the \textbf{maximal $\alpha$-edge connected component} is defined below:
\begin{definition}
A connected component $C(V_{C}, E_{C})$ of scalar graph $G(V,E)$ is a \textbf{maximal $\alpha$-edge connected component} if it satisfies following conditions: \\
(1) for every edge $e \in E_{C}$, $e.scalar >= \alpha$.\\
(2) for any edge $e \in E_{C}$, if edge $e'$ shares a common vertex with $e$ and $e' \not\in E_{C}$, then $e'.scalar < \alpha$.\\
(3) for any edge $e(v_{1}, v_{2}) \in E_{C}$, we have $v_{1} \in V_{C}$ and $v_{2} \in V_{C}$.
\end{definition}

The edge scalar tree T of edge scalar graph G has the following properties:
\begin{enumerate}
  \item Every node in T corresponds to an edge in G with the same scalar value, and vice versa (i.e. one-to-one correspondence).
  \item Every maximal $\alpha$-edge connected component in G corresponds to a subtree in T, and vice versa (i.e. one-to-one correspondence).
  \item Assume a maximal $\alpha_{1}$-edge connected component $C_{1}$ corresponds to subtree $T_{1}$ in T, and another maximal $\alpha_{2}$-edge connected component $C_{2}$ corresponds to subtree $T_{2}$ in T, then $C_{1}$ is a subgraph of $C_{2}$ if and only if $T_{1}$ is subtree of $T_{2}$.
  \item Assume a maximal $\alpha_{1}$-edge connected component $C_{1}$ corresponds to subtree $T_{1}$ in T, and another maximal $\alpha_{2}$-edge connected component $C_{2}$ corresponds to subtree $T_{2}$ in T, then $C_{1}$ and $C_{2}$ are not connected if and only if $T_{1}$ and $T_{2}$ are not connected.
\end{enumerate}

\smallskip
\noindent \textbf{\emph{Naive Method:}} a naive way to build edge scalar tree is to
first convert the edge-based scalar graph $G(V, E)$ to a dual graph $G_{d}(V_{d}, E_{d})$--
where every edge in $G$ is converted to be a vertex in $G_{d}$,
and if two edges in $G$ share a common vertex, their correspondent vertices in $G_{d}$ are connected.
We then apply Algorithm 1 to $G_{d}$ -- the generated tree is an edge scalar tree of $G$.
The time complexity of the method is $O(|E_{d}|*log|V_{d}| + |V_{d}|*log|V_{d}|)$.
In the dual graph $G_{d}$, we have $|V_{d}| = |E|$ and $|E_{d}| = O(\sum\limits_{v \in V} degree(v)^{2})$,
so the time complexity is actually $O(\sum\limits_{v \in V} degree(v)^{2}*log|E| + |E|*log|E|)$.
The time cost is high because the bottleneck $\sum\limits_{v \in V} degree(v)^{2}$ could be $|V|^{3}$ in the worst case.

\smallskip
\noindent \textbf{\emph{Optimized Method:}} We propose a novel, more efficient
method (Algorithm~\ref{EdgeScalarTree}) to construct edge scalar tree from the edge scalar graph, and the time complexity is reduced to be $O(|E|*log|E|)$.
In line 1, we sorted all the edges in descending order of scalar value.
In line 2-3, we select the $min\_id\_edge$  on vertex v that has the minimum index.
In line 6-9, we process edge $e_{i}$.
Instead of checking all $e_{i}$'s neighbor edges (edges which share a common vertex with $e_{i}$) ,
we just need to check the $min\_id\_edge$s of $e_{i}$'s two vertices (line 6-7).
This is based on Proposition \ref{min_id_edge}:

\begin{proposition}
If edge $e_{j}$ is a neighbor edge of $e_{i}$ ($i > j$), and they share the same vertex $v$,
when processing $e_{i}$ in line 6-9 of Algorithm~\ref{EdgeScalarTree}, root($n(e_{j})$) is the same as root($n(v.min\_id\_edge)$).
\label{min_id_edge}
\end{proposition}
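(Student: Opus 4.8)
The plan is to track, for the endpoint $v$ shared by $e_i$ and $e_j$, the single moment at which $n(e_j)$ first becomes attached to the subtree of $v.min\_id\_edge$, and then to invoke the fact that the ``Union Find'' structure only ever merges subtrees. Write $e_k = v.min\_id\_edge$. Since edges are sorted in descending order of scalar value (line 1), a smaller index means an earlier processing step, and by definition $e_k$ has the minimum index among all edges incident to $v$; hence $k \le j$, and combined with the hypothesis $i > j$ this gives $k \le j < i$. If $e_j = e_k$ the claim is immediate, so I would reduce to the case $k < j$, in which $e_k$ strictly precedes $e_j$ in processing order.

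First I would examine the step at which Algorithm~\ref{EdgeScalarTree} processes $e_j$. Since $v$ is one of the two endpoints of $e_j$, lines 6--7 inspect $v.min\_id\_edge = e_k$, and because $k < j$ the node $n(e_k)$ has already been created and placed in the forest. Therefore the algorithm either connects $n(e_j)$ to $root(n(e_k))$ or finds the two nodes already lying in a common subtree; in either case, immediately after $e_j$ is processed, $n(e_j)$ and $n(e_k)$ share the same ``Union Find'' root.

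Next I would use the monotonicity of the ``Union Find'' data structure: once two nodes lie in the same subtree they are never separated, since the algorithm only performs merges. Because $j < i$, the equality $root(n(e_j)) = root(n(e_k))$ established at step $j$ still holds when $e_i$ is processed at step $i$, which is precisely the assertion $root(n(e_j)) = root(n(v.min\_id\_edge))$. The main subtlety to get right is the ordering bookkeeping: verifying that $e_k = v.min\_id\_edge$ is genuinely processed no later than $e_j$ (so that it is available to be inspected when $e_j$ is handled) and that the shared endpoint $v$ guarantees $e_k$ is among the at most two candidate edges the optimized loop examines. Once these ordering facts are pinned down, the conclusion follows from the irreversibility of merges, and no case analysis on the shape of the tree is required.
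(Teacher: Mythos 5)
Your proof is correct and follows essentially the same route as the paper's: observe that $e_j$ is processed before $e_i$, that while processing $e_j$ the algorithm inspects $v.min\_id\_edge$ and leaves $n(e_j)$ and $n(v.min\_id\_edge)$ in a common subtree, and that merges are never undone, so the shared root persists at step $i$. Your additional bookkeeping (checking $k \le j < i$, the $e_j = e_k$ case, and the possibility that the two nodes are already merged via the other endpoint) only makes explicit what the paper's shorter argument glosses over, and is all sound.
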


\begin{proof}
Since $i > j$, $e_{j}$ will be processed before $e_{i}$.
When processing $e_{j}$ in line 6-9 of Algorithm~\ref{EdgeScalarTree},
$n(e_{j})$ will be connected to root($n(v.min\_id\_edge)$),
which means $n(e_{j})$ and $n(v.min\_id\_edge)$ will be in the same subtree thereafter.
So when processing $e_{i}$, root($n(e_{j})$) is the same as root($n(v.min\_id\_edge)$).
\end{proof}

\begin{algorithm}[]
\fontsize{9}{10}\selectfont
\caption{Constructing Edge Scalar Tree}
\begin{algorithmic}[1]
\REQUIRE An edge scalar graph G(V, E).
\ENSURE The edge scalar tree T of G.
\STATE Sort edges in decreasing order of scalar values, the sorted edges are $e_{1}, e_{2}, ... e_{n}$.
\FOR{each vertex v in G}
\STATE set $v.min\_id\_edge$ to be the edge on v that has the minimum index.
\ENDFOR
\STATE Create a tree node $n(e)$ for each edge $e$.
\FOR{$i=1 $ to $n$}
\STATE Assume $e_{i}$ has two vertices $v_{1}$ and $v_{2}$, create an array: \\$min\_neighbors[2]=\{v_{1}.min\_id\_edge, v_{2}.min\_id\_edge\}$;
\FOR{each edge $e_{m}$ in $min\_neighbors$}
\IF {$m < i$ and currently $n(e_{i})$ and $n(e_{m})$ are not in the same subtree}
\STATE connect $n(e_{i})$ to root($n(e_{m})$) //$n(e_{i})$ is parent
\ENDIF
\ENDFOR
\ENDFOR
\end{algorithmic}
\label{EdgeScalarTree}
\end{algorithm}

The time complexity of line 1 in Algorithm~\ref{EdgeScalarTree} is $O(|E|*log|E|)$.
For each edge e, line 8 is executed $O(1)$ times,
so line 8 is executed a total of $O(|E|)$ times, and the total running time of line 5-9 is $O(|E|*log|E|$.
The worst case running time of Algorithm~\ref{EdgeScalarTree} is $O(|E|*log|E|)$.
Similar to the case described in the previous section,
if some edges have the same scalar value, we can
use Algorithm~\ref{PostScalarTree} to postprocess the edge scalar tree.

\subsection{Relationship between maximal $\alpha$-(edge) connected component and Dense Subgraph}
\label{dense_def}
A dense subgraph is a connected subgraph in which every vertex is heavily connected to other vertices in the subgraph.
K-Core~\cite{Vlad:kcore} and K-Truss~\cite{james:truss, ktruss_cohen}(also called Triangle K-Core in~\cite{tkcore12}, DN-graph in~\cite{wang:dngraph} ) are two common dense subgraph patterns that draw much attention in recent works.
The definitions of K-Core and K-Truss are as follows:
\begin{definition}
A K-Core is a subgraph in which each vertex participates in at least K edges within the subgraph.
The maximal K-Core of a vertex v is the K-Core containing v that has the maximum K value,
and the K value of maximal K-Core of v is denoted as KC(v).
\end{definition}

\begin{definition}
A K-Truss is a subgraph in which each edge participates in at least K triangles within the subgraph.
The maximal K-Truss of an edge e is the K-Truss containing e that has the maximum K value,
and the K value of maximal K-Truss of e is denoted as KT(e).
\end{definition}
Now we prove the relationship between maximal $\alpha$-(edge) connected component and K-Core, K-Truss below.
\begin{proposition}
\label{relation_kcore}
In a scalar graph G where for any vertex \emph{v}, \emph{v.scalar = KC(v)},
a maximal $\alpha$-connected component in G is a K-Core where $K = \alpha$.
\end{proposition}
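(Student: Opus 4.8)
The plan is to prove this directly by showing that every vertex of a maximal $\alpha$-connected component $C(V_C, E_C)$ has at least $\alpha$ neighbors \emph{inside} $C$, which is exactly the defining property of a K-Core with $K = \alpha$ (each vertex participating in at least $\alpha$ edges within the subgraph). The engine driving the argument is the monotonicity of core numbers: any vertex lying in a subgraph of minimum degree at least $\alpha$ necessarily has its own core number at least $\alpha$. Combined with conditions (2) and (3) of the definition, this monotonicity is what lets me keep the degree-witnessing neighbors inside $C$.

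First I would fix an arbitrary vertex $v \in V_C$. Condition (1) together with the hypothesis $v.scalar = KC(v)$ gives $KC(v) \geq \alpha$. Hence the maximal K-Core of $v$, call it $S$, is a subgraph containing $v$ of minimum degree $KC(v) \geq \alpha$; in particular $v$ has at least $\alpha$ neighbors $u_1, \dots, u_d$ with $d \geq \alpha$ among the vertices of $S$.

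Next I would argue that each witnessing neighbor $u_i$ must in fact lie in $V_C$. Since $S$ is a K-Core with $K = KC(v) \geq \alpha$ and contains $u_i$, the definition of $KC$ forces $KC(u_i) \geq \alpha$, so $u_i.scalar \geq \alpha$. But $u_i$ is adjacent to $v \in V_C$, and the contrapositive of condition (2) states that any vertex adjacent to a vertex of $C$ yet lying outside $V_C$ has scalar strictly below $\alpha$; this rules out $u_i \notin V_C$. Thus $u_i \in V_C$, and because the edge $(v, u_i)$ has both endpoints in $V_C$, condition (3) places it in $E_C$. Consequently $v$ retains all $d \geq \alpha$ of these neighbors inside $C$.

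Since $v \in V_C$ was arbitrary, every vertex of $C$ has degree at least $\alpha$ within $C$, so $C$ qualifies as a K-Core with $K = \alpha$. The one delicate point, and the step I would be most careful about, is the claim that the neighbors witnessing $v$'s degree in $S$ themselves carry scalar value at least $\alpha$: this is precisely where core-number monotonicity enters, and it is what allows condition (2) to pull those neighbors, and via condition (3) their incident edges, back into $C$ rather than leaking degree to vertices outside the component.
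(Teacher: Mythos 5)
Your proof is correct, and it shares the paper's central device---using the maximal K-Core of a vertex as a degree witness and condition (2) to keep those witnesses inside $C$---but the decomposition is genuinely different and, in one respect, tighter. The paper first singles out the vertex $v$ of \emph{minimum} scalar value in $C$, argues that the \emph{entire} maximal K-Core of $v$ is a subgraph of $C$, and then treats every other vertex of $C$ ``similarly,'' concluding that $C$ is a K-Core with $K = KC(v) \geq \alpha$. That containment step is stronger than what the statement needs and, strictly speaking, requires an induction along paths inside the (connected) maximal K-Core, invoking condition (2) repeatedly to absorb one vertex at a time; the paper leaves this implicit. Your argument instead works vertex by vertex with no distinguished minimum: for an arbitrary $v \in V_C$ you pull only the immediate neighbors of $v$ in its maximal K-Core back into $C$, using core-number monotonicity (any $u$ lying in a subgraph of minimum degree at least $KC(v)$ has $KC(u) \geq KC(v) \geq \alpha$) followed by a single, direct application of condition (2) and then condition (3). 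This sidesteps the global containment claim entirely and is the more elementary route; what it gives up is the marginally stronger conclusion the paper obtains along the way, namely that $C$ is in fact a K-Core for $K = \min_{u \in V_C} KC(u)$, not merely for $K = \alpha$.
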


\begin{proof}
Assume in a maximal $\alpha$-connected component $C$, vertex v has the minimum scalar value.
Based on the definition of K-Core, for every vertex \emph{v'} in the maximal K-Core of \emph{v}, $KC(v') >= KC(v)$,
so the maximal K-Core of \emph{v} is a subgraph of \emph{C}.
Since \emph{v} is connected to at least $KC(v)$ vertices in its maximal K-Core,
\emph{v} is connected to at least $KC(v)$ vertices in \emph{C}.
For every other vertex \emph{v'} in \emph{C},
similarly we can get that \emph{v'} is also connected to at least $KC(v)$ vertices in \emph{C}.
So \emph{C} is a K-Core where $K = KC(v)$,
since $KC(v) = v.scalar >= \alpha$,
C is also a K-Core where $K = \alpha$.
\end{proof}

\begin{proposition}
In an edge scalar graph G where for any edge \emph{e}, $e.scalar = KT(e)$,
a maximal $\alpha$-edge connected component in G is a K-Truss where $K = \alpha$.
\end{proposition}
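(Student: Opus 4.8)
The plan is to mirror the argument used for the K-Core case (Proposition~\ref{relation_kcore}), replacing the roles of vertices and incident edges by edges and incident triangles throughout. Let $C(V_{C}, E_{C})$ be a maximal $\alpha$-edge connected component of $G$. The goal is to show that every edge of $C$ participates in at least $\alpha$ triangles lying entirely inside $C$, which is precisely the defining property of a K-Truss with $K = \alpha$.

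First I would fix an arbitrary edge $e' \in E_{C}$ and consider its maximal K-Truss $T_{e'}$, i.e.\ the $KT(e')$-Truss containing $e'$. Because $C$ is a maximal $\alpha$-edge connected component, condition~(1) of the definition gives $e'.scalar = KT(e') \geq \alpha$, so it suffices to show that the at least $KT(e')$ triangles witnessing $e'$ in $T_{e'}$ actually lie inside $C$.

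The key step is therefore to prove $T_{e'} \subseteq C$, which I would split into two parts. First, every edge $f \in T_{e'}$ satisfies $KT(f) \geq KT(e')$: since $T_{e'}$ is itself a $KT(e')$-Truss containing $f$, the maximal K-Truss of $f$ has $K$ value at least $KT(e')$, hence $f.scalar = KT(f) \geq KT(e') \geq \alpha$. Second, $T_{e'}$ is connected through shared vertices (each of its edges sits in a triangle, so it is edge-adjacent to the rest of $T_{e'}$), and $e' \in E_{C}$; therefore I can repeatedly invoke the contrapositive of condition~(2)—any edge that shares a vertex with an edge already known to lie in $E_{C}$ and has scalar $\geq \alpha$ must itself lie in $E_{C}$—to propagate membership outward from $e'$ and conclude $E_{T_{e'}} \subseteq E_{C}$, with the endpoints landing in $V_{C}$ by condition~(3). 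I expect this containment to be the main obstacle: one has to be careful that the maximal K-Truss is taken as the connected component containing $e'$, so that propagation through shared vertices really reaches every edge of $T_{e'}$, and that condition~(2) is applied along a whole chain of adjacent edges rather than in a single step.

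Finally, once $T_{e'} \subseteq C$ is established, the $\geq KT(e') \geq \alpha$ triangles witnessing $e'$ in $T_{e'}$ have all of their edges and vertices inside $C$, so $e'$ participates in at least $\alpha$ triangles within $C$. Since $e'$ was arbitrary, every edge of $C$ meets this bound, and hence $C$ is a K-Truss with $K = \alpha$. The structural parallel with the K-Core argument makes the remaining bookkeeping routine.
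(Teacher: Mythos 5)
Your proposal is correct and follows essentially the same route as the paper, which proves the K-Core case by showing that the maximal K-Core of any vertex in the component is contained in the component and explicitly notes that the K-Truss case is the same argument with edges/triangles in place of vertices/edges. The only differences are cosmetic: the paper anchors the argument at the minimum-scalar edge before repeating it for the others, whereas you treat an arbitrary edge directly, and you spell out the adjacency-propagation justification for the containment $T_{e'} \subseteq C$ that the paper simply asserts.
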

The proof is similar to the proof of Proposition~\ref{relation_kcore} and is omitted
in the interests of space.
Note that when we define the scalar value of each vertex/edge to be KC(v)/KT(e),
the (edge) scalar tree will capture the distribution and relationships among K-Cores or
K-Trusses in the graph.
\label{relation_dense}

\subsection{Visualization via Terrain Metaphor}
Scalar trees are usually not easy to visually interpret,
especially when the size of the tree is too large.
We adapt the terrain metaphor -- topological landscape
visualization technique defined on scalar-valued functions~\cite{terrain10} --
to visualize scalar graphs.

In Figure~\ref{Scalar3D} we use an example to illustrate
how to convert the scalar tree in Figure~\ref{example_tree2} to terrain visualization in Figure~\ref{example_3dvis}.
In Figure~\ref{example_layout} we first layout all the tree nodes of Figure~\ref{example_tree2} in a 2D plane,
every node $n_{i}$ is represented by a boundary $b_{i}$ in the 2D plane,
and the area enclosed by the boundary $b_{i}$ is proportional to the number of nodes in subtree (not including $n_{i}$) rooted at $n_{i}$.
To generate the 2D layout, we start traversing the tree from the root(bottom) node $n_{9}$,
draw the outermost boundary $b_{9}$ to represent it.
Then we move to $n_{8}$, and draw a boundary $b_{8}$ inside $b_{9}$.
When we reach $n_{7}$,
and draw boundary $b_{7}$, we find there are two subtrees rooted at $n_{7}$,
so we split the area inside $b_{7}$ into 2 areas,
and recursively layout each subtree in each area.
When we reach leaf nodes $n_{1}, n_{2}, n_{4}$, since the size of their subtrees is 0,
their correspondent boundaries degenerate to be points.

To convert the 2D layout (Figure~\ref{example_layout}) into a terrain visualization in 3D space (Figure~\ref{example_3dvis}),
we first escalate each boundary $b_{i}$ in Figure~\ref{example_layout} to the height of $n_{i}.scalar$,
and then draw a ``wall'' between neighboring boundaries.
Finally we generate a terrain in Figure~\ref{example_3dvis}.
We can color the terrain by assigning each vertex a color value, and since each ``wall'' is confined by two boundaries $b_{i}$ and $b_{j}$,
we color the wall based on the color value of the vertex corresponding to $b_{i}$ or $b_{j}$.
\footnote{The use of 3D rather than 2D was a conscious one.
First we found that the 3D abstraction
better matched users' mental-map of the ``terrain'' concept as well as the
hierarchical relationships amongst components-of-interest.
While 3D-layouts potentially
pose a problem with respect to occlusion, the ability to interactively rotate the
point-of-view along with the ability to link 2D-layouts of regions-of-interest as discussed below, alleviates this issue. }

To identify a subtree of the scalar tree in the terrain visualization,
we locate the correspondent boundary $b_{r}$ of the subtree root $n_{r}$,
and the terrain area within the boundary $b_{r}$ corresponds to the subtree.
In our paper,
we define a \textbf{$peak_{\alpha}$} in terrain as below:
\begin{definition}
A $peak_{\alpha}$ is the terrain area within a boundary whose height is $\alpha$.
\end{definition}
Since each $peak_{\alpha}$ corresponds to a subtree in scalar tree,
we can easily get that every $peak_{\alpha}$ corresponds to a maximal $\alpha$-connected component.
Also, $peak_{\alpha}$s preserve the containment/connection relationship of maximal $\alpha$-connected components.
For example, the red peak in Figure~\ref{example_cc2} is a $peak_{5}$,
which corresponds to the maximal 5-connected component (red nodes) in Figure~\ref{example_graph_cc2},
and the red peak in Figure~\ref{example_cc4} is a $peak_{3}$,
which corresponds to the maximal 3-connected component (red nodes) in Figure~\ref{example_graph_cc4}.
One $peak_{\alpha}$ may contain some sub-peaks,
 which indicates its maximal $\alpha$-connected component contains other maximal $\alpha'$-connected components.
For example, $peak_{5}$ in Figure~\ref{example_cc2} is contained in $peak_{3}$ in Figure~\ref{example_cc4},
this indicates that the correspondent maximal 5-connected component in Figure~\ref{example_graph_cc2} is a subgraph
of the maximal 3-connected component in Figure~\ref{example_graph_cc4}.
In a $peak_{\alpha}$, the area of its bottom boundary indicates the number of vertices in its correspondent maximal $\alpha$-connected component.

To get all the maximal $\alpha$-connected components for a particular $\alpha$ value,
we can use a 2D plane with height = $\alpha$ to cross the terrain in 3D space,
and all the $peak_{\alpha}$s above the plane correspond to all maximal $\alpha$-connected components.
\emph{The benefit of using terrain visualization is,
it captures the overall information of all maximal $\alpha$-connected components in one picture.
Also, we could encode more information in the terrain by using colors to the terrain.}


\begin{figure}[!h]
\centering
\subfigure[Scalar Tree]
{
\hspace*{-0.40cm}\includegraphics[width=0.115\textwidth]{./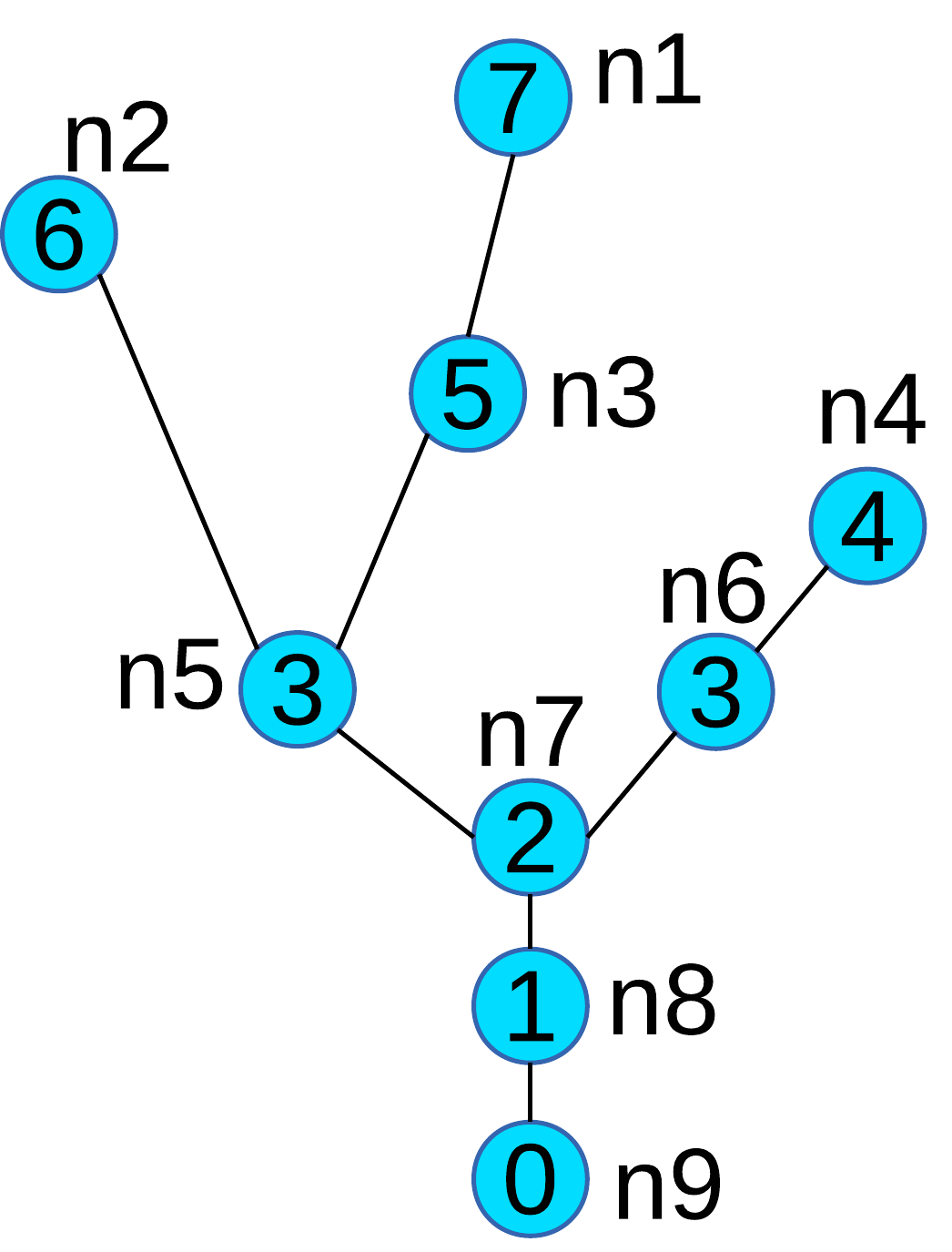}
\label{example_tree2}
}
 \quad
\subfigure[2D Nodes Layout]
{
\hspace*{-0.20cm}\includegraphics[width=0.20\textwidth]{./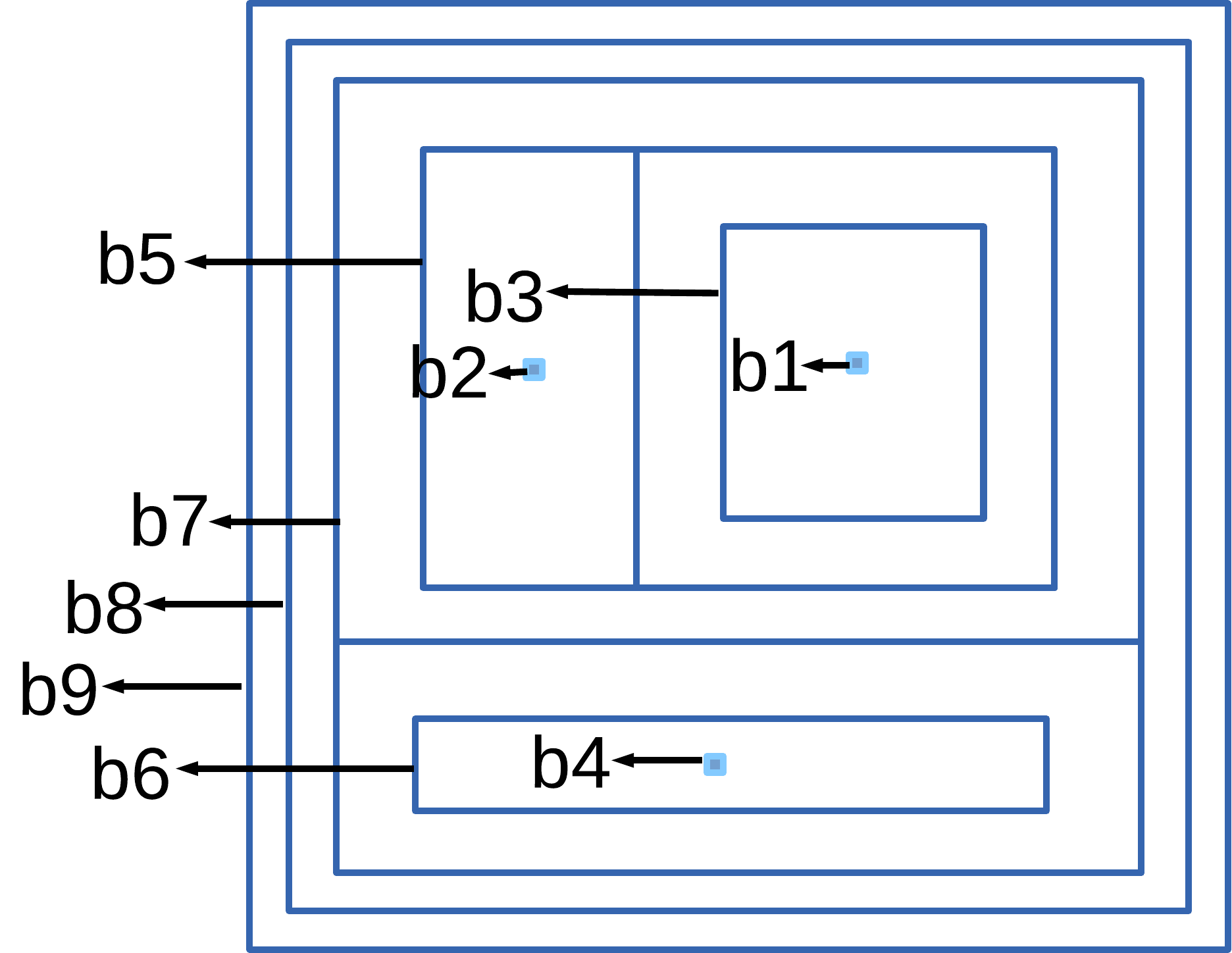}
\label{example_layout}
}
 \quad
\subfigure[3D Terrain]
{
\hspace*{-0.0cm}\includegraphics[width=0.10\textwidth]{./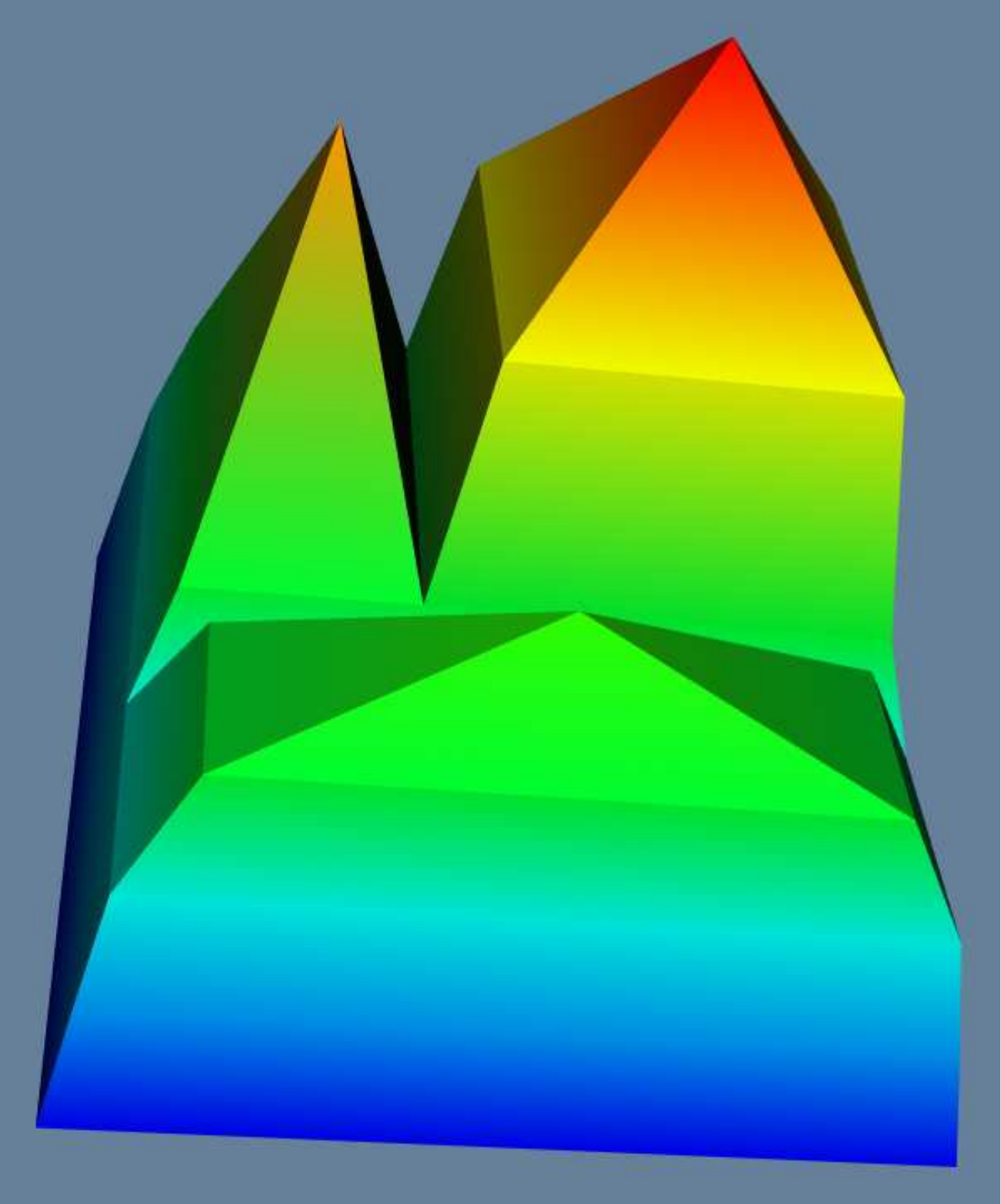}
\label{example_3dvis}
}

\subfigure[Maximal 5-Connected Component (red)]
{
\hspace*{-0.0cm}\includegraphics[width=0.15\textwidth]{./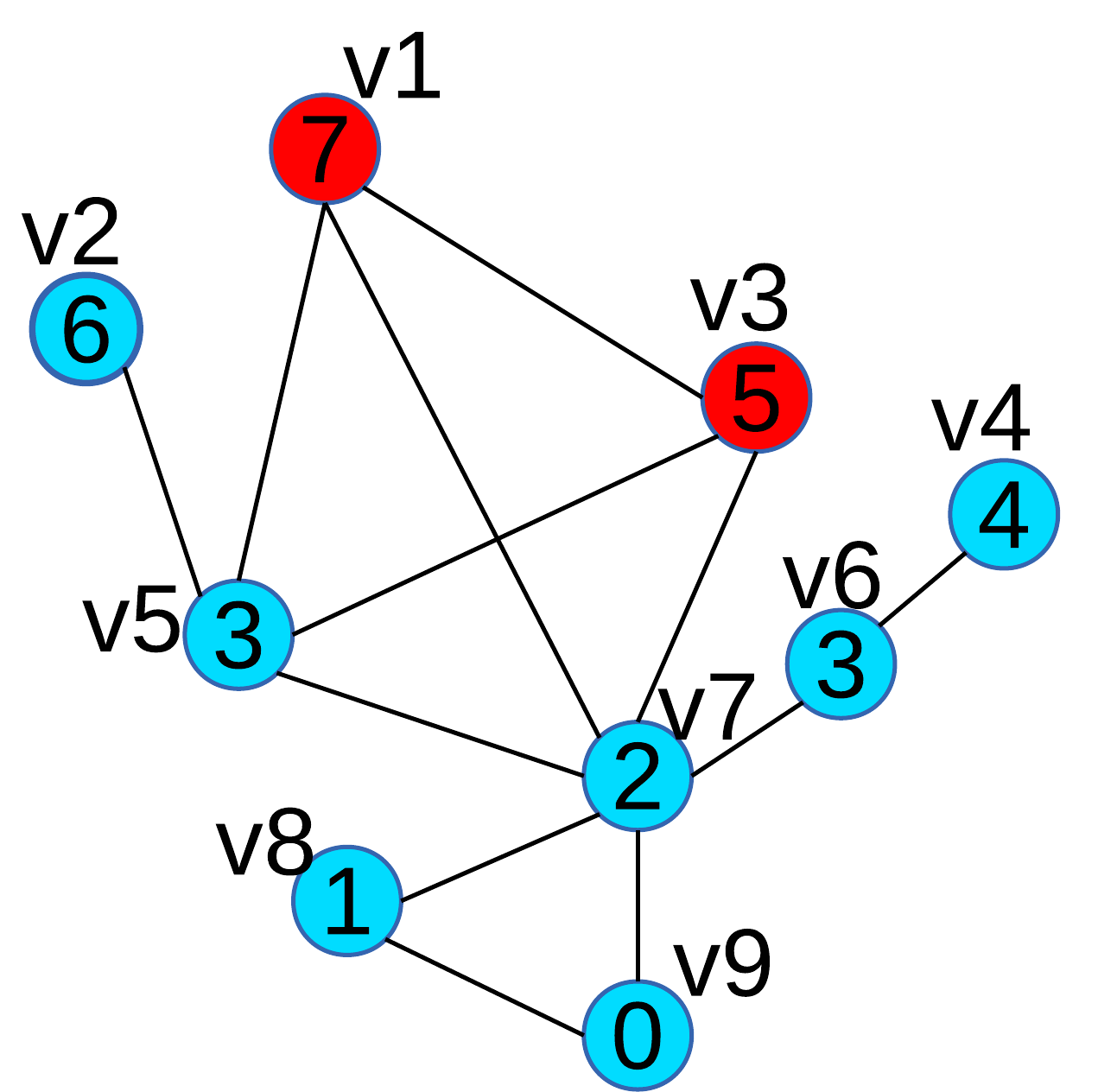}
\label{example_graph_cc2}
}
 \quad
\subfigure[Correspondent Subtree (red)]
{
\hspace*{-0.0cm}\includegraphics[width=0.12\textwidth]{./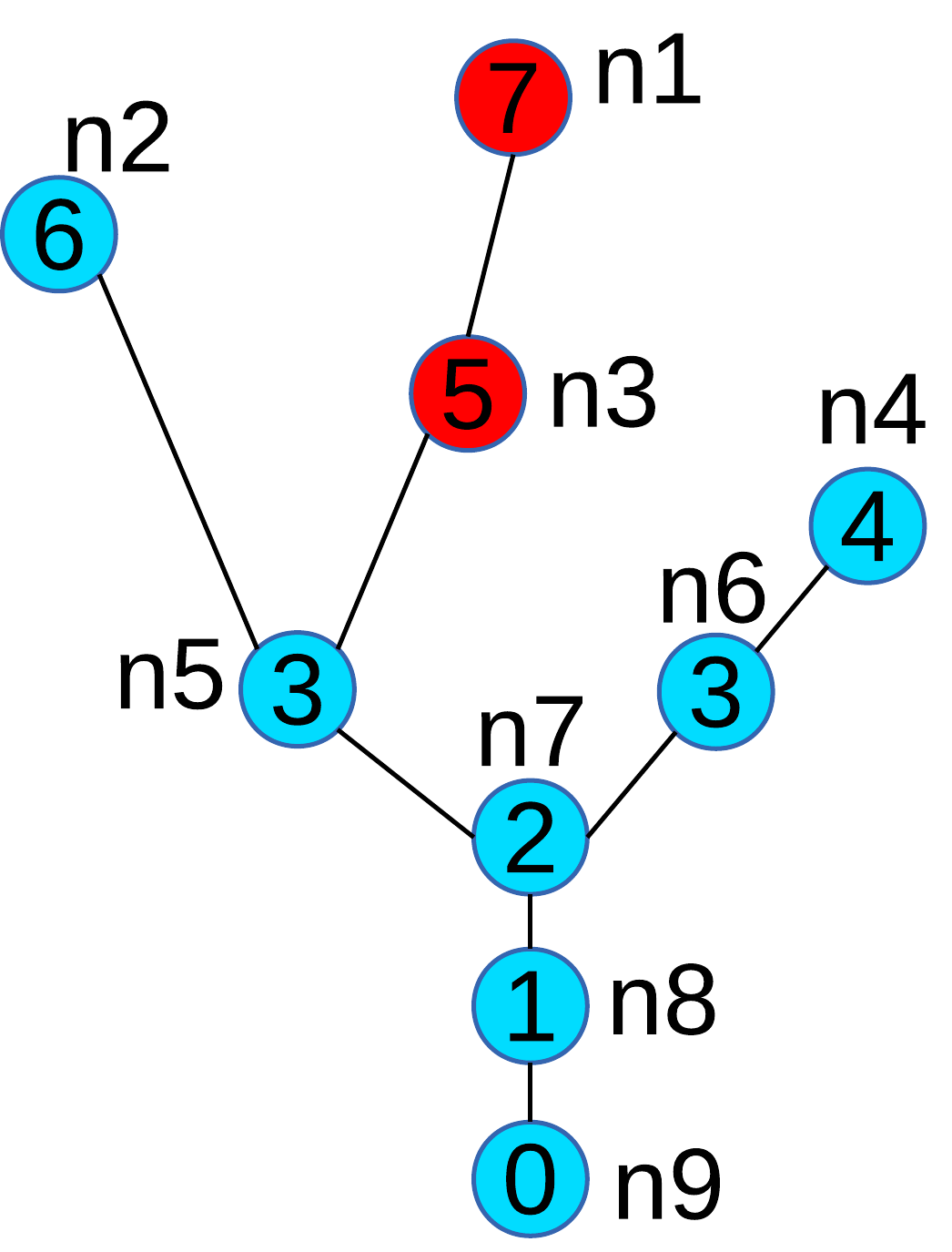}
\label{example_tree_cc2}
}
 \quad
\subfigure[$Peak_{5}$ (red)]
{
\hspace*{-0.0cm}\includegraphics[width=0.11\textwidth]{./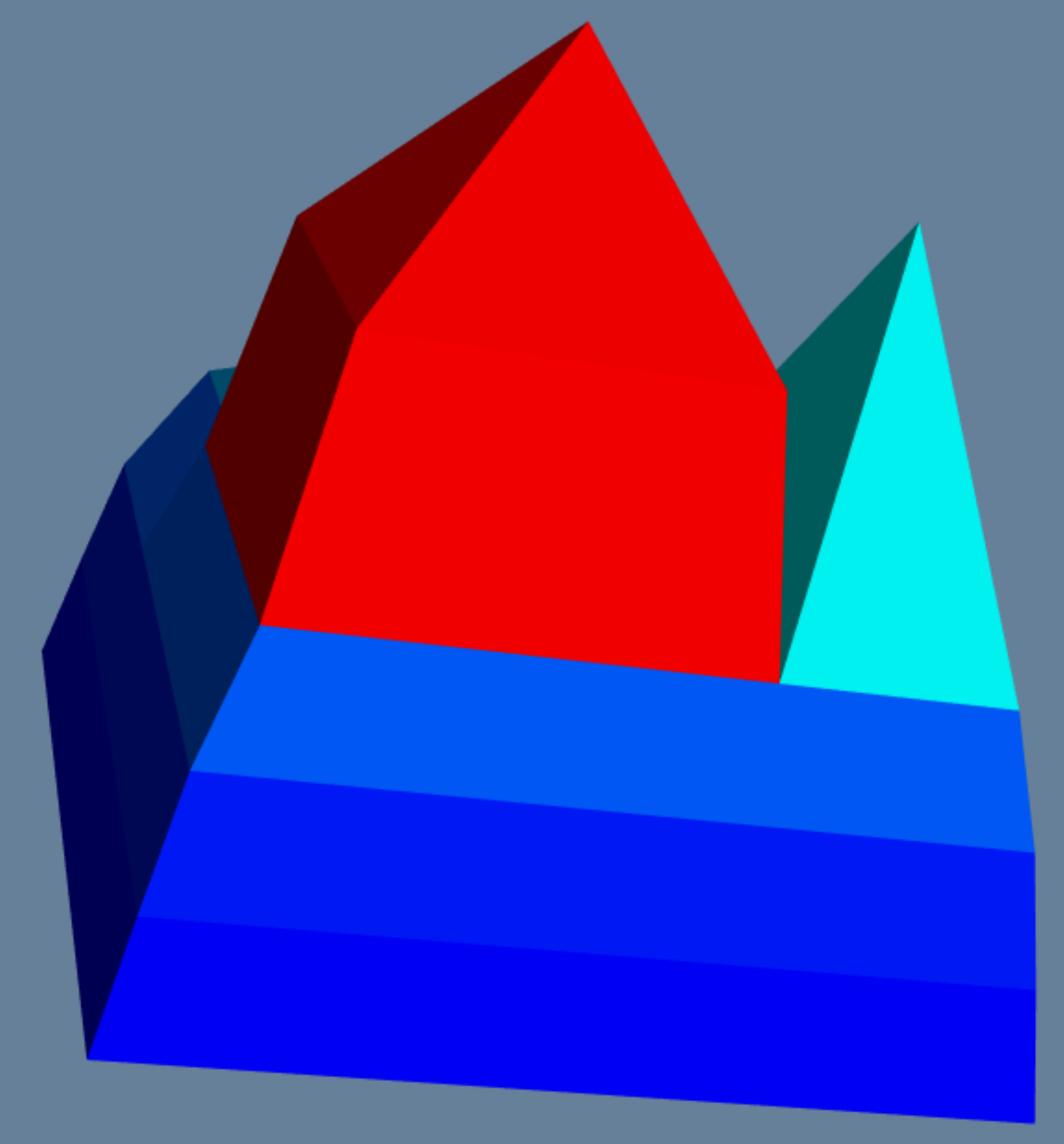}
\label{example_cc2}
}

\subfigure[Maximal 3-Connected Component (red)]
{
\hspace*{-0.0cm}\includegraphics[width=0.15\textwidth]{./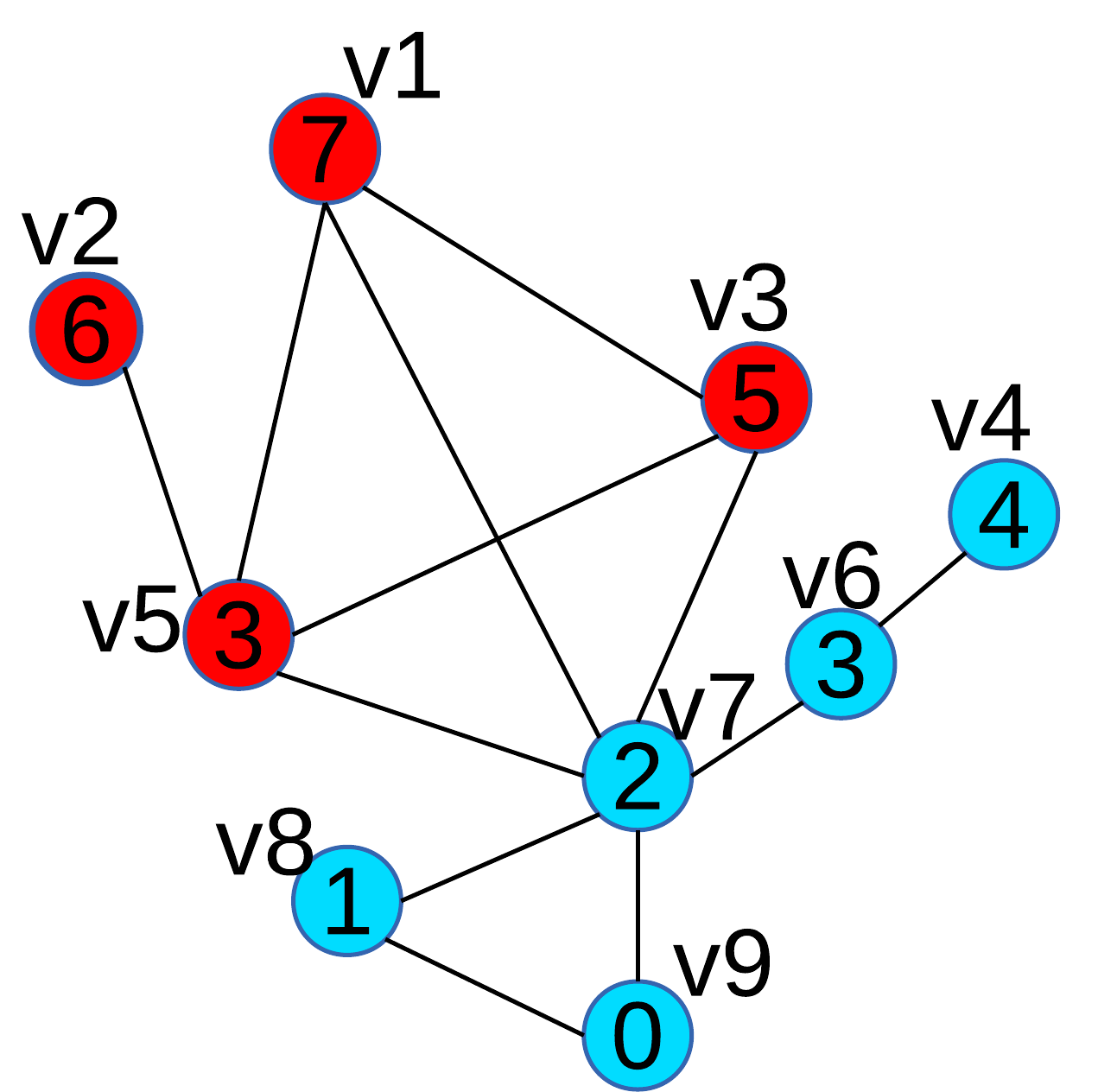}
\label{example_graph_cc4}
}
 \quad
\subfigure[Correspondent Subtree (red)]
{
\hspace*{-0.0cm}\includegraphics[width=0.12\textwidth]{./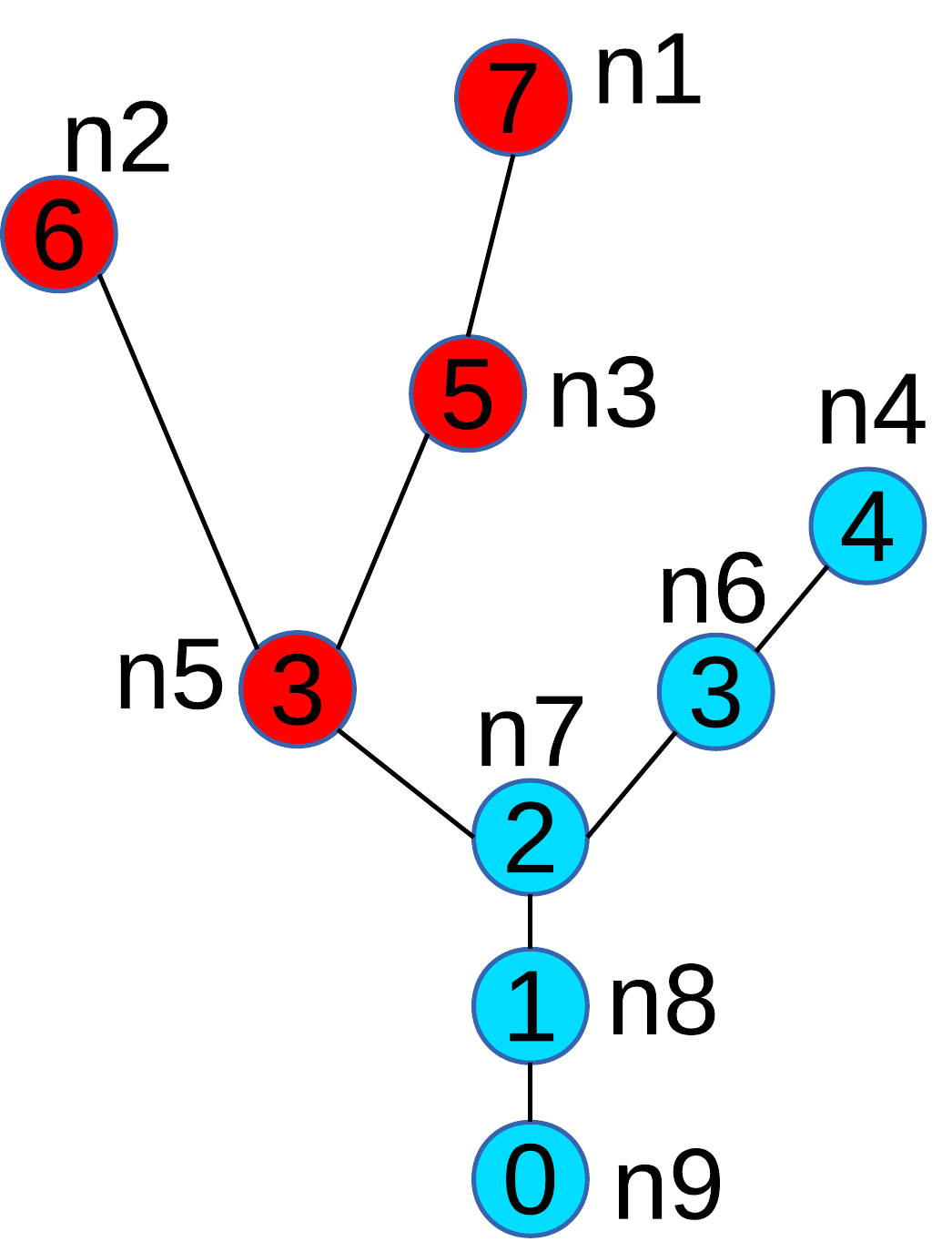}
\label{example_tree_cc4}
}
 \quad
\subfigure[$Peak_{3}$ (red)]
{
\hspace*{-0.0cm}\includegraphics[width=0.11\textwidth]{./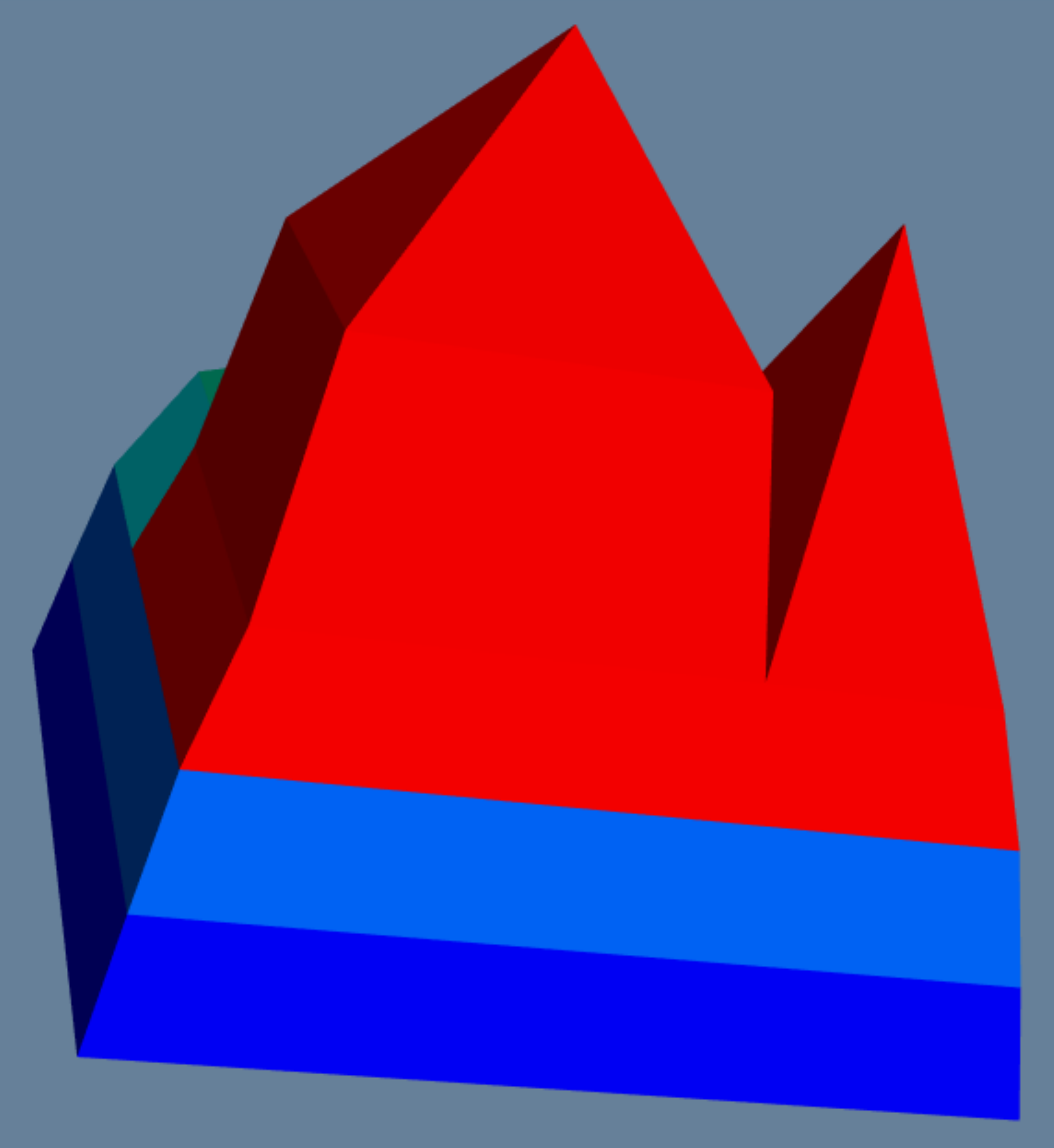}
\label{example_cc4}
}
\caption[Optional caption for list of figures]{Terrain Visualization of a Simple Scalar Tree}
\label{Scalar3D}
\end{figure}

\smallskip
\noindent
\textbf{\bf{User Interaction:}}
Our terrain
visualization tool provides following features to help users interact with the terrain.

\smallskip
\noindent \textbf{\emph{Rotate:}} the user could rotate the terrain to look at it from different angles. For example, Figure~\ref{example_3dvis} and Figure~\ref{example_cc2} show the same terrain from two different viewpoints.

\smallskip
\noindent \textbf{\emph{Zoom in/out:}} the user can zoom in/out to see the details/overview of the terrain. For example, in Figure~\ref{dblp4areacomm1allcircle}, we zoom into the terrain in the left picture, and get a clear picture of the two peaks in the right picture.

\smallskip
\noindent \emph{\textbf{Simplification: }} When visualizing a scalar tree with too many nodes,
the rendering and interaction speed might be slow,
we simplify the tree to make the visualization faster as follows.
We discretize the scalar values,
so similar scalar values will be approximately represented by the same value,
 and then we can use Algorithm~\ref{PostScalarTree} to build an approximate super tree with far
fewer tree nodes.

\smallskip
\noindent \textbf{\emph{Linked-2D-Displays:}} Our tool allows the user to select any region of the terrain, and invoke a ``callback'' function to visualize the selected region using other visualization method. For example, in Figure~\ref{grqc_terrain}, we select the region in the white dashed line box,
and 2D-linked
 spring layout visualization method to draw the selected region in the red box beside it.
Expert users optionally can use the ``callback'' function to integrate our terrain visualization
with other customized visualization methods.
We can also link a 2D treemap of the scalar graph by setting the height of all boundaries
to 0 and (optionally) using colors -- red/yellow/green/blue -- to indicate highest/high/low/lowest value --
so the red/yellow blocks in the 2D treemap indicate the subgraph areas with high scalar values
(see Figure~\ref{GrQc_2d} and Figure~\ref{GrQc_3d}).
The 2D visualization clearly shows how the red/yellow/green blocks are distributed over the graph
while in the  3D visualization the user will need to rotate her point-of-view to get a similar map.
That said, using color to encode scalar value has a drawback.
For example, peak 1 and peak 2 in Figure~\ref{GrQc_3d} correspond to block 1 and block 2 in Figure~\ref{GrQc_2d},
from the height we can see that the peak 2 is a little higher than the peak 1,
but from the color we cannot tell the difference between block 1 and block 2.
Finally, note that using height allows us to color the terrain based on a different attribute
(from the one that generates the terrain),
which helps to get a picture of the correlation across multiple scalars, as we shall discuss next.
\vspace{-0.10in}
\begin{figure}[h]
\centering
\subfigure[GrQc 2D visualization]
{
\includegraphics[width=0.20\textwidth]{./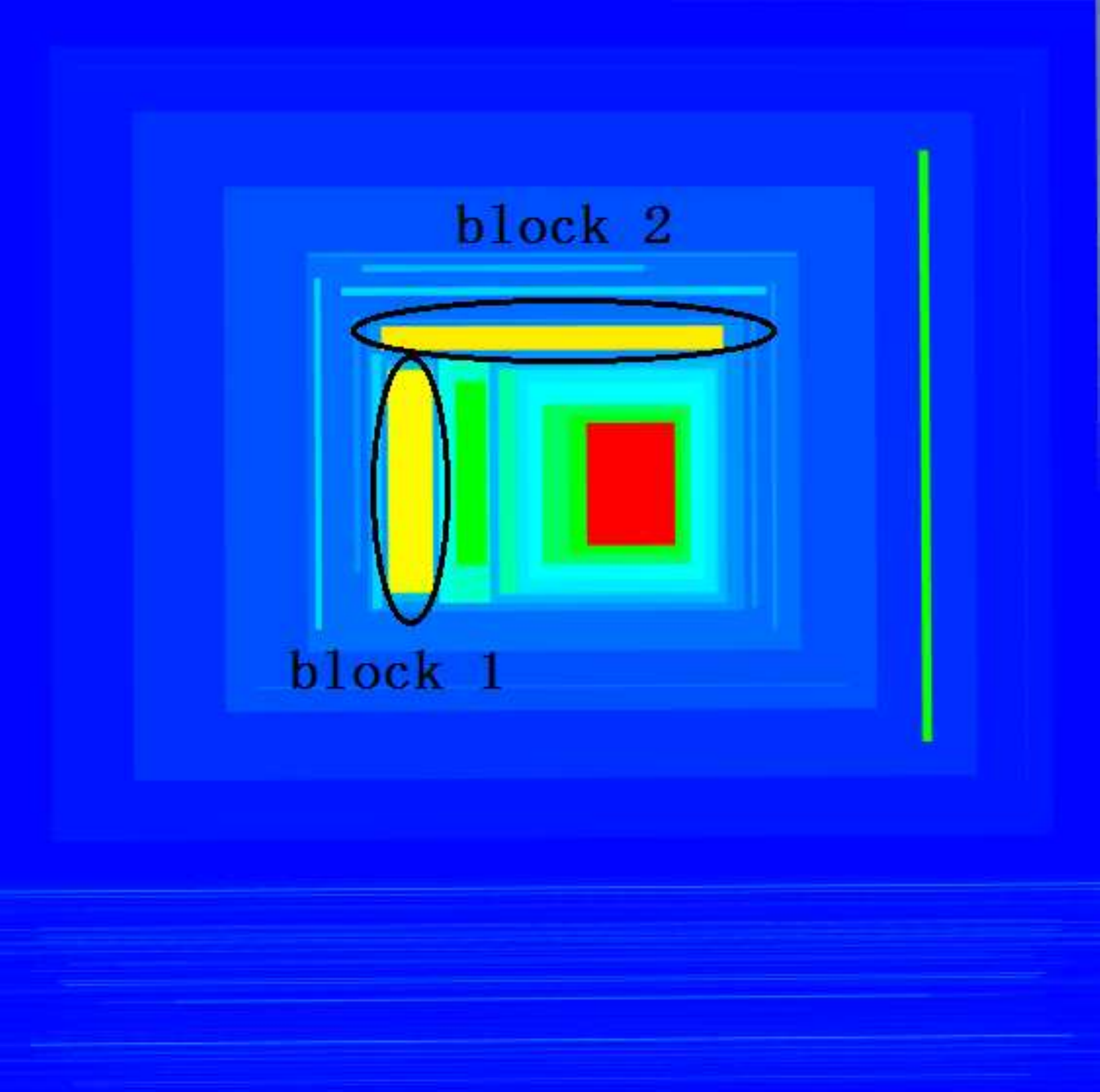}
\label{GrQc_2d}
}
 \quad
\subfigure[GrQc 3D visualization]
{
\includegraphics[width=0.20\textwidth]{./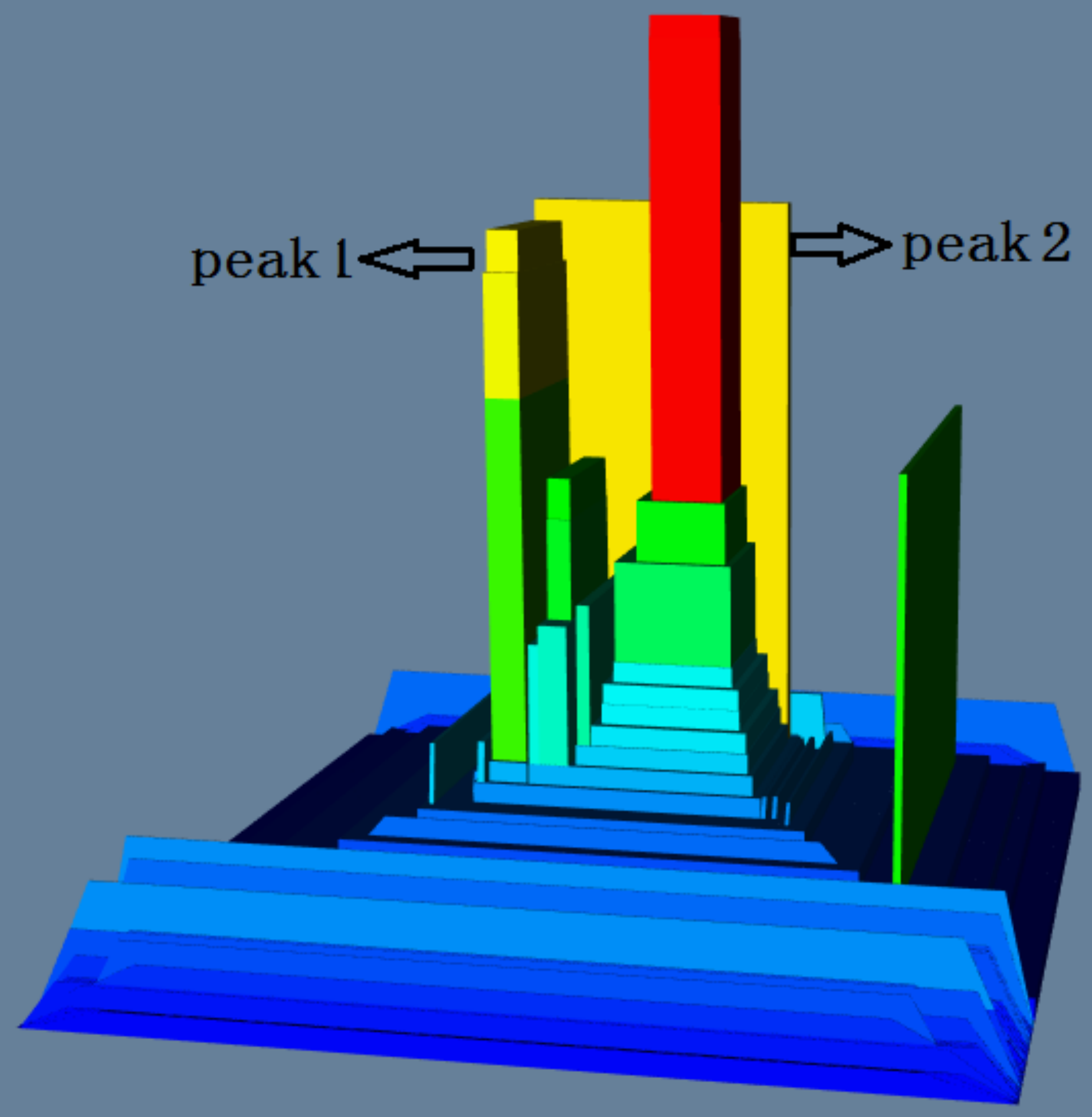}
\label{GrQc_3d}
}

\vspace{-0.15in}
\caption[Optional caption for list of figures]{2D Treemap vs. 3D Terrain}
\label{2dvs3d}
\end{figure}

\subsection{Handling Multiple Scalar Fields}
On some graphs, multiple scalar fields can be defined, which means there are multiple scalar values defined on each vertex.
For example, a vertex v has degree value and KC(v) value.
Users might be interested in how the multiple scalar fields correlate with each other -- are their changing trends the same or the opposite on the graph?
In this section,
we propose two indexes to measure the correlation of two scalar fields on a graph.
Then we can use terrain visualization to analyze the relationship between two scalar fields.

\smallskip
\noindent \textbf{\emph{Local Correlation Index:}}
Assume we have two scalar fields, $S_{i}$ and $S_{j}$, defined on a graph.
Each vertex v has scalar values $v.scalar_{i}$ and $v.scalar_{j}$ in the two scalar fields.
Some previous work ~\cite{multi_field} proposed a measure to compute the correlation of multiple scalar fields in continuous domain,
we adapt their method and propose Local Correlation Index to measure the correlation of two scalar fields on local areas of a graph.
Here local area is defined as k-hop neighborhood of each vertex v(denoted as N(v)),
for all experiments we limit this to be 1-hop.
The Local Correlation Index of $S_{i}$ and $S_{j}$ on N(v) is denoted as $LCI_{Si, Sj}(v)$ .
For each vertex v, we compute $LCI_{Si, Sj}(v)$  as follows.

{\small
\begin{flalign*}
&\overline{v.scalar_{i}} = \frac{\sum\limits_{u\in N(v)}u.scalar_{i}}{|N(v)|} & \\
&Cov_{ij}(v) = \frac{\sum\limits_{u\in N(v)} (u.scalar_{i} - \overline{v.scalar_{i}}) * (u.scalar_{j} - \overline{v.scalar_{j}})}{|N(v)|} & \\
&LCI_{Si, Sj}(v) = \frac{Cov_{ij}(v)}{\sqrt{Cov_{ii}(v)} * \sqrt{Cov_{jj}(v)}}&
\end{flalign*}
}

$LCI_{Si, Sj}(v)$ is actually the correlation of the scalar values of $S_{i}$ and $S_{j}$ on v's k-hop neighborhood.
A positive/negative $LCI_{Si, Sj}(v)$ indicates the changing trends of $S_{i}$ and $S_{j}$ are consistent/inconsistent on v's k-hop neighborhood.
This method can easily be adapted to analyze edge-based scalar graphs.

\smallskip
\noindent \textbf{\emph{Global Correlation Index:}}
We can compute the Global Correlation Index (GCI) of scalar fields $S_{i}$ and $S_{j}$ on a graph by averaging the Local Correlation Indexes of all neighborhoods.

\vspace{-0.10in}
\begin{flalign*}
GCI_{S_{i}, S_{j}}(G) = \sum\limits_{v\in V} LCI_{S_{i}, S_{j}}(v)/|V|
\end{flalign*}
By comparing the Global Correlation Index and Local Correlation Index,
we may identify some outlier neighborhood on which the correlation of scalar fields $S_{i}$ and $S_{j}$ is different from the overall correlation.

\smallskip
\noindent \textbf{\emph{Terrain Visualization:}}
To visualize the local correlation between two scalar fields,
we can use $LCI_{Si, Sj}(v)$ as a scalar field to draw the terrain.
This will show us the overall distribution of $LCI_{Si, Sj}(v)$ over the graph,
and help us identify the area of the graph where the two scalar fields are positively/negatively correlated.

We can also visually capture the global correlation of scalar fields $S_{i}$ and $S_{j}$ through coloring terrain visualization.
We use one scalar field $S_{i}$ to draw the terrain visualization,
and use the other scalar field $S_{j}$ to color the terrain (see Figure~\ref{GrQc_3d_intro}).
Please note that this method can also
 visualize the relationship between a numerical attribute and a nominal attribute of vertices,
by coloring the terrain based on the value of nominal attributes. 

\subsection{Related Context}

We are now in a position to briefly place the proposed visualization strategy
in the context of related work.
Visualizing graph data is an important problem (See ~\cite{L12} for a recent survey -- due to interests of space here we focus only on the most relevant).
Gronemann et al.~\cite{L1} (similarly Athenstadt et al.~\cite{L4}) use topographic maps
to visualize clustering structure within a graph --
each mountain corresponds to a cluster.
van Liere et al.\cite{L10} propose the GraphSplatting
method to visualize a graph as a 2D splat field.
Telea et al.\cite{L9} generate a concise representation of graph by clustering edges and bundling similar edges together and subsequently visualize the graph.
While effective on small scale datasets for displaying overall cluster structure, these methods simply
do not scale to large data with millions of edges nor do they account for attributed graphs (scalar
values).
Bezerianos et al.~\cite{L11} and Wattenberg~\cite{wattenberg2006visual} propose interactive visual system to let users explore networks with multiple node or edge attributes. However, these methods do not consider
hierarchical relationships among components-of-interest and graph attributes simultaneously.

Sariyuce et al.~\cite{dense_hierarchy} proposed (r, s)-nucleus to denote a dense subgraph comprised by cliques, and use forest of nucleus to represent hierarchical structure of a graph. The difference is, their definition of (r, s)-nucleus focuses on density of subgraph, while our definition of \textbf{\emph{maximal $\alpha$-connected components}} focuses on relation between scalar values and graph topology, and not just limited to nucleus motifs. Moreover, their effort does not consider visualization as an objective -- a primary focus of our effort.
Some visualization methods (such as LaNet-vi~\cite{kcore_vis}) are proposed to visualize K-Cores, but they are not general enough to handle other vertex/edge attributes. Martin et al.~\cite{open_ord} proposed OpenOrd to visualize large scale graphs in multi-level way, but do not effectively highlight components-of-interest.
Both LaNet-vi and OpenOrd share some of our objectives w.r.t network visualization, and we compare and contrast with these efforts in Section~\ref{user_sec}.

In summary, a major difference between our work and prior art is that
we propose to analyze the graph through the hierarchical structure (scalar tree) induced by the \textbf{\emph{maximal $\alpha$-connected components}}.
The benefit is that it {\it naturally} encodes the relationship between clustering structure and
scalar values -- it highlights how scalar values evolve from high values to low values over the graph.
This is particularly useful for a data scientist who wishes to understand
how a community is expanded from its core members to peripheral
members(see Figure~\ref{dblp4area_communities})\cite{Bajaj97thecontour, 1263272, 769927}.
Furthermore, 
based on different attributes,
the \textbf{\emph{maximal $\alpha$-connected component}} can represent different subgraph patterns, such as K-core, K-truss, subcommunity,
which has attracted much interest within the database community~\cite{csv,Vlad:kcore, james:truss},
to reveal the topological relationship (containment, connection) among components of interest
(e.g. K-Cores, K-Trusses, communities). We examine these issues next.

\section{Experimental Evaluation}
\label{evaluation}
We seek to evaluate the effectiveness (qualitative) and efficiency
of the our interactive network visualization method in this section.
We leverage a wide range of datasets from
the network science community (some are from ~\cite{snapnets}) as noted in Table~\ref{datasets}.
Heights in our terrain visualization represent
scalar measures of input scalar graph while color represents intensity of the same measure (unless otherwise
noted). The color ranges from red (most intense);
yellow (intense); green (less intense); blue (least intense).
All experiments are evaluated on a 3.4GHz CPU, 16G RAM Linux-based desktop.

\begin{table}[!h]
\fontsize{8}{10}\selectfont
\centering
\caption{Dataset Properties}
\begin{tabular}{|c|c|c|c|}\hline
  Dataset & \# Nodes & \# Edges & Context \\\hline
  GrQc   &  5242     & 14496  & \pbox{16cm}{Coauthorship in General Relativity\\ and Quantum Cosmology}\\\hline
  Wikivote & 7115    & 103689 & \pbox{16cm} {Who-votes-on-whom relationship\\ between Wikipedia users} \\\hline
  Wikipedia & 1,815,914 & 34,022,831 & \pbox{16cm} {Links between Wikipedia pages} \\\hline
  PPI & 4741 & 15147 & \pbox{20cm} {Protein Protein Interaction network} \\\hline
  Cit-Patent & 3,774,768 & 16,518,947 & \pbox{16cm} { Citations made by patents granted\\ between 1975 and 1999} \\\hline
  Amazon &  334863 & 925872  & \pbox{16cm} {Co-Purchase relationship\\ between products in Amazon }\\\hline
  Astro & 17903 & 196972  & \pbox{16cm}{Coauthorship between authors in \\
                                            Astro Physics} \\\hline
  DBLP & 27199 & 66832 & \pbox{16cm}{Coauthorship between authors in\\
                                   (Database, Data Mining, Machine\\ Learning, Information Retrieval)} \\\hline

\end{tabular}
\label{datasets}
\end{table}

\subsection{Visualizing Dense Subgraphs}

\noindent{\bf Effectiveness:}
The visualization of dense subgraphs within graphs has been of much interest within the database and information visualization.
Examples abound and include CSV plots~\cite{csv}, K-Core~\cite{kcore_vis} and Triangle K-Core (K-Truss) ~\cite{tkcore12} plots.
Here we use our terrain visualization to visualize K-Cores and K-Trusses and compare with the previous methods. 

We consider two datasets (GrQc, Wikivote) for this illustration.
In Figure~\ref{grqc_spring} and Figure~\ref{wikivote_spring}, we use the traditional spring layout Algorithm~\cite{graphdraw} to draw both networks --
it is hard to say anything about the distribution of dense subgraphs using such a plot.
Following the discussion in Section~\ref{relation_dense},
we use KC(v) as scalar value,
and generate the terrain visualization of both networks in
Figure~\ref{grqc_terrain} and Figure~\ref{wikivote_terrain}.
Recall that if the scalar value of every vertex v is defined to be KC(v),
each maximal $\alpha$-connected component is a K-Core where $K = \alpha$.
Thus in the terrain, each $peak_{\alpha}$ is a K-Core where $K = \alpha$.
The distribution of K-Cores in the two datasets is obviously different.
Figure~\ref{wikivote_terrain} shows that there is one single high peak,
which means the network has one densest K-Core,
and K-Core density gradually decreases to the neighboring vertices.
Figure~\ref{grqc_terrain} shows that there are several high peaks within the GrQc network,
which means there are several disconnected K-Cores with high K values (dense K-Cores).

\begin{figure}[!h]
\centering
\subfigure[GrQc (spring layout)]
{
\includegraphics[width=0.14\textwidth]{./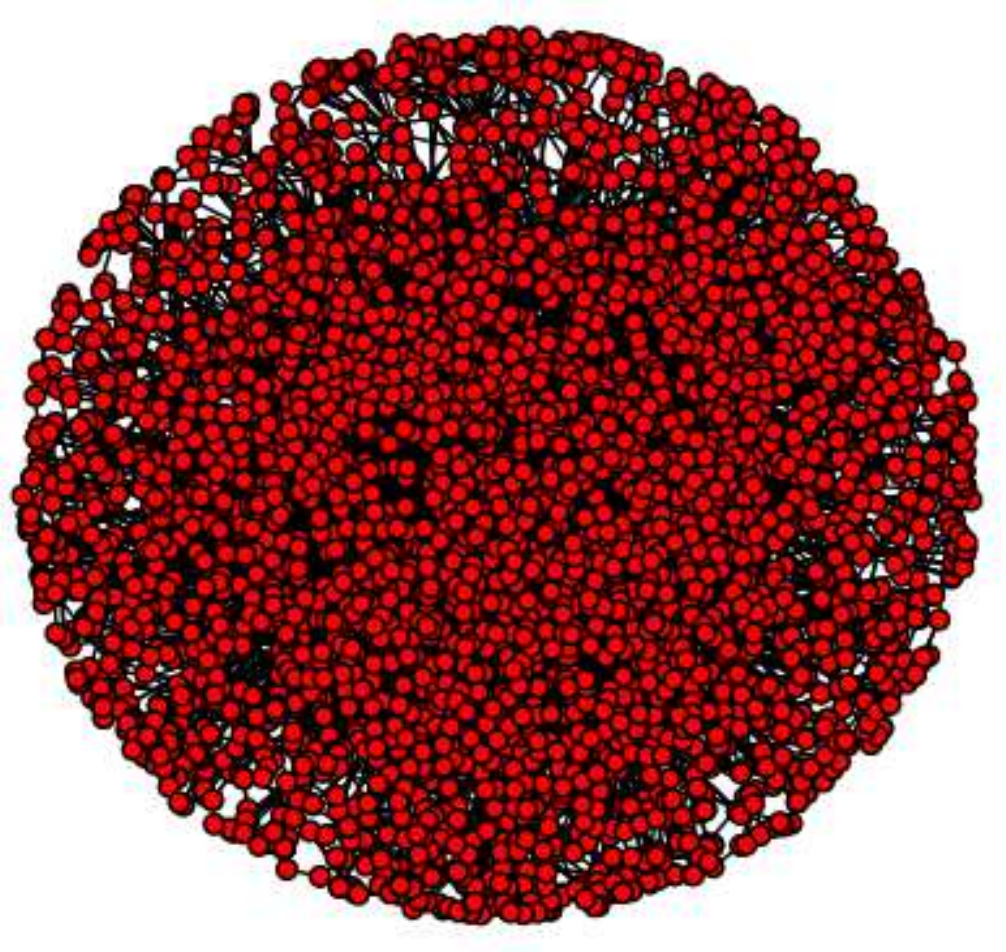}
\label{grqc_spring}
}
 \quad
\subfigure[wikiVote (spring layout)]
{
\includegraphics[width=0.22\textwidth]{./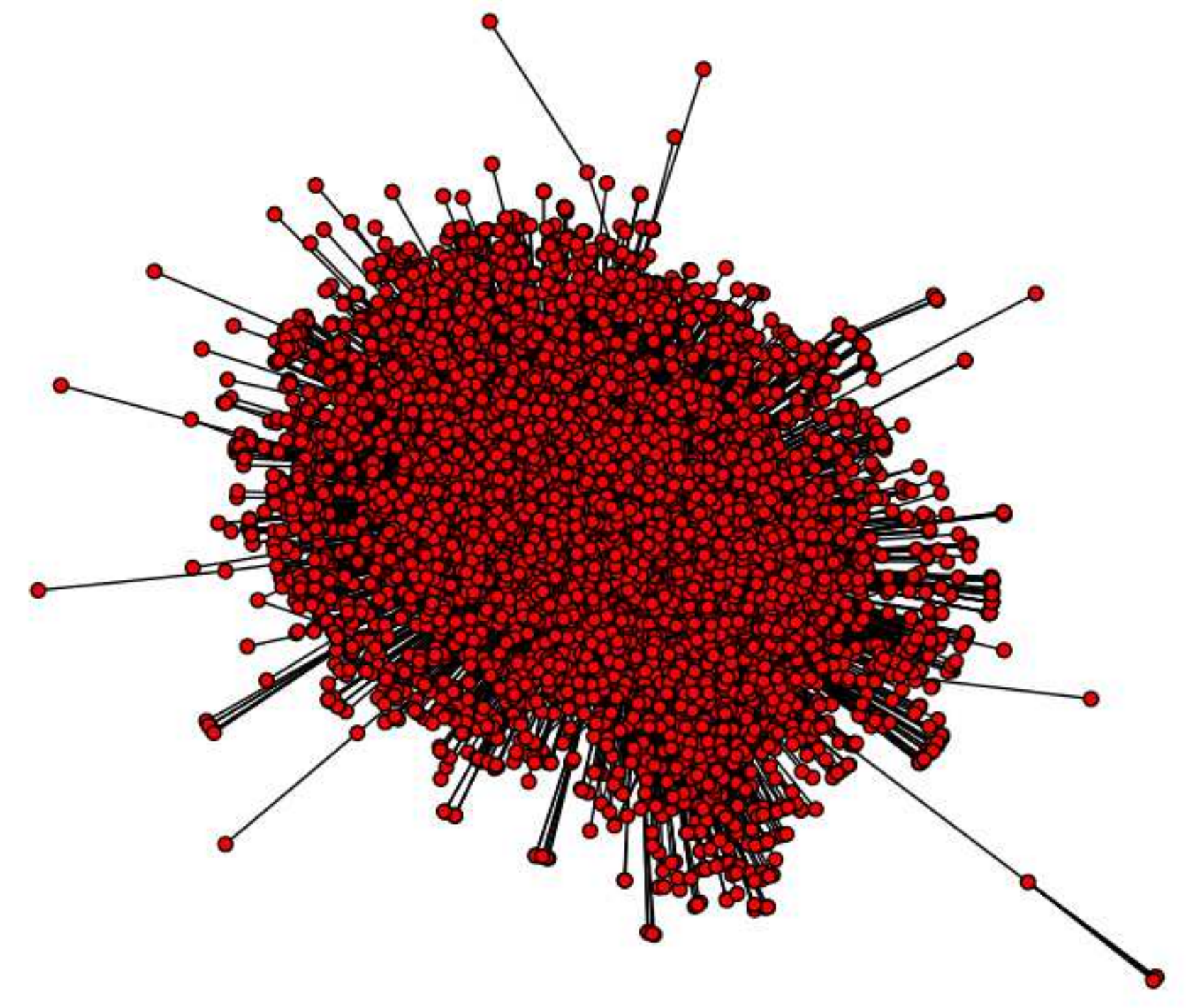}
\label{wikivote_spring}
}
\subfigure[GrQc (K-Core)]
{
\includegraphics[width=0.38\textwidth]{./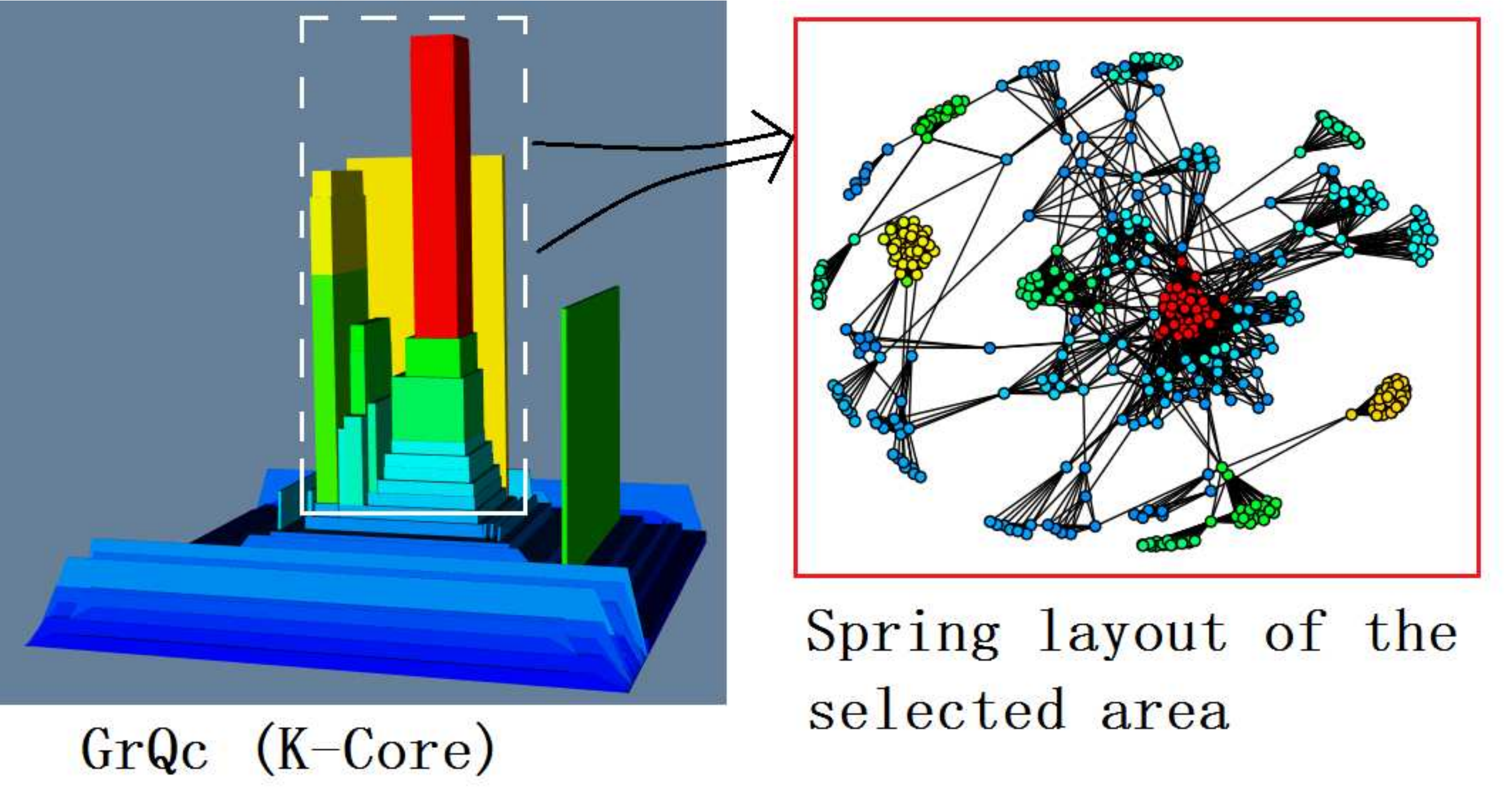}
\label{grqc_terrain}
}


\subfigure[wikiVote (K-Core)]
{
\includegraphics[width=0.18\textwidth]{./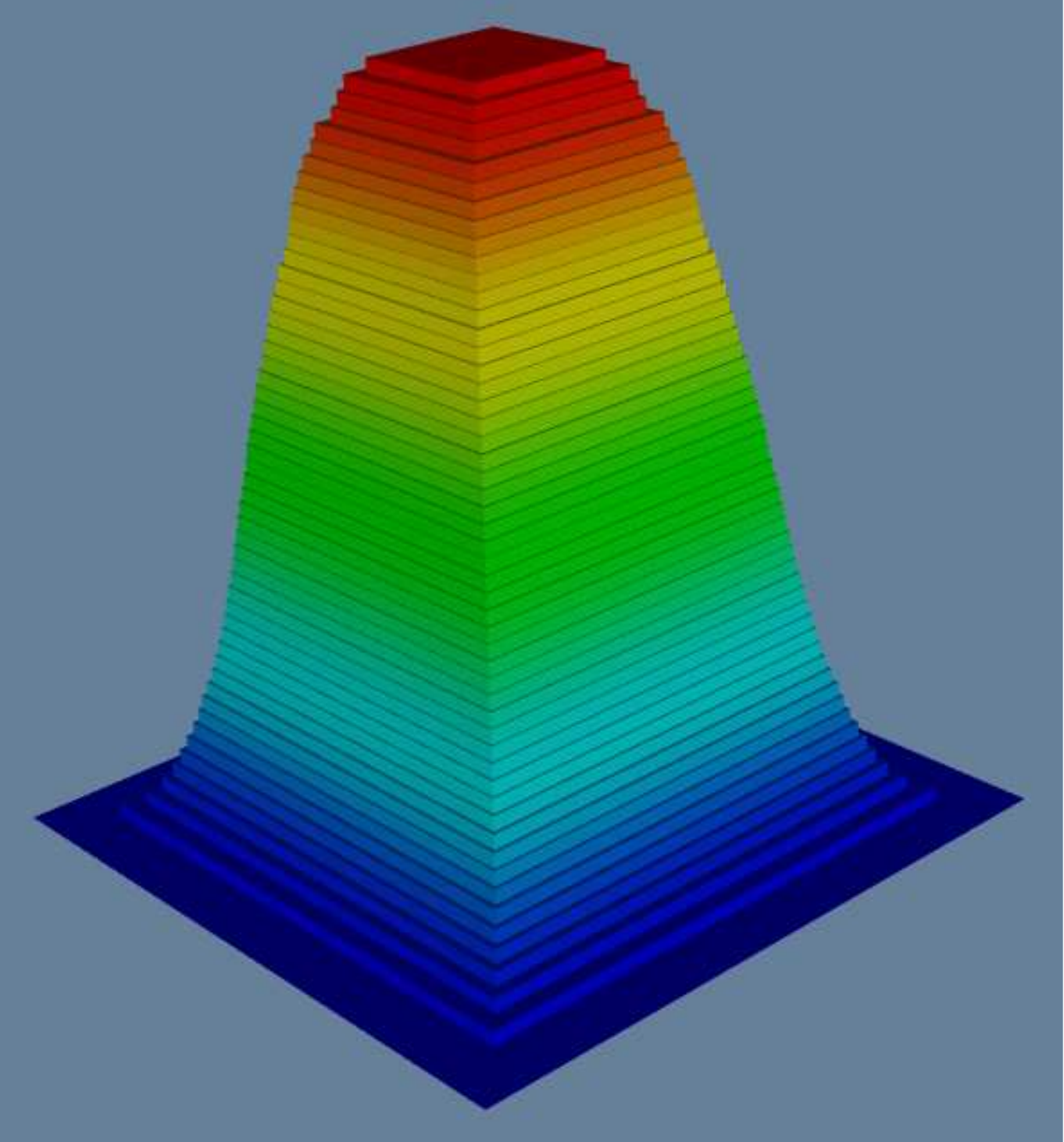}
\label{wikivote_terrain}
}
\quad
\subfigure[GrQc (K-Truss)]
{
\includegraphics[width=0.18\textwidth]{./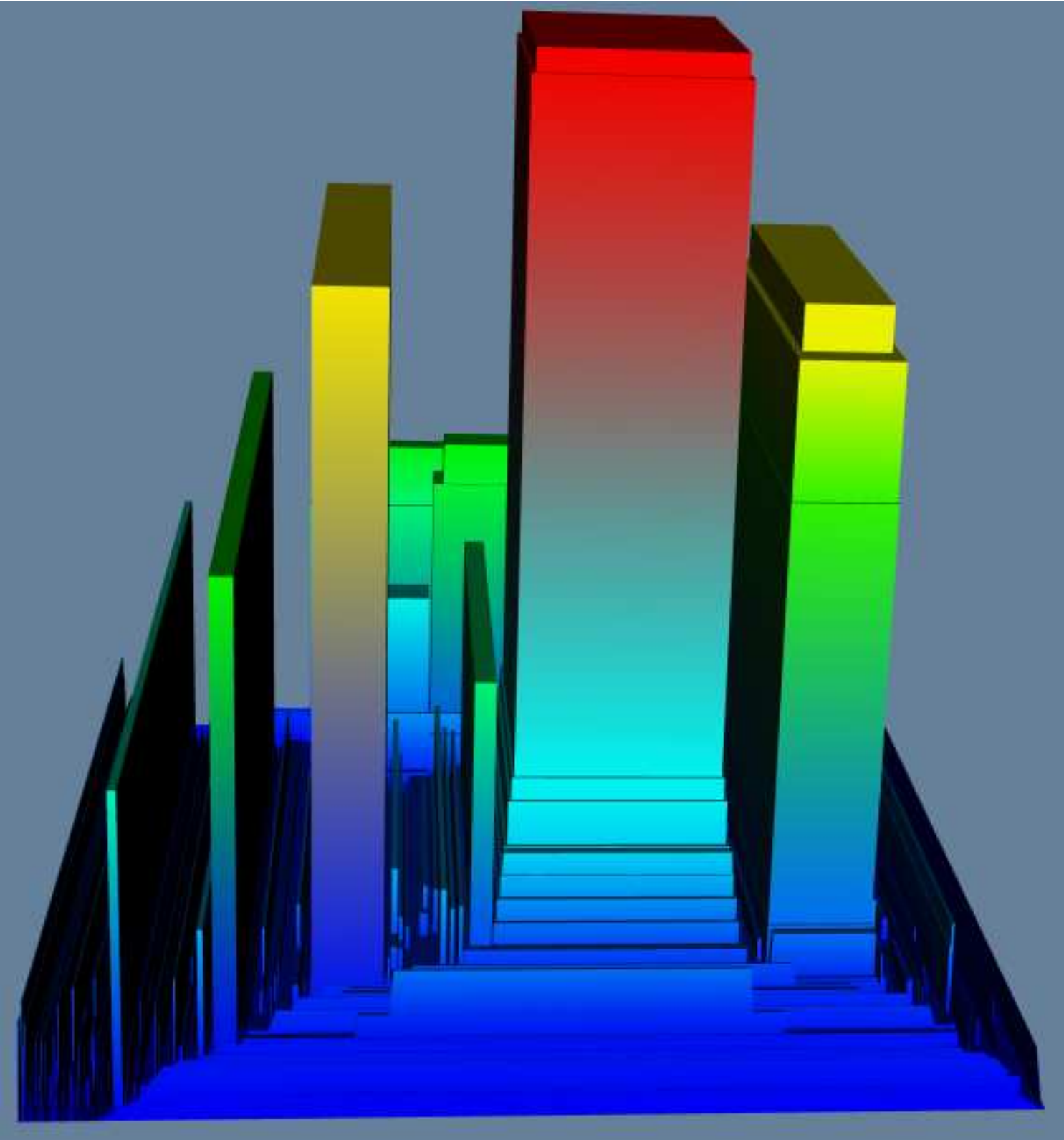}
\label{GrQc_tkcore_3d}
}

\subfigure[2D Visualization of K-Cores in GrQc]
{
\includegraphics[width=0.18\textwidth]{./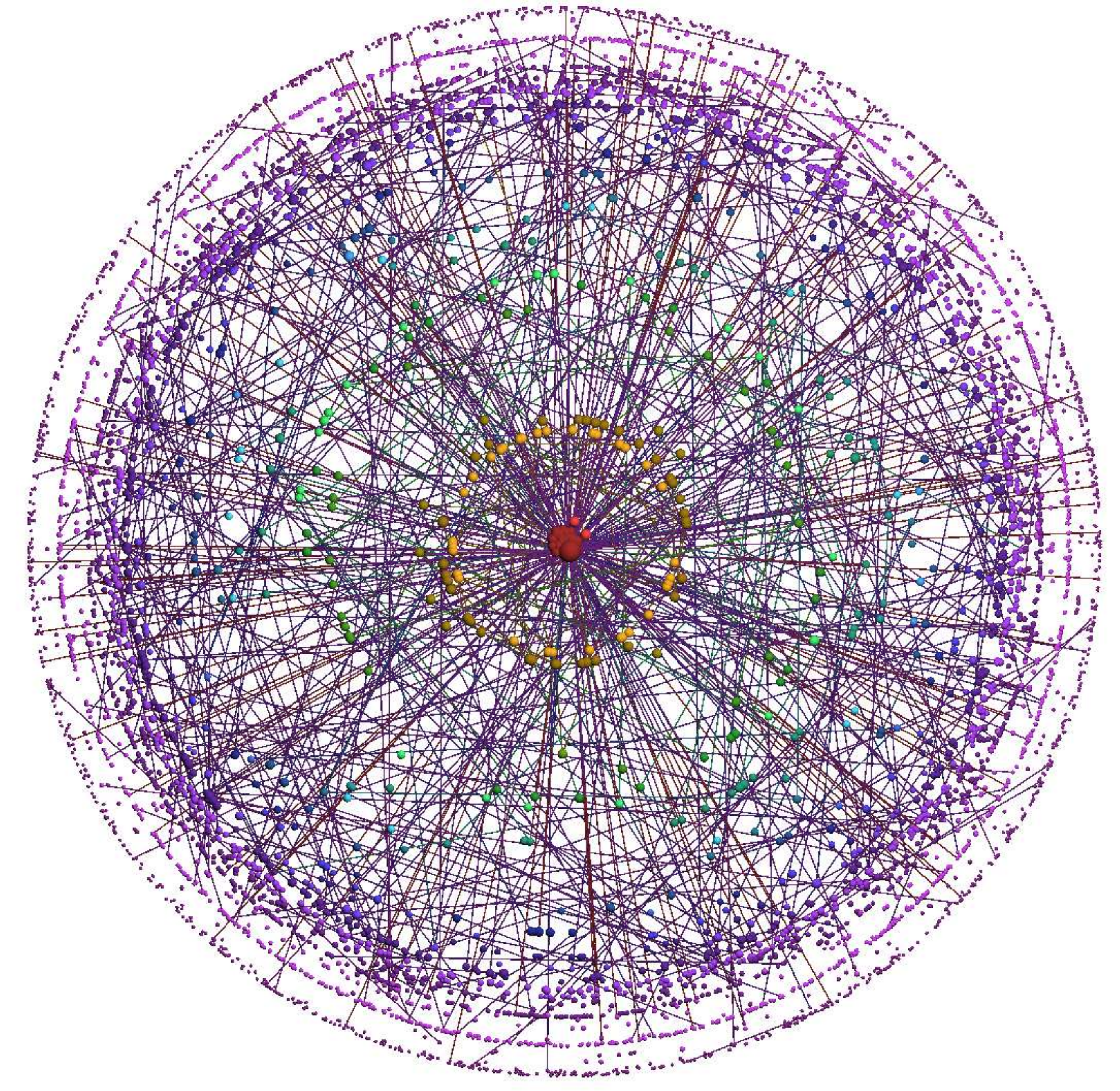}
\label{GrQc_kcore_others}
}
 \quad
\subfigure[2D plot of K-Trusses in GrQc]
{
\includegraphics[width=0.19\textwidth]{./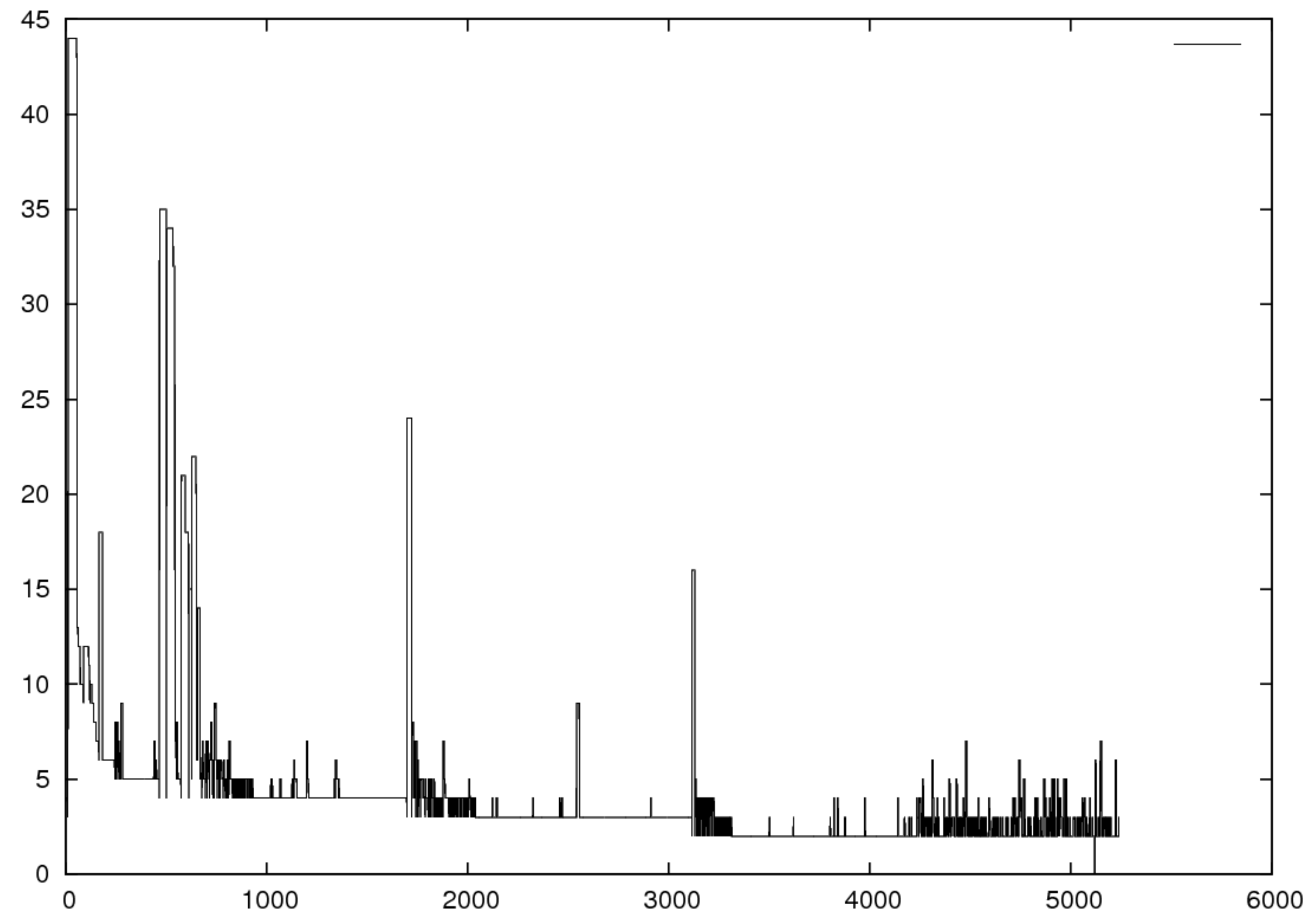}
\label{GrQc_csv_plot}
}
\caption[Optional caption for list of figures]{Visualizing Dense Subgraphs the Network}
\label{examplekcore}
\end{figure}

Moreover, Figure~\ref{grqc_terrain} clearly illustrates the
hierarchical relationship among K-Cores.
In the selected terrain area (the terrain area in dashed line), the red peak is placed on green and blue foundation,
which means the dense K-Core is contained in some less dense K-Cores.
This can be verified by drawing spring layout of the selected region in the red box (with our tool, a user can select a region,
and call other visualization methods to draw the selected region), the red dense K-Core is surrounded by some green and blue vertices.
The visualization of hierarchy is important, as it allows an analyst to derive high level insights on the connectivity that is not immediately obvious even in state of the art K-Core plots as shown in Figure~\ref{GrQc_kcore_others}~\cite{kcore_vis} for
the GrQc network. We give more detailed comparison between terrain visualization and other visualization methods in User Study section.

Also we can color the terrain using a second measure.
In Figure~\ref{GrQc_3d_intro}, we color the terrain based on vertex degree
(red/yellow/green/blue area indicates vertices with highest/high/low/lowest degrees),
we can see that generally KC(v) is positively correlated with degree -- vertices in dense K-Cores have high degrees,
except a few outlier vertices that have relatively high degree but low KC(v) values (the yellow area at the bottom of the terrain).
They are usually local hub nodes with sparse neighborhood.

We can illustrate the same principle when visualizing K-Trusses (used to understand triangle density) instead of K-Cores.
Here each edge uses KT(e) as the scalar measure, and we use the
edge-based scalar graph for
visualizing the K-Trusses in GrQc dataset.
The terrain visualization is in Figure~\ref{GrQc_tkcore_3d} where high peaks indicate dense K-Trusses.
To contrast, Figure~\ref{GrQc_csv_plot} depicts
a CSV plot, a state-of-the-art density plot leveraged within the database
community ~\cite{csv, tkcore12}.
Again such visualization strategies do not reveal important hierarchical relationships (e.g. contains) among different K-Trusses.
Also we note that our visualization method is a common and flexible framework which can render plots based on different scalar measures,
and the ability to rotate, filter and extract details on demand (allowing the analyst to quickly identify regions of interest) will help users
 understand the graph data better.

\noindent{\bf Scalability:}
We next examine the efficiency of Algorithm~\ref{ScalarTree}, \ref{PostScalarTree} and \ref{EdgeScalarTree}.
Every dataset has duplicate scalar values,
so the generated trees are all super trees.
We test our methods on datasets of various sizes,
and list the number of nodes in the final super (edge) scalar tree($N_{t}$), time cost to construct the tree (tc) and visualize the tree (tv) in Table \ref{wiki_cit_info}.
The time cost to construct the tree (tc) includes the time cost to construct the tree ( Algorithm~\ref{ScalarTree} or \ref{EdgeScalarTree}) and postprocess the tree (Algorithm~\ref{PostScalarTree}).
The time cost to visualize the tree (tv) is the time cost for the visualization software to read the scalar tree and render the terrain visualization.

\begin{table}[!h]
\fontsize{8}{10}\selectfont
\centering
\caption{Terrain Visualization Time Cost(sec)}
\begin{tabular}{|c|c|c|c|c|c|c|c|c|}\hline
  Dataset & Scalar & \pbox{20cm}{$N_{t}$} & $tc$ & $te$ & $tv$ \\\hline
  GrQc & KC(v) &  869 & 0.0018 &  & $<$1\\\hline
  GrQc & KT(e) &  728&  0.0039& 0.0072 & $<$1 \\\hline
  WikiVote & KC(v) & 106&  0.0037 &  & $<$1 \\\hline
  WikiVote & KT(e) & 44& 0.053& 0.69& $<$1 \\\hline
  Wikipedia & KC(v) &  230 & 6.9 &  & 2 \\\hline
  Wikipedia & KT(e) &  1,903& 49.3 &16334  & 22\\\hline
  Cit-Patent & KC(v) & 1,059&  7.1 &  & 2\\\hline
  Cit-Patent & KT(e) &  110,412& 27.7 & 65.3 & 13 \\\hline
\end{tabular}
\label{wiki_cit_info}
\end{table}

 Also we list the time cost of the naive method (using dual graph) to build edge-scalar tree (te) in Table~\ref{wiki_cit_info}.
We can see the improved method (tc) is much faster than the naive method (te), especially on the Wikipedia dataset,
the improved method is more than 300 times faster than the naive method.

Additionally,
in Figure~\ref{wiki_cit_vis},
we show the terrain visualizations of (edge) scalar tree of wikipedia and cit-Patent datasets.
The peaks in Figure~\ref{wiki_kcore_3d} \ref{wiki_tkcore_3d_peak} \ref{cit_kcore_3d_peak} \ref{cit_tkcore_3d}
indicate dense K-Cores and K-Trusses in the network,
we highlight the highest peaks in Figure~\ref{wiki_tkcore_3d_peak} and Figure~\ref{cit_kcore_3d_peak},
and draw the details in Figure~\ref{wiki_tkcore_peak_2d} and Figure~\ref{cit_kcore_peak_2d}.
They are a K-Truss with K = 86 and a K-Core with K = 64.

\begin{figure}[!h]
\centering
\subfigure[Wikipedia Network (K-Core)]
{
\includegraphics[width=0.20\textwidth]{./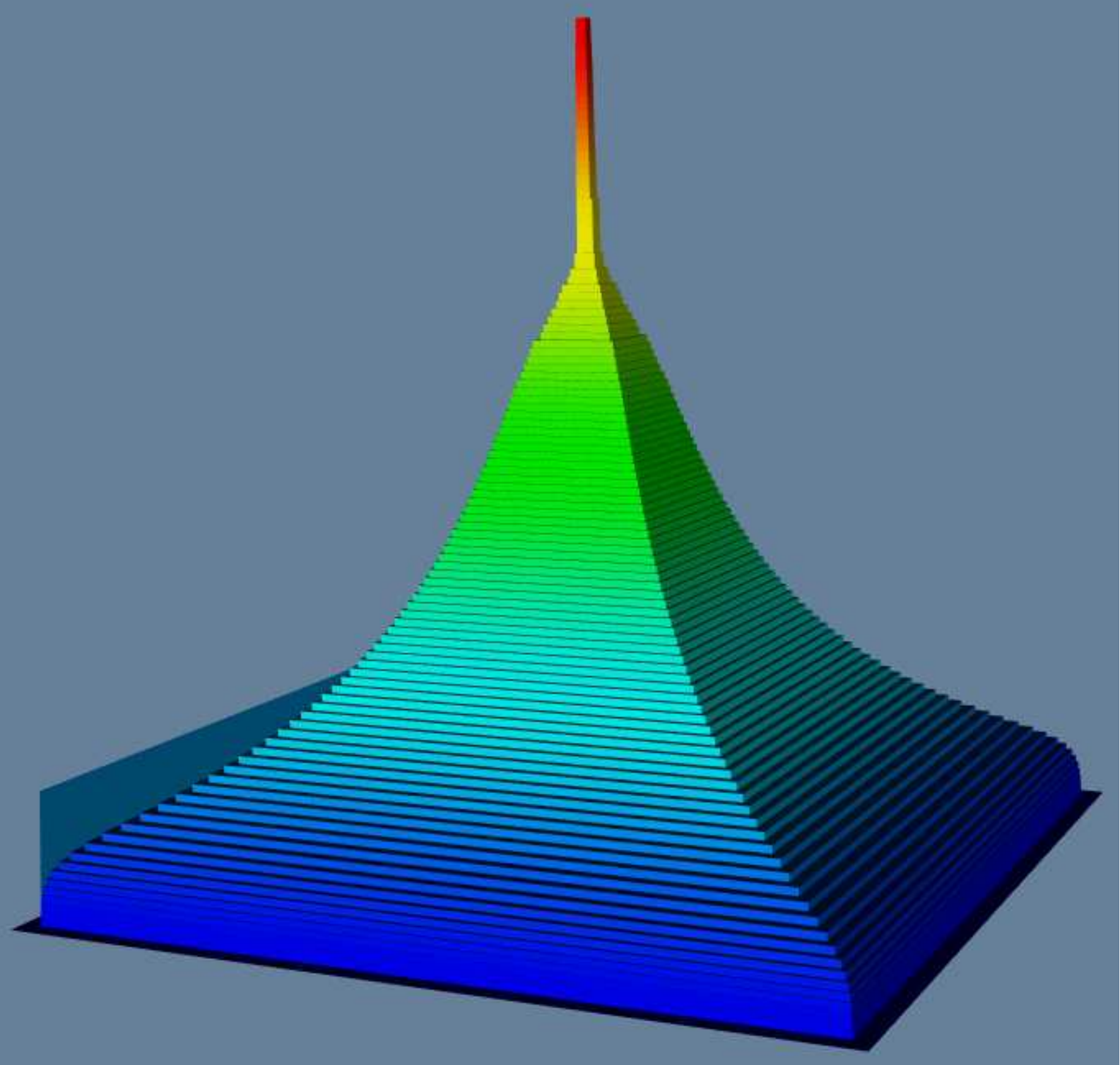}
\label{wiki_kcore_3d}
}
 \quad
\subfigure[Wikipedia Network (K-Truss)]
{
\includegraphics[width=0.20\textwidth]{./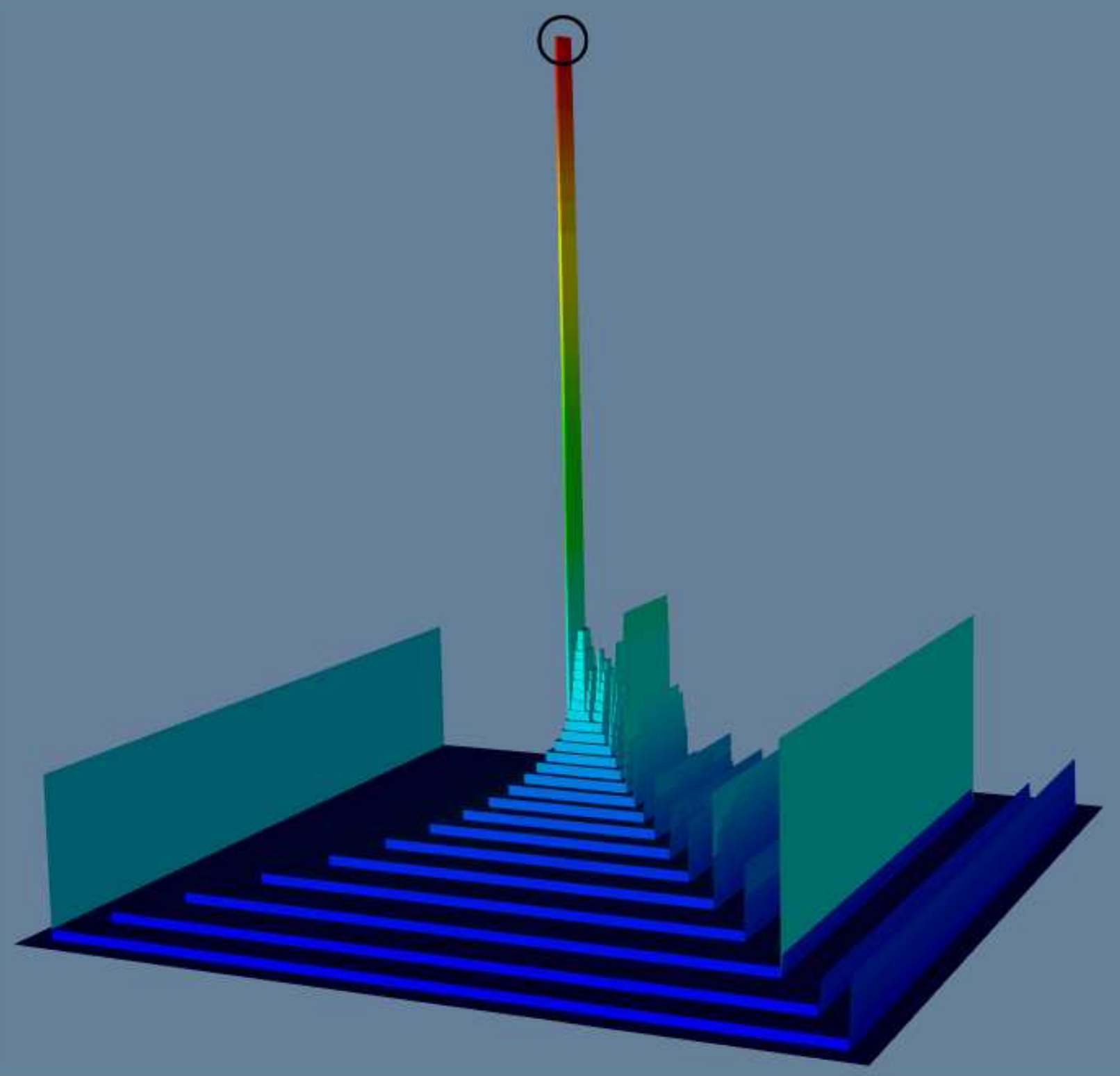}
\label{wiki_tkcore_3d_peak}
}
\subfigure[Cit-Patent (K-Core)]
{
\includegraphics[width=0.20\textwidth]{./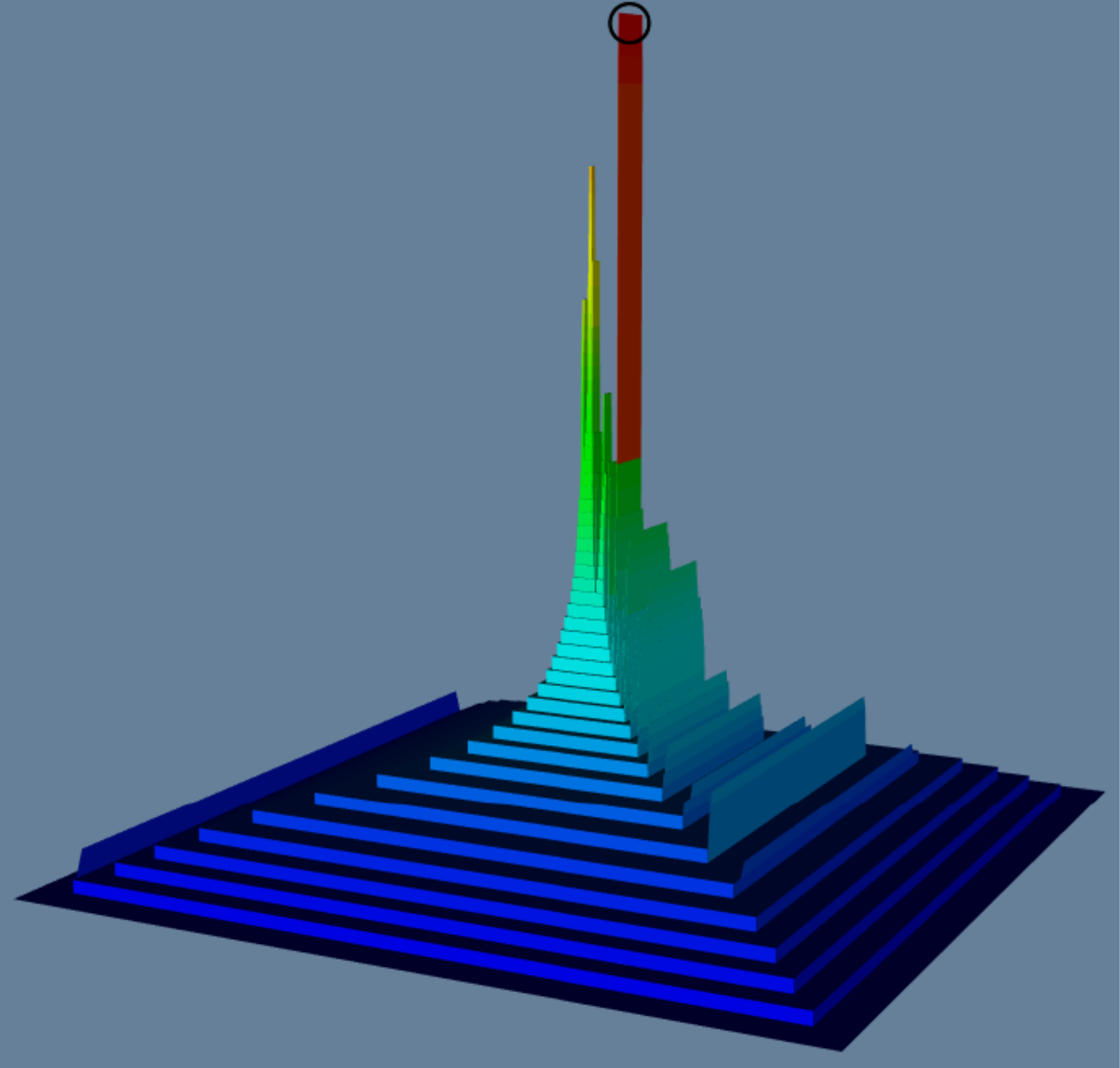}
\label{cit_kcore_3d_peak}
}
 \quad
\subfigure[Cit-Patent (K-Truss)]
{
\includegraphics[width=0.20\textwidth]{./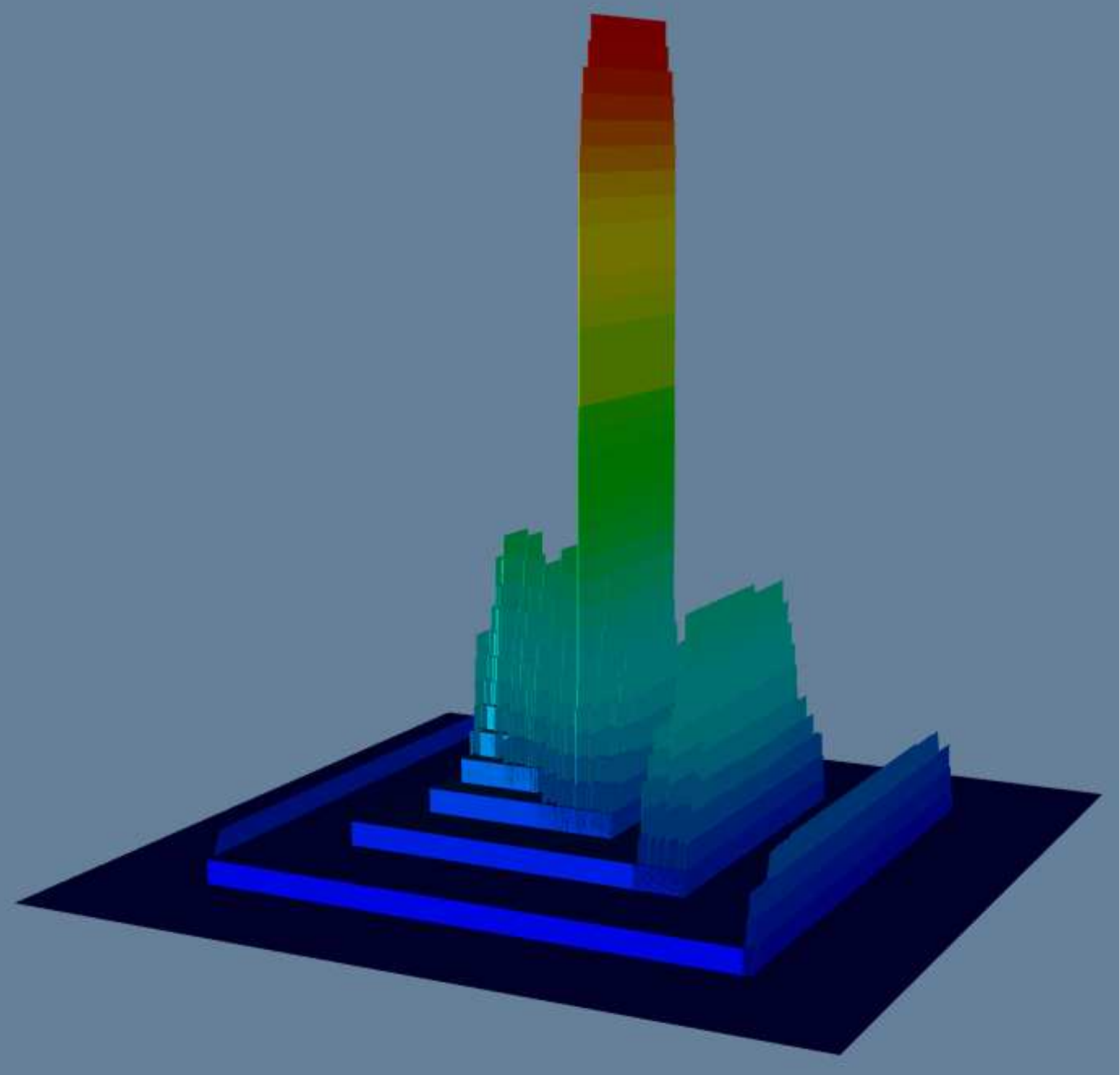}
\label{cit_tkcore_3d}
}
\subfigure[Densest K-Truss of Wiki]
{
\includegraphics[width=0.19\textwidth]{./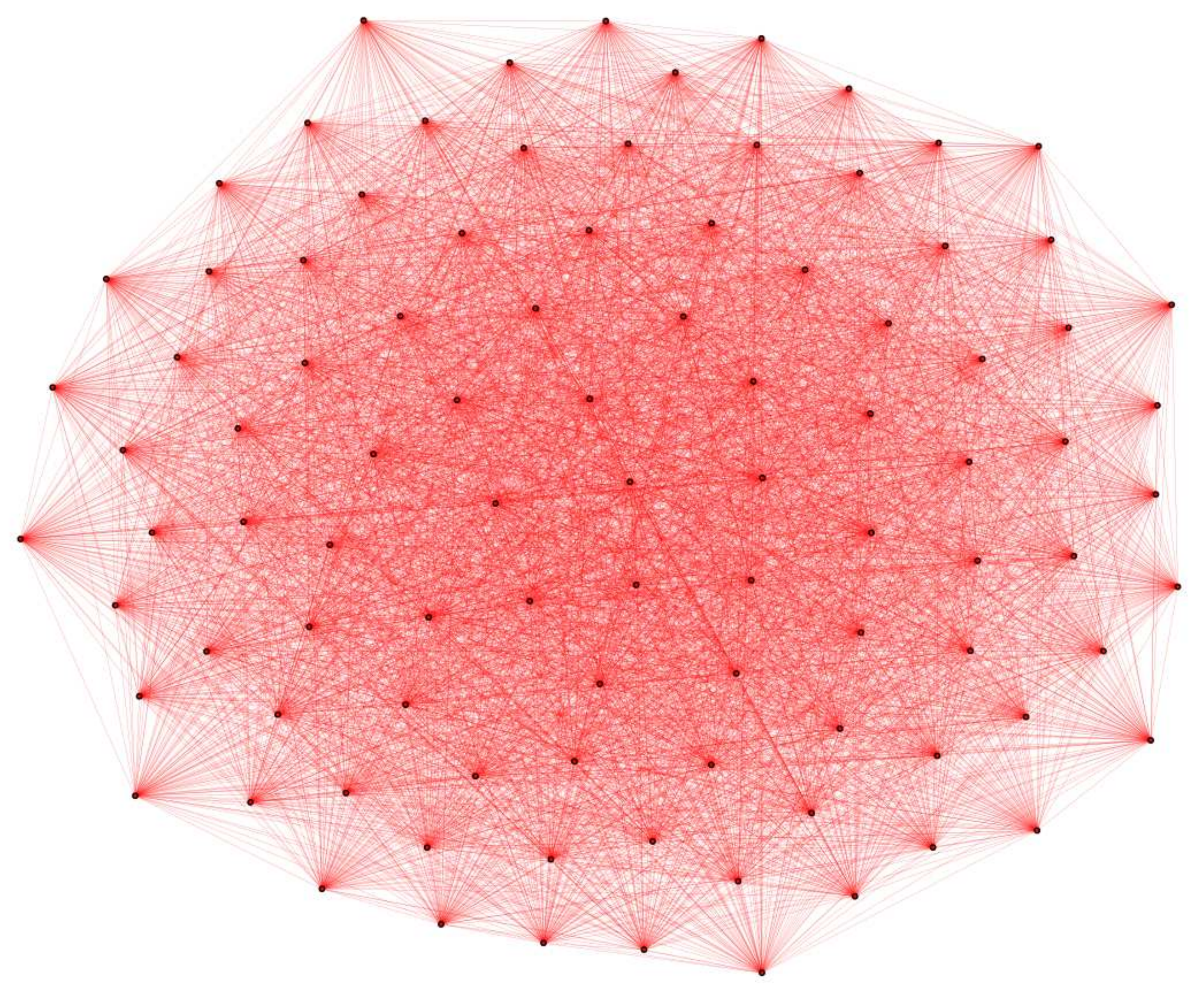}
\label{wiki_tkcore_peak_2d}
}
 \quad
\subfigure[Densest K-Core of Cit]
{
\includegraphics[width=0.19\textwidth]{./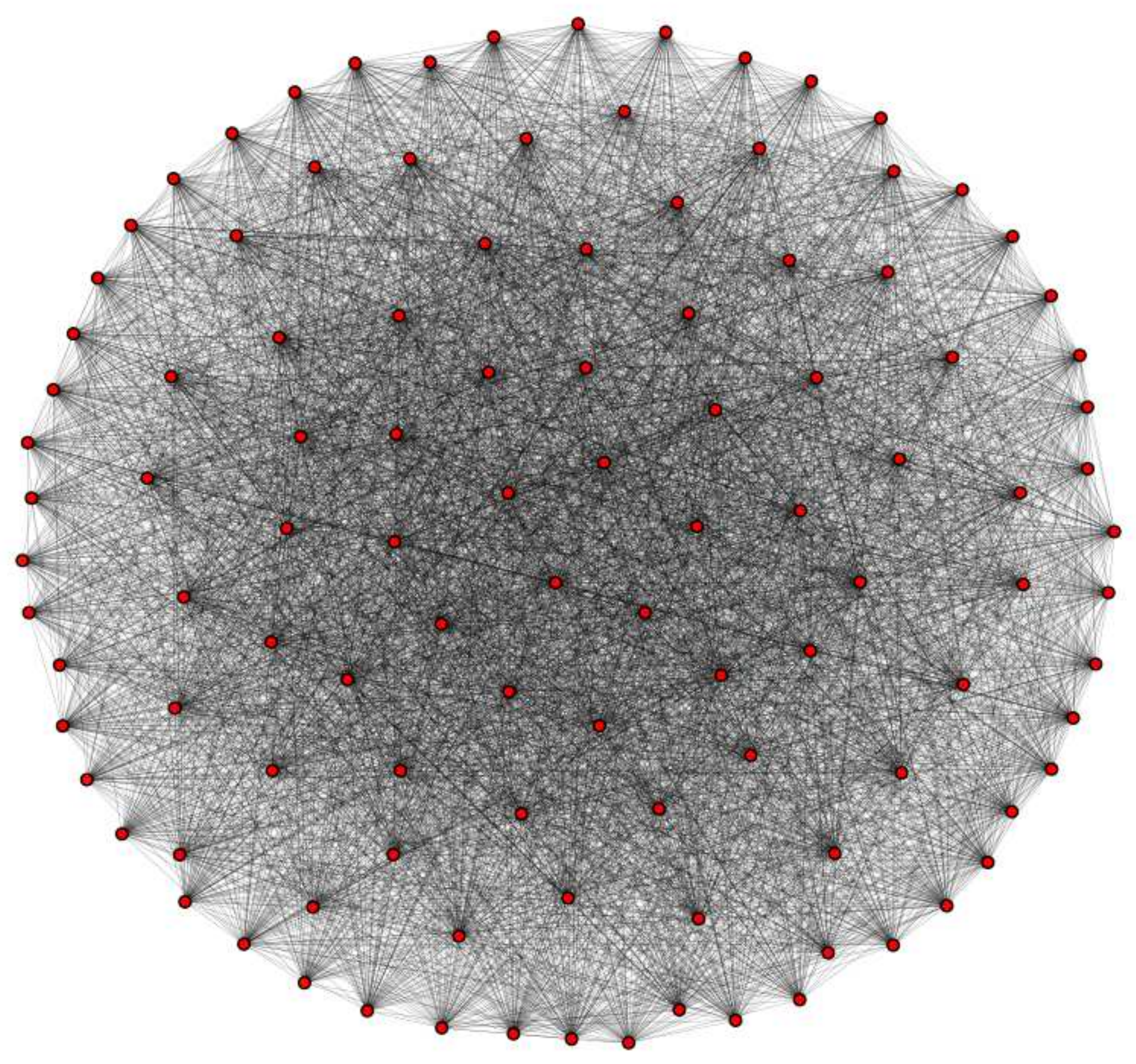}
\label{cit_kcore_peak_2d}
}
\vspace{-0.15in}
\caption[Optional caption for list of figures]{Visualizing K-Cores and K-Trusses}
\label{wiki_cit_vis}
\end{figure}

\vspace{-0.07in}
\subsection{Visualizing Communities and Roles}
\noindent {\bf Visualizing Community Affiliation:}
We next illustrate the flexibility of the terrain visualization scheme
on an important network science task
-- understanding community structures
(see Figure~\ref{dblp4area_communities}).
We use a subset of the DBLP network (DBLP(sub)) for this purpose, comprising authors who publish in
the areas of Machine Learning, Data mining, Databases and  Information Retrieval.
We apply a state-of-the-art overlapping (soft) community detection algorithm~\cite{overlap13} on this
dataset to detect four communities. Each author in the dataset
is affiliated with a community score vector
$(c_{0}, c_{1}, c_{2}, c_{3})$ indicating how much it belongs to each community.
To visualize the affiliation of a particular community $i$, we use $c_{i}$ as the corresponding
scalar measure, and draw the terrain of the network.
$peak_{\alpha}$ in the terrain indicates a connected component in which every vertex has $c_{i} \geq \alpha$.

In Figure~\ref{dblp4areacomm1allcircle}, we visualize community 1,
in which most authors are database researchers.
We highlight two peaks in the circle of Figure~\ref{dblp4areacomm1allcircle},
and zoom in to get a clear picture of the two peaks on the right.
Our tool allows us to easily select authors (vertices) in each peak.
We find that authors in the left peak include researchers
\emph{Donald Kossmann, Divyakant Agrawal, Amr El Abbadi, Michael Stonebraker, Samuel Madden, and
Joseph M. Hellerstein} while authors in the right peak include
\emph{Zheng Chen, Hongjun Lu, Jeffrey Xu Yu, Beng Chin Ooi, Kian-Lee Tan, Qiang Yang and Aoying Zhou}.
Since authors in both peaks have high community scores ($c_{1}$),
they can be seen as core members of the community although from different geographic areas.
The fact that they are in two separate peaks indicates that authors in one peak do not
work with authors in the other peak in the dataset.
Similarly, we also observe subcommunities in another community (Figure~\ref{dblp4areacomm2allcircle}) largely comprising Machine Learning researchers.
We also find two peaks in the terrain,
and authors in the left peak are
\emph{Philip S. Yu, Christos Faloutsos, Michael I. Jordan, Stuart J. Russell, Daphne Koller, Sebastian Thrun,  Wei Fan and Andrew Y. Ng, }
who all work in United States,
while authors in the right peak are \emph{Hang Li, Ji-Rong Wen, Tie-Yan Liu, Lei Zhang, Wei-Ying Ma, Qiang Yang and Yong Yu,} who are researchers in China.

Figure~\ref{dblp4comm_intro} visualizes the four communities together to give an overview of them.

\begin{figure}[!h]
\centering
\subfigure[Community 1]
{
\includegraphics[width=0.14\textwidth]{./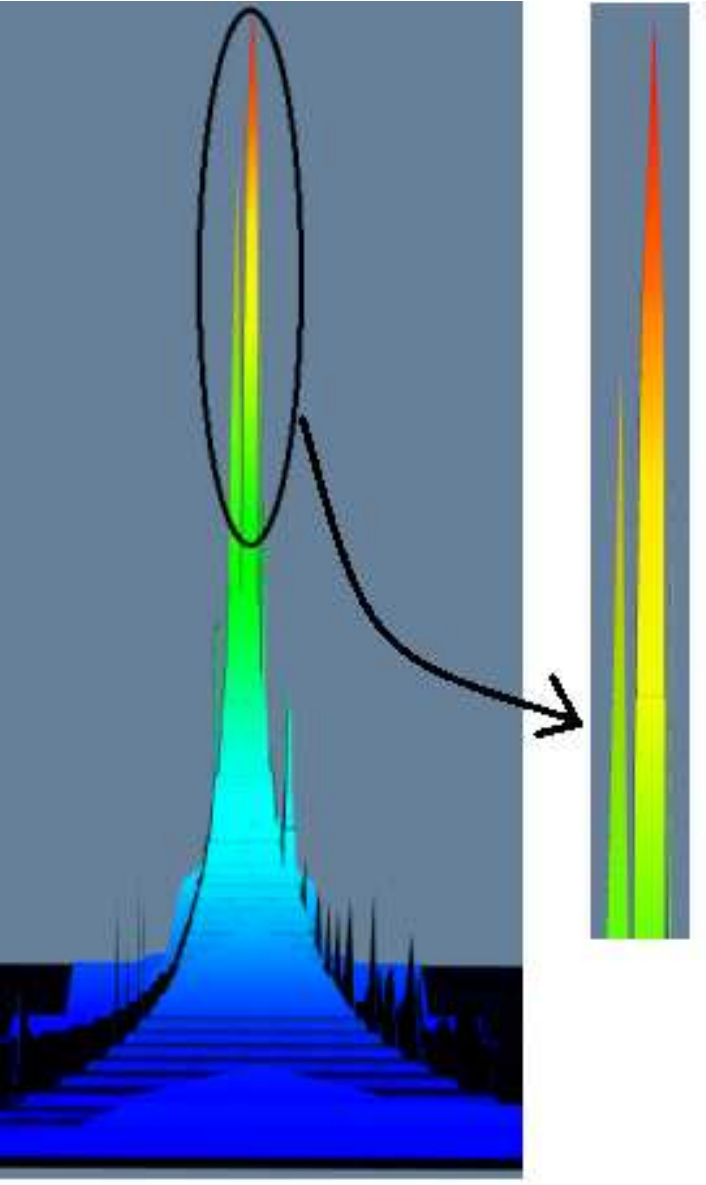}
\label{dblp4areacomm1allcircle}
}
 \quad
\subfigure[Community 2]
{
\includegraphics[width=0.13\textwidth]{./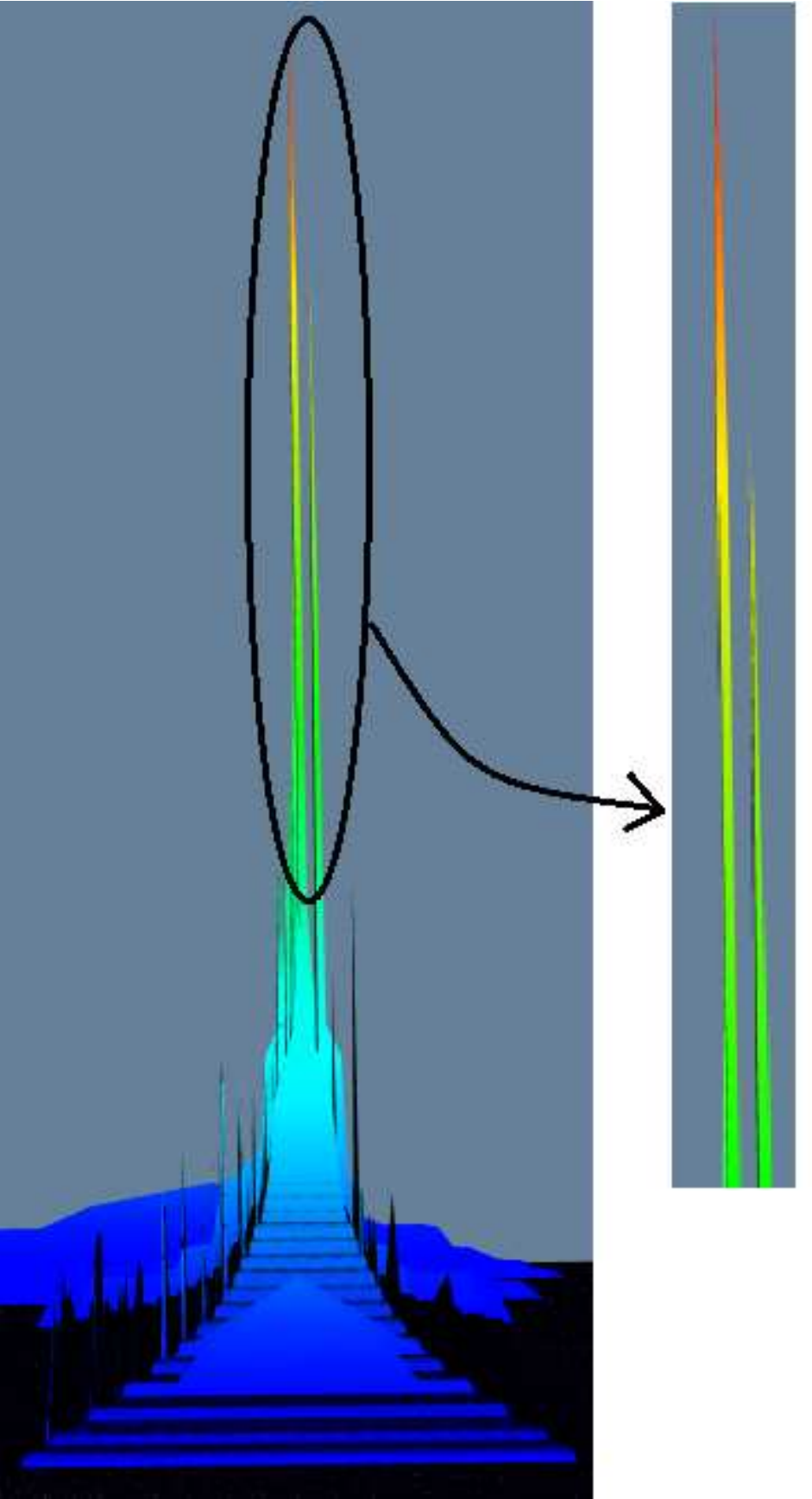}
\label{dblp4areacomm2allcircle}
}

\vspace{-0.10in}
\caption[Optional caption for list of figures]{Visualizing two communities in DBLP network}
\label{dblp4area_communities}
\end{figure}

\noindent {\bf Visualizing Role and Community Affiliation:}
Moving beyond community affiliation, the ability to uncover
the roles of individual nodes (e.g. bridge, hub, periphery and whisker)
within a network or community has received recent interest~\cite{rolx,commrole14}.
Here we examine how one may use the terrain to
visualize the distribution of roles over a community.
We leverage a recent idea to simultaneously detect communities and roles on large scale
networks~\cite{commrole14}. For each vertex in the network the algorithm
outputs a community affinity vector ($c_1, \ldots, c_m$) and a role affinity vector
($r_1, \ldots, r_n$).

As before we focus on a particular community (community $i$) and use the community score ($c_i$) of each vertex
to create terrain visualization.
The peak in Figure~\ref{amazon_3dvis} contains the vertices affiliated with one major community in Amazon co-purchase network.
Instead of re-using the intensity of community score ($c_i$) to color vertices we actually use the dominant role for each
vertex (four roles is typical~\cite{rolx}) to color vertices.
We assign each role a color,
the ``hub vertex'' is green,
the ``dense community vertex'' is blue,
the ``periphery vertex'' is red.
Then we assign the color of roles to the terrain in Figure~\ref{amazon_3dvis}.
From the terrain visualization, we can see that the vertices in the community have 3 roles,
the hub vertex has the highest community score (green top),
and below it is the blue portion, which means the ``hub vertex'' is surrounded by some ``dense community vertices'' in the network.
The red part of the peak indicates that there are some ``peripheral vertices'' attached to the community.
Since the community contains a small number of vertices,
we can draw the details of the community using node-link visualization in Figure~\ref{amazon_2d}.

\begin{figure}[h]
\centering

\subfigure[Roles on a community]
{
\includegraphics[width=0.15\textwidth]{./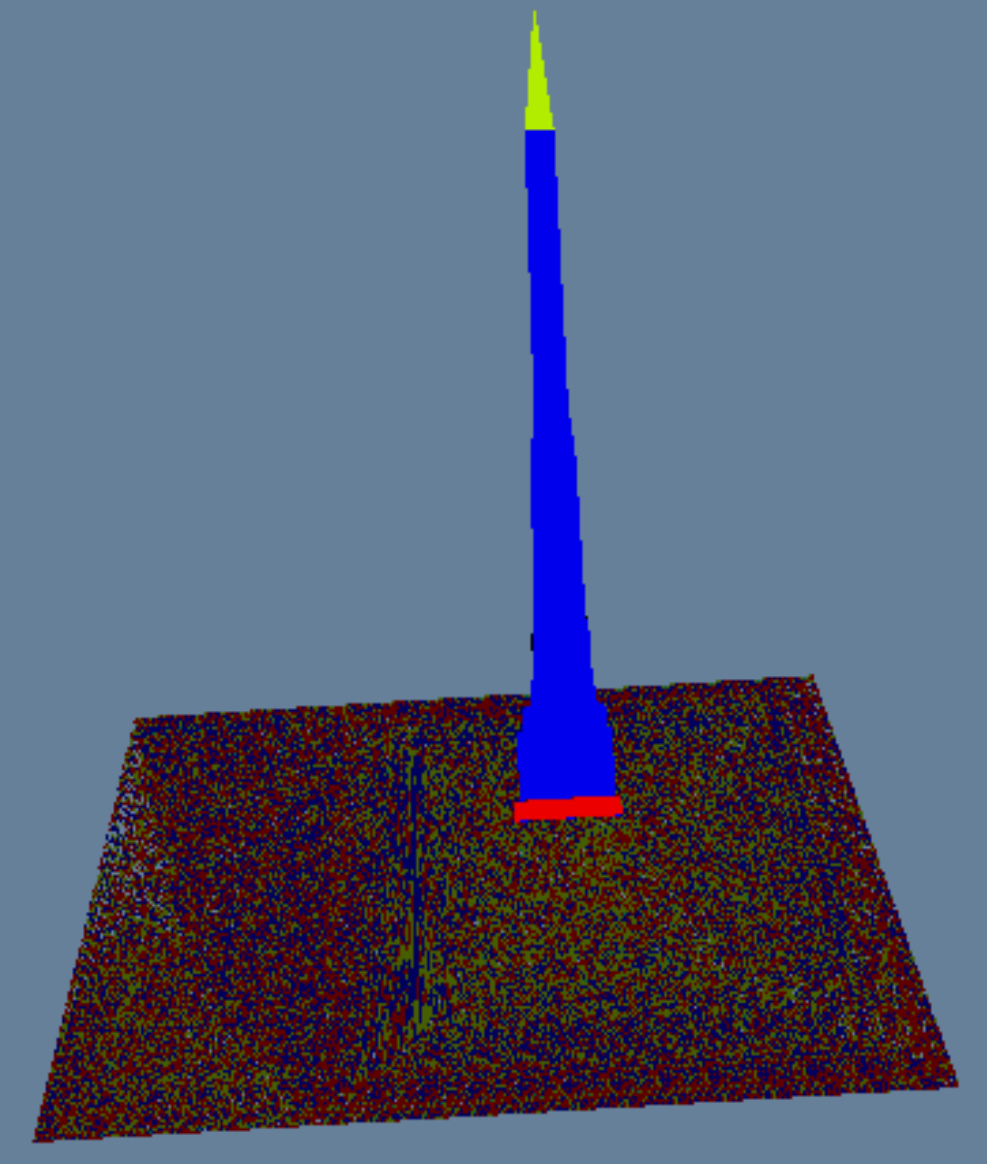}
\label{amazon_3dvis}
}
 \quad
\subfigure[Detail of the community]
{
\includegraphics[width=0.18\textwidth]{./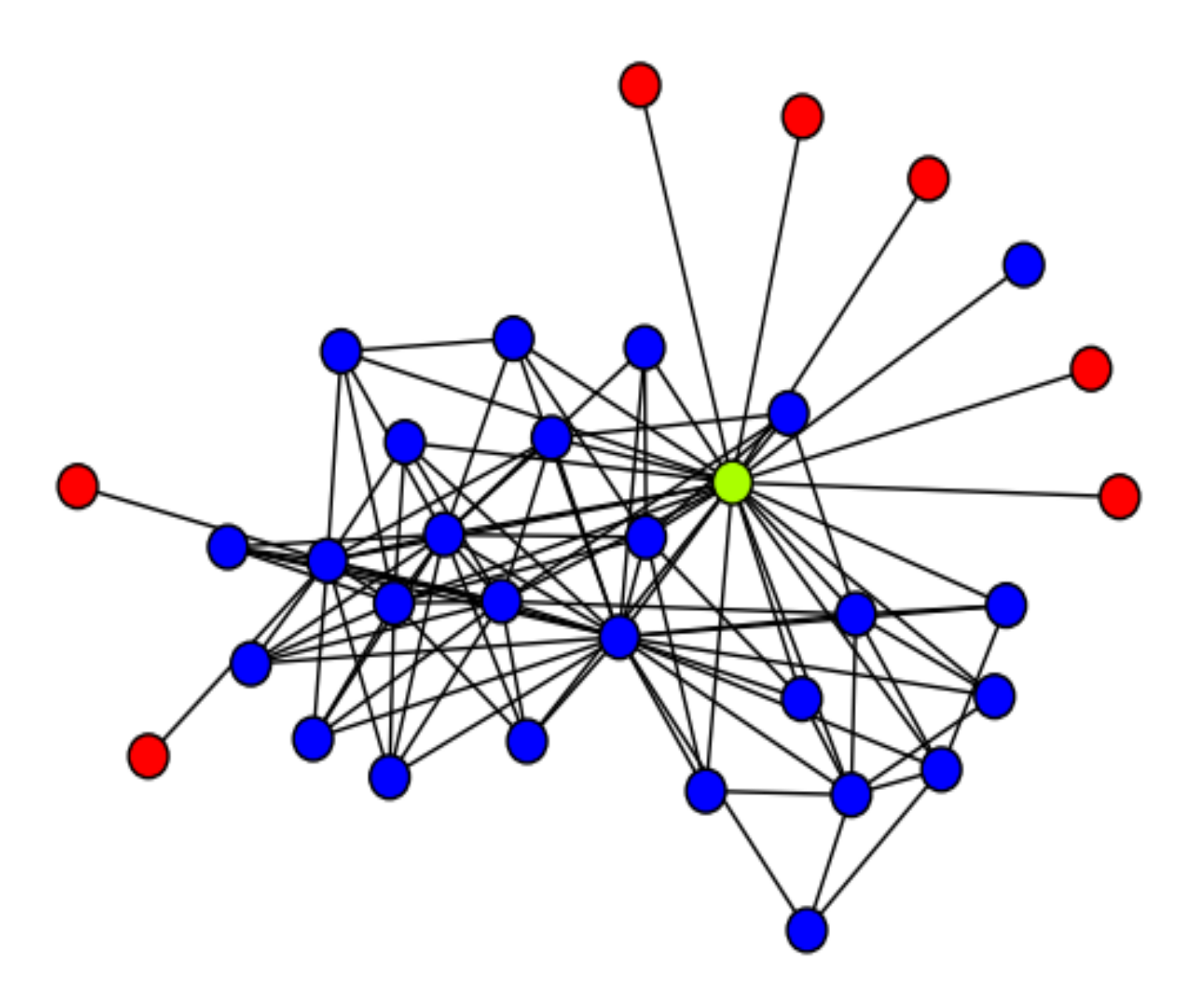}
\label{amazon_2d}
}

\vspace{-0.10in}
\caption[Optional caption for list of figures]{one community of Amazon co-purchase network}
\label{amazon_community}
\end{figure}

All nodes in Figure~\ref{amazon_2d} are books on Amazon, and we list a few of them in Table~\ref{amazon_names}.
The green node is the book which has the highest salesrank and is focused on creativity (hub).
Most of the blue nodes are books about creativity (densely connected),
while the red nodes are books loosely relevant to creativity (periphery).

\vspace{-0.10in}
\begin{table}[!hbtp]
\fontsize{8}{10}\selectfont
\centering
\caption{Book names of some nodes in Figure~\ref{amazon_2d}}
\begin{tabular}{|c|c|}\hline
  Role & Book Name \\\hline
  green & Artist's Way, The PA \\\hline
  blue & Heart Steps: Prayers and Declarations for a Creative Life \\\hline
  blue & Inspirations: Meditations from the Artist's Way \\\hline
  blue & Reflections on the Artist's Way \\\hline
  blue & The Artist's Way Creativity Kit \\\hline
  red & Writing From the Inner Self \\\hline
  red & \pbox{20cm}{Codes of Love : How to Rethink\\ Your Family and Remake Your Life} \\\hline
\end{tabular}
\label{amazon_names}
\end{table}

\subsection{Comparing Different Centralities}
In this section we examine the use of our approach for understanding the relationship of
two different measures of  centrality across various nodes within a network.
We will compare two centralities, degree centrality and betweenness centrality, as two scalar fields, $S_{d}$ and $S_{b}$.

We use the Astro Physics collaboration network, in which each author is a vertex, and each edge indicates a coauthorship between two authors.
We first compute the Local Correlation Index of each vertex (as described earlier),
 and then compute the Global Correlation Index of the network,
$GCI_{S_{d}, S_{b}}$ = 0.89.
This indicates that the overall correlation between degree centrality and  betweenness centrality is highly positive.

In this case, we are interested in those vertices with negative LCI values,
as they could be seen as outliers.
We define an outlier score for each vertex $v$ as follows:
\begin{flalign*}
outlier\_score(v) = - LCI(v)
\end{flalign*}
the vertex with more negative LCI(v) will have higher outlier score.
We use $outlier\_score(v)$ as scalar field to draw the terrain in Figure~\ref{astro_corr2_3d_highlight},
and color the terrain using $S_{d}$ (degree centrality), where red/yellow/blue indicates high/moderate/low degree.
We notice that most high peaks are blue, which indicates that the outlier vertices usually have low degree.

We drill down into the two peaks in the terrain within black and red circles,
and select the vertex at the top of each peak.
Our software allows us to integrate a different visualization method to inspect the selected two vertices.
In this case, we use spring layout to draw the two vertices' 2-hop neighborhoods in Figure~\ref{astro_corr2_peak1_highlight} and Figure~\ref{astro_corr2_peak2_highlight}.
We pick these two vertices specifically because one is in high peak while the other appears to be in a broader but smaller peak.
In both cases the vertices picked have high outlier score, which indicates the correlation between degree centrality and betweenness centrality is
negative in their neighborhood. Actually the two vertices have relatively higher betweenness and lower degree
centrality when compared to many of their neighbors.
From Figure~\ref{astro_corr2_peak1_highlight} and Figure~\ref{astro_corr2_peak2_highlight} we can see the two vertices (in the circles) are like bridge nodes connecting multiple communities.

\begin{figure}[!t]
  \begin{tabular}[c]{cc}
    \subfigure[Terrain of Astro Network based on Outlierness]
    {
    \includegraphics[width=0.20\textwidth]{./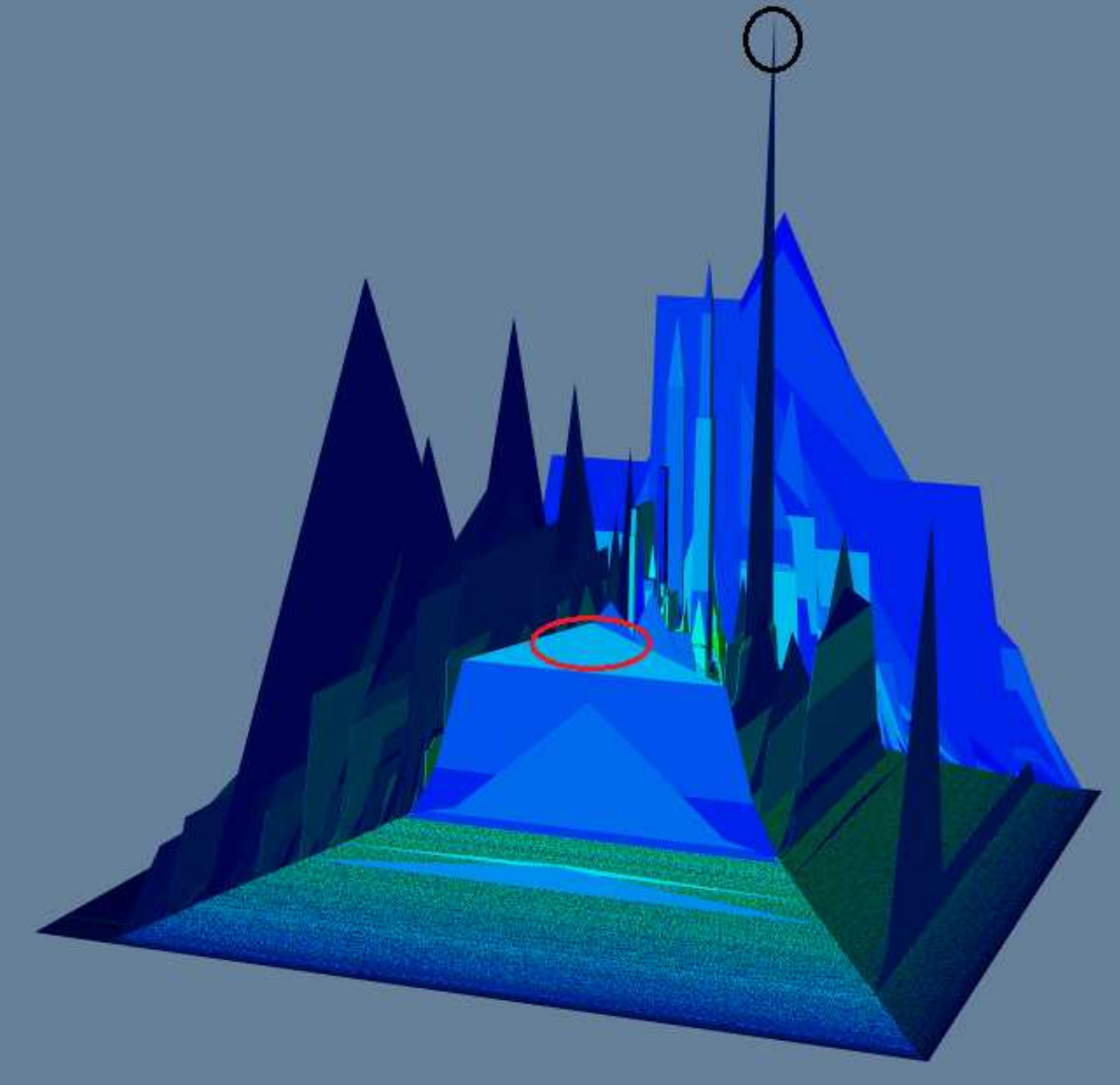}
    \label{astro_corr2_3d_highlight}
    }
    &
      \begin{tabular}[c]{c}
        \subfigure[2-hop neighborhood of vertex in black circle in Figure~\ref{astro_corr2_3d_highlight}]
        {
        \includegraphics[width=0.17\textwidth]{./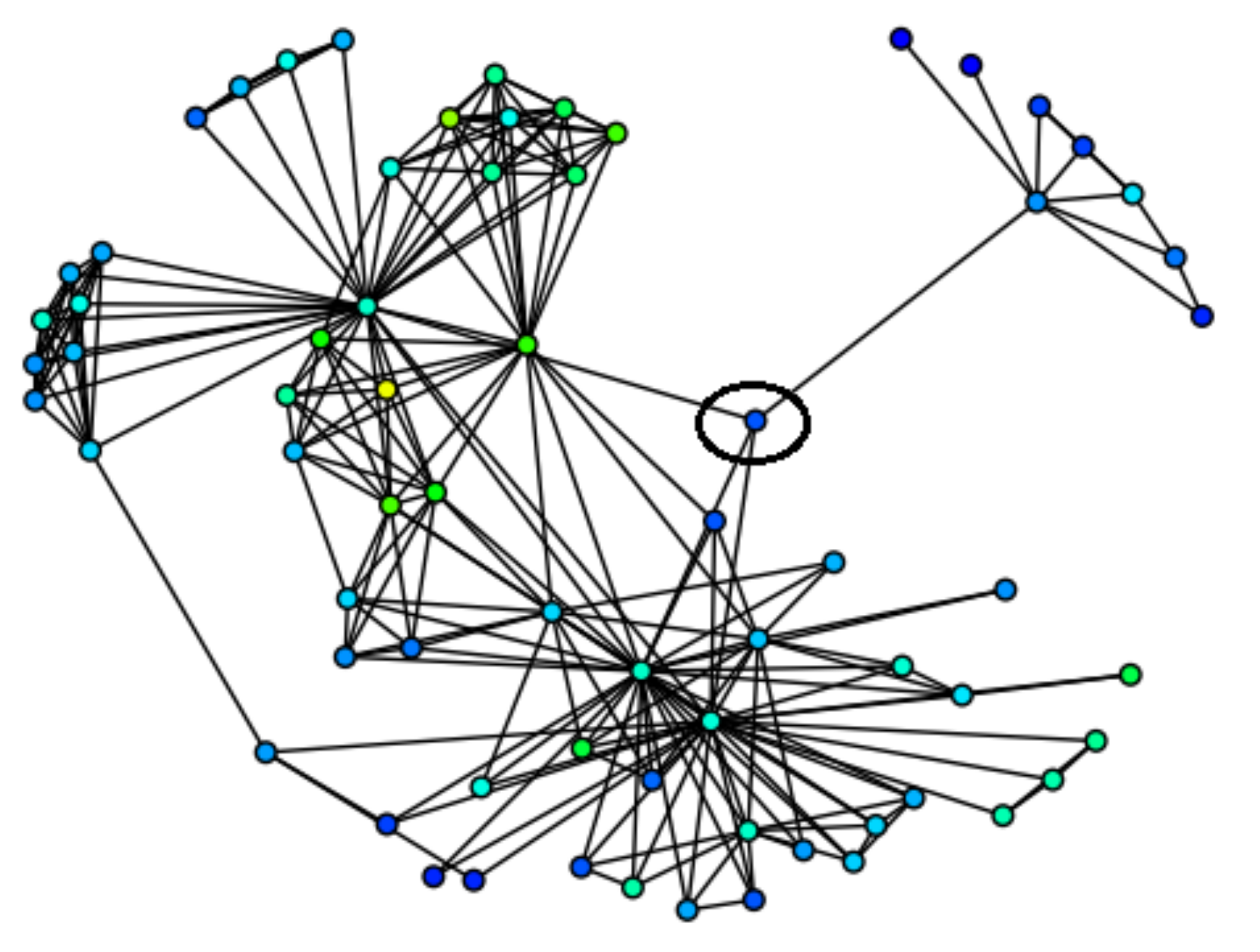}
        \label{astro_corr2_peak1_highlight}
        }\\
        \subfigure[2-hop neighborhood of vertex in red circle in Figure~\ref{astro_corr2_3d_highlight}]
        {
        \includegraphics[width=0.17\textwidth]{./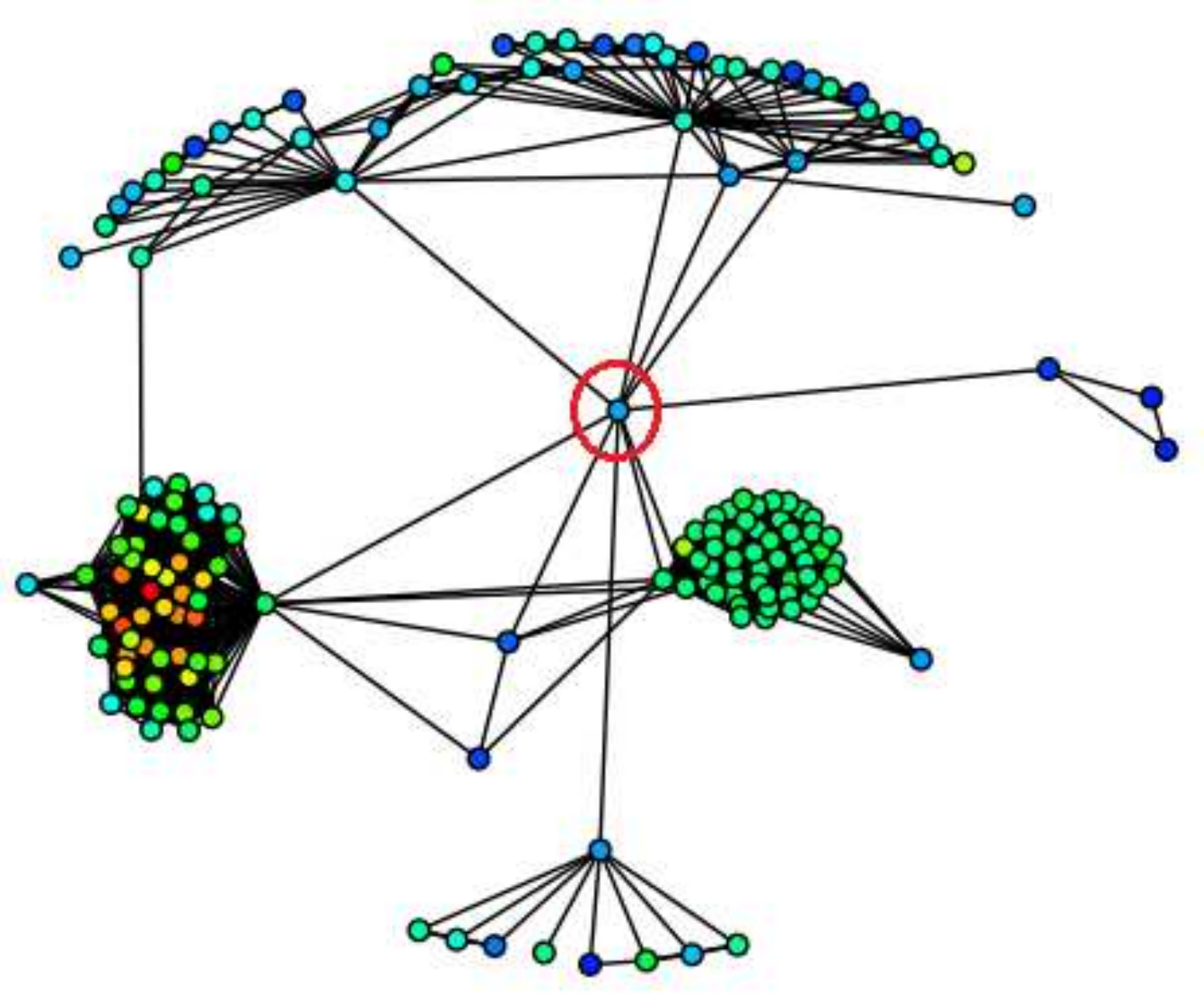}
        \label{astro_corr2_peak2_highlight}
        }
      \end{tabular}
  \end{tabular}

\vspace{-0.15in}
\caption[Optional caption for list of figures]{Compare Degree and Betweenness Centralities}
\label{astro_corr}
\end{figure}

\subsection{Query Result Understanding}
Our terrain visualization can also be extended to visualize the results of SQL queries.
Here we consider a database of plant genus curated by OSU's Horticulture department. In the interests of space, we focus on a common query posed to this dataset, specified by a domain expert,
and model the output (a 5 dimensional materialized table)
as a nearest-neighbor (NN) graph (distance measure and threshold again specified by domain expert) and then visualize the graph
using the terrain visualization (Figure~\ref{plant}).
Color represents different plant genus (3 types in query output), height is a scalar value representing
the values of two of the selected attributes from the query result (attributes 1 and 2).
While details of the genus are omitted for expository simplicity,
the query result visualization clearly conveys the following:
i) the result set from the SQL query contains
three plant genus (red, green and blue) of which the blue genus  is well separated from the other two; ii) It is also clear that the (red) genus is closer to the (green) genus and is in some senses contained within it, i.e. more central,
from a connectivity standpoint (within the NN graph); iii) finally attribute 1 demonstrates greater genus separability (variance in terrain heights across genus) on the subset of data produced by this particular query. While preliminary in nature, such visualizations can potentially allow the domain expert to better understand the coherency of the output w.r.t the selection predicates (attributes) of the query.

\begin{figure}[!h]
\centering
\subfigure[Attribute 1 as Scalar value]
{
\includegraphics[width=0.18\textwidth]{./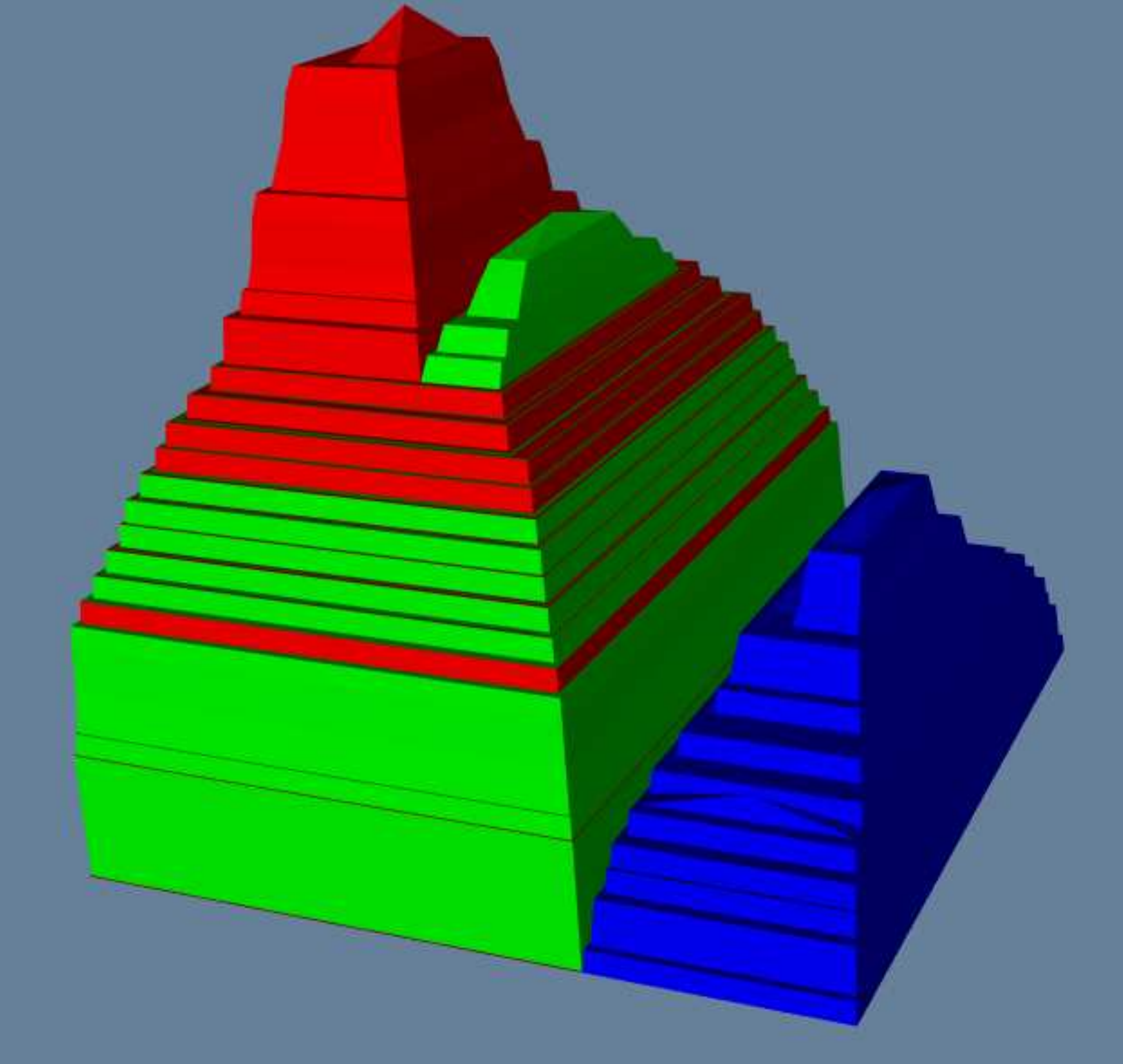}
\label{betweenness_terrain}
}
 \quad
\subfigure[Attribute 2 as Scalar value]
{
\includegraphics[width=0.18\textwidth]{./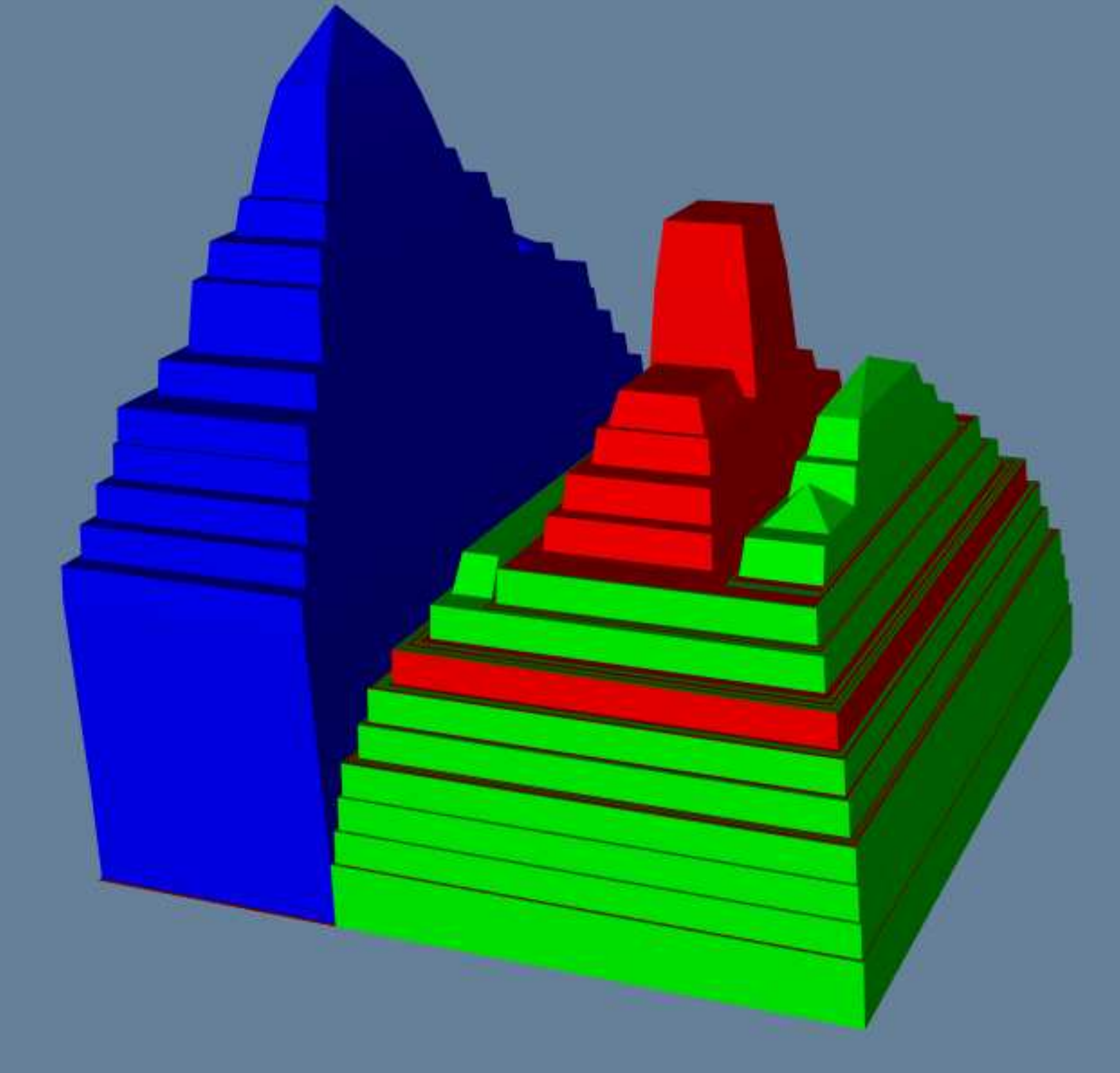}
\label{betweenness_color}
}
\vspace{-0.10in}
\caption[Optional caption for list of figures]{Terrain Visualization of Plant Dataset Query Result}
\label{plant}
\vspace{-0.05in}
\end{figure}

\section{User Study}
\label{user_sec}

In this section we briefly report on a user study which evaluates the effectiveness of our approach.
Participants were recruited
under OSU IRB (2015B0249) protocol, and follows recommendations
of the Nielsen Norman Group (NNG).  Ten participants, a sufficient number per NNG recommendations,
 were recruited, for {\it each} of the following tasks (no overlap).
Each task was designed to require participants' to solve a real world
problem using multiple visualization platforms.
Participants were recruited from across the university campus, and were largely drawn from quantitatively inclined specializations, due to the nature of the three tasks and as a representation of the domain scientists that may use such tools (e.g. Mathematical and Physical Sciences, Psychology, Engineering, Medicine, Finance). Each participant was given a training tutorial before their prescribed task and were allowed to familiarize themselves with the functionality of the three tools used in the study.

\subsection{Tasks}

\noindent
\emph{\textbf{Task 1:} Identify the densest K-Core in the graph.}\\
The users participating in the first task were
asked  to identify the densest K-Core in each of three graphs (GrQc, PPI, DBLP).
For each graph, we pre-computed the K-Core value (KC(v)) of each vertex, allowing
participants to visualize each graph using following three different visualization methods:\\
(1) Our terrain visualization using K-Core value as scalar value.\\
(2) LaNet-vi which is a K-Core visualization tool~\cite{kcore_vis}.\\
(3) OpenOrd which is a multilevel graph layout method~\cite{open_ord} and uses
  vertex's color to represent its K-Core value.\\
Task 1 is meaningful to data mining and network science researchers who want to explore K-Cores in a graph.
The densest K-Core usually indicates a significant group of closely related nodes in the graph, such as an important community in a social network.

\noindent
\emph{\textbf{Task 2:} Identify the second densest K-Core in the graph that are not connected to the
densest K-Core.}\\
In this task we provided participants with information on the densest K-Core in each dataset, and
asked them to find the next densest K-core which is disconnected from the densest K-core. This is a slightly
nuanced, and more complicated variant of Task 1.
We note that when identifying the second K-Core for analysis
simply choosing the second densest K-Core could be meaningless,
because it might be heavily overlapped with the densest K-Core, and the two K-Cores are actually the same group of closely related nodes.
It is more meaningful to identify the densest K-Core in the graph that are not connected to the previously detected one, because such a K-Core would indicate a separate module-of-interest.

\noindent
\emph{\textbf{Task 3:}
The third task was the most complicated. Participants were
asked to solve a problem involving multiple scalar field visualization on a single (Astro) dataset.}\\
For the terrain visualization method, we provided the
betweenness centrality of each vertex as the first scalar value to generate the terrain,
and provided the degree centrality of each vertex as the second scalar value to color the terrain (Figure~\ref{betweenness_terrain}).
In the visualization generated by OpenOrd, vertex color was used to indicate betweenness centrality, and vertex size to indicate degree centrality (Figure~\ref{betweenness_color}).
We ask the ten participants to  determine whether the betweenness centrality and degree centrality are positively or negatively correlated in the Astro dataset.
In Task 3 we do not compare with LaNet-vi, because it is specifically designed for visualizing K-Cores,
and is not trivially adaptable to visualize two centralities.

\subsection{Results}

\begin{figure*}[ht]
\centering
\subfigure[GrQc (Terrain)]
{
\hspace*{-0.40cm}\includegraphics[width=0.18\textwidth]{./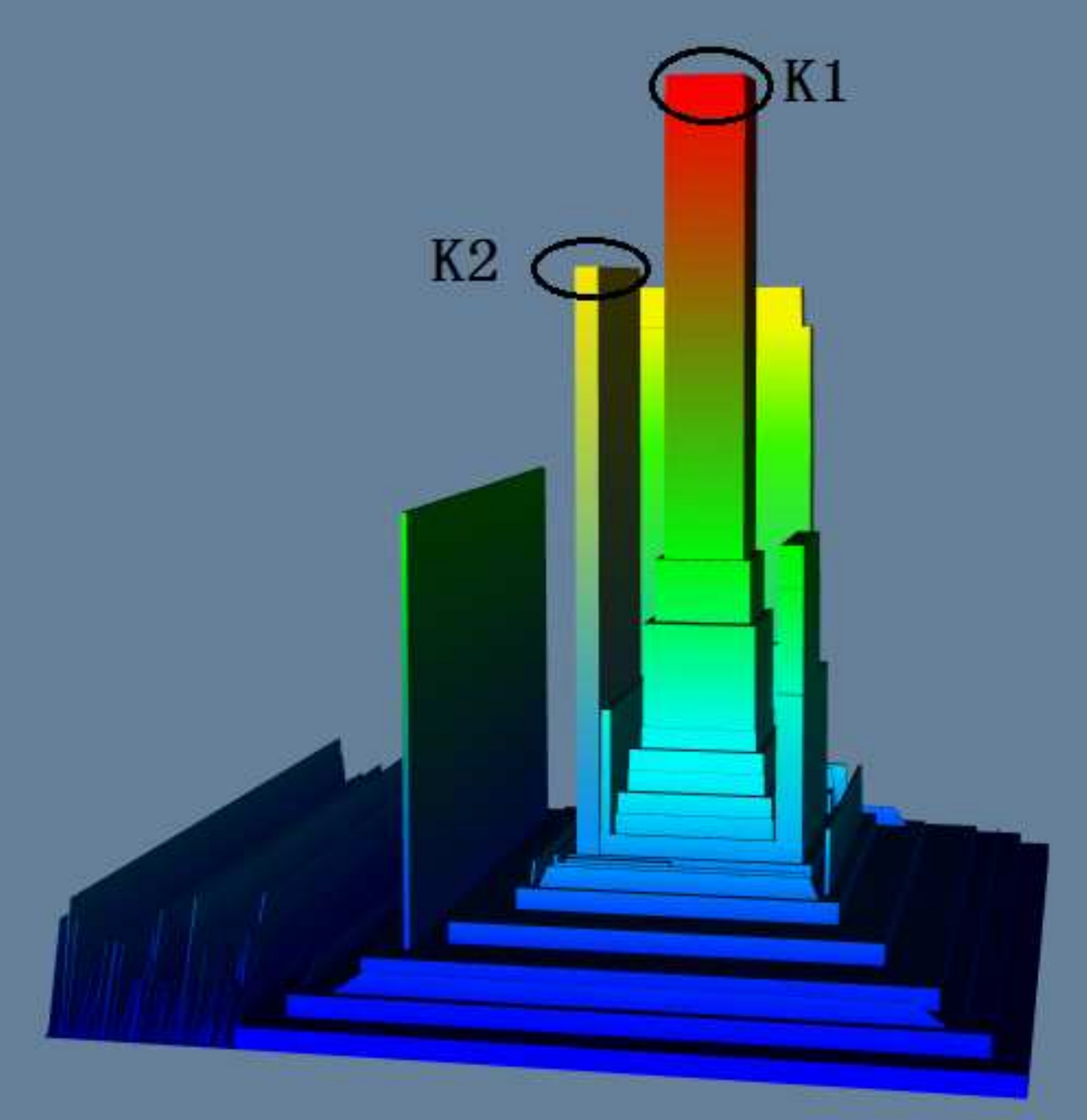}
\label{GrQc_terrain}
}
 \quad
\subfigure[GrQc (LaNet-vi)]
{
\hspace*{-0.20cm}\includegraphics[width=0.21\textwidth]{./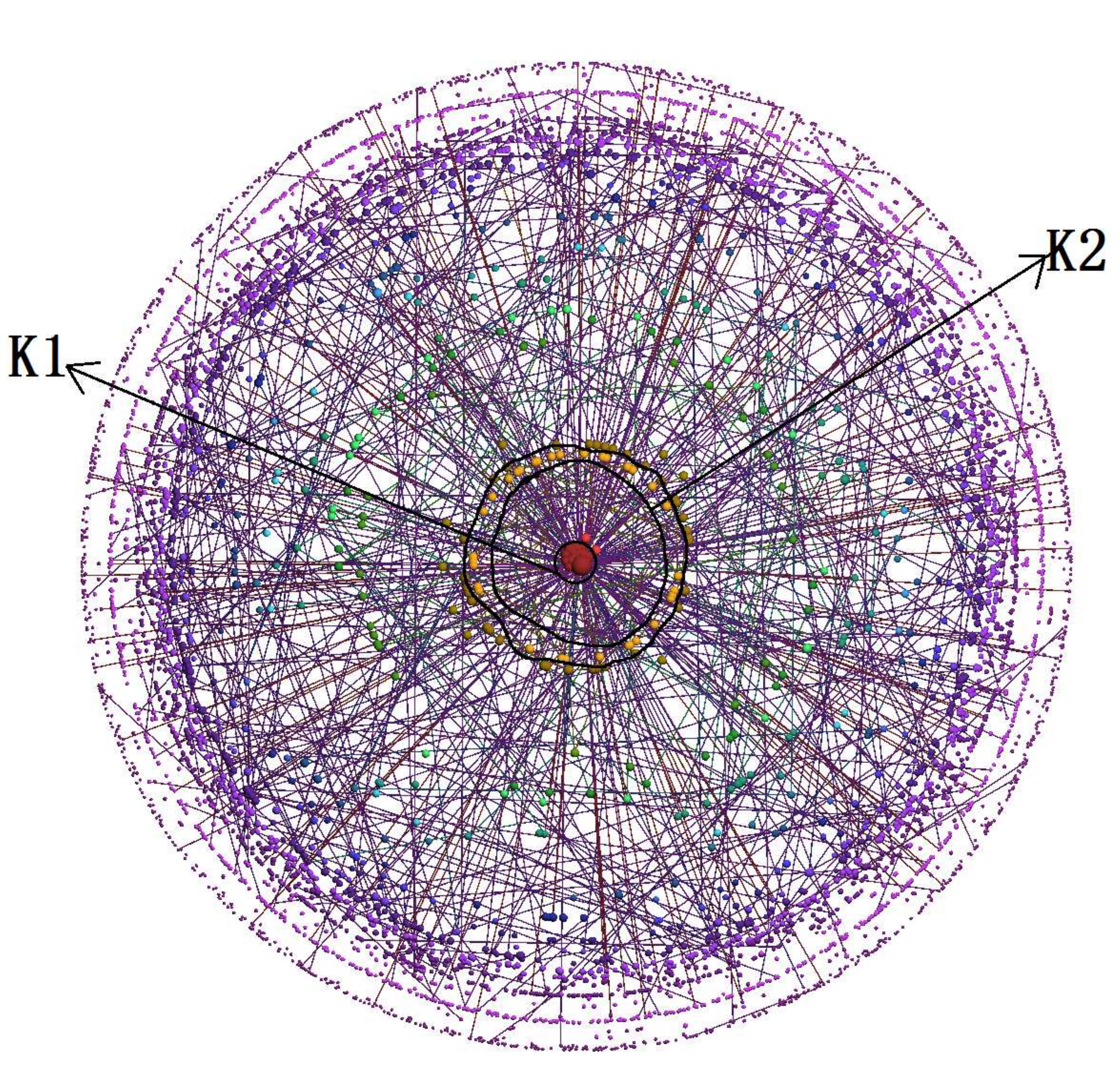}
\label{GrQc_kcore}
}
 \quad
\subfigure[GrQc (OpenOrd)]
{
\hspace*{-0.0cm}\includegraphics[width=0.21\textwidth]{./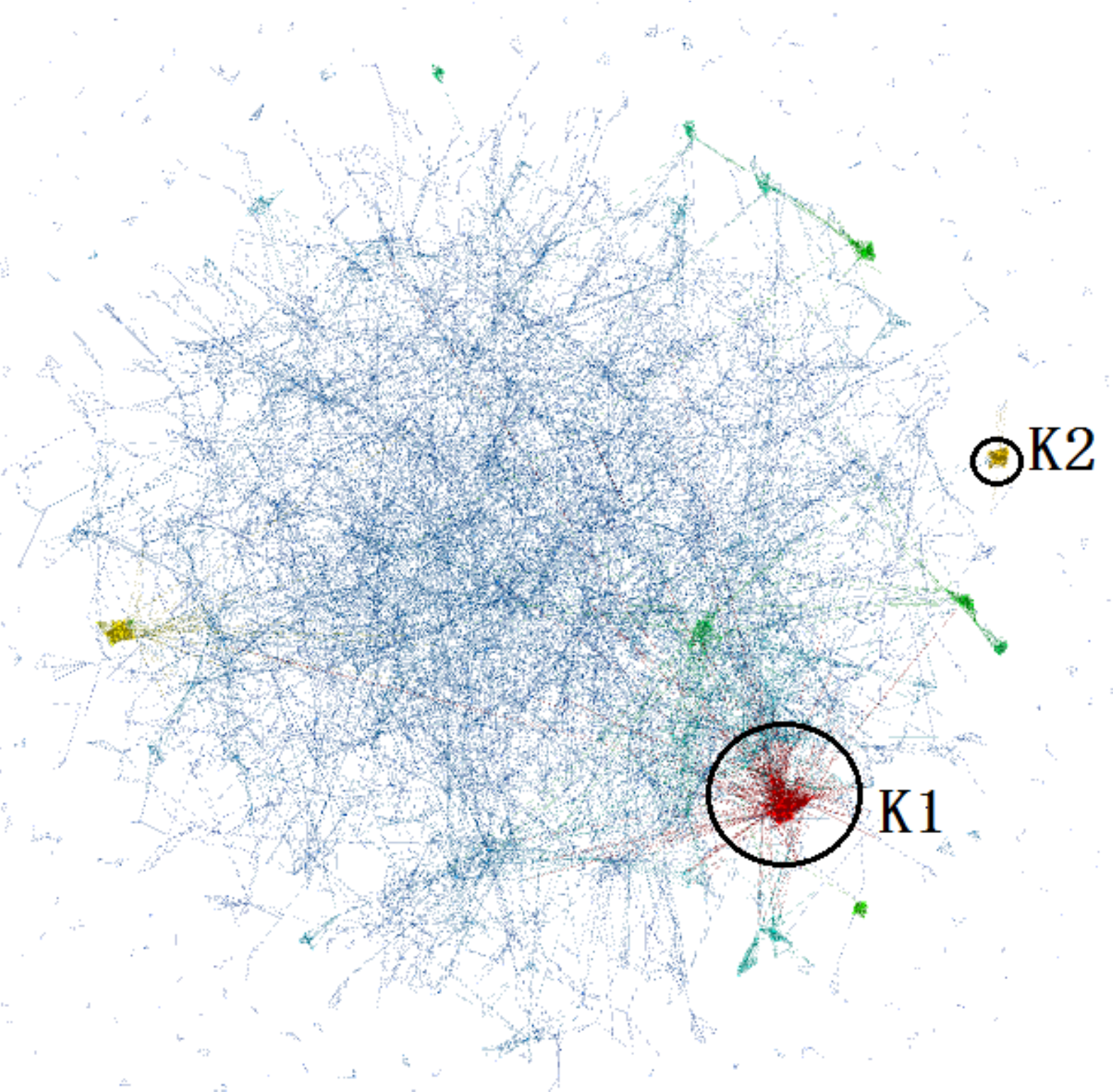}
\label{GrQc_OpenOrd}
}

\subfigure[PPI (Terrain)]
{
\hspace*{-0.0cm}\includegraphics[width=0.18\textwidth]{./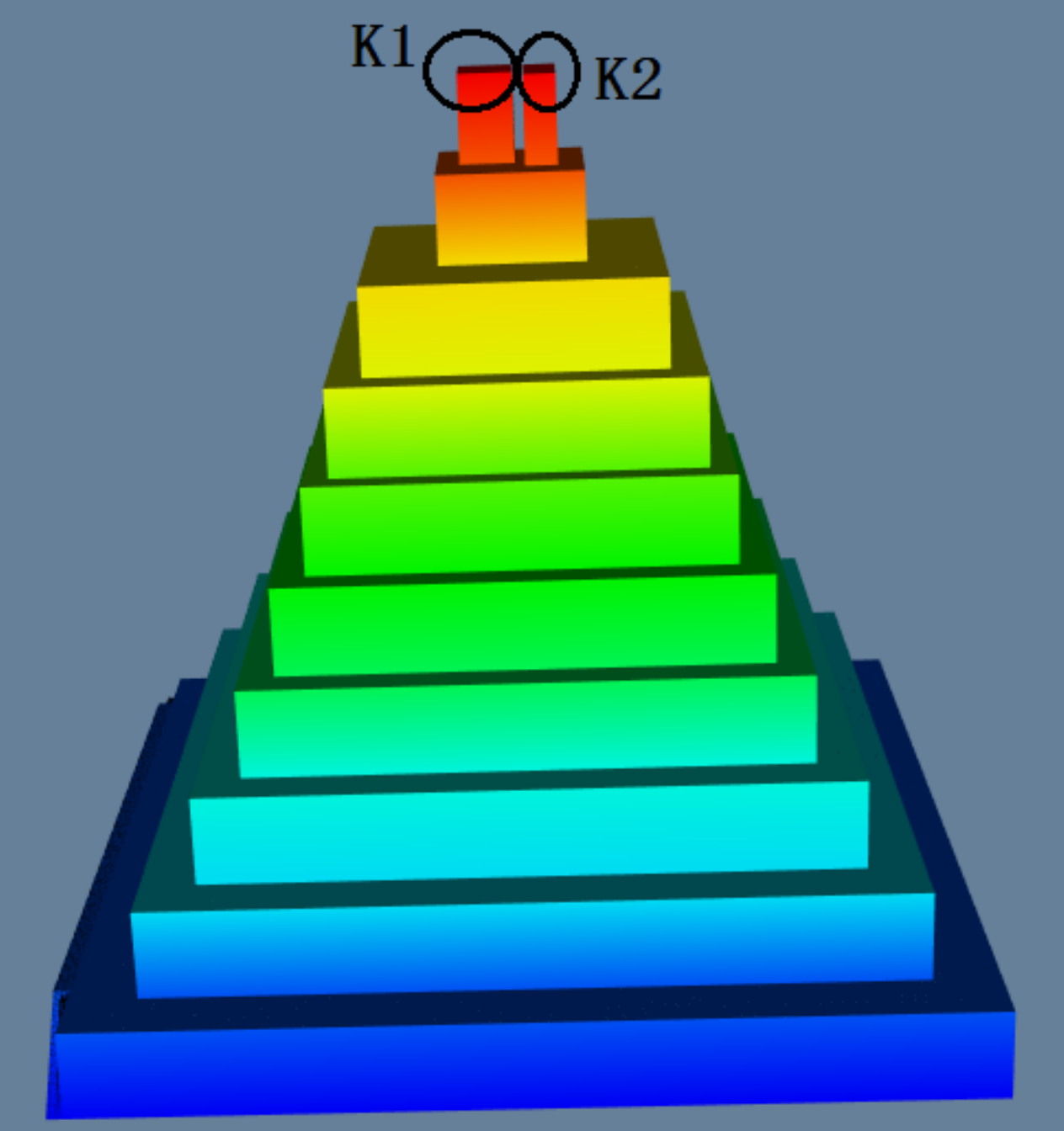}
\label{gene_terrain}
}
 \quad
\subfigure[PPI (LaNet-vi)]
{
\hspace*{-0.0cm}\includegraphics[width=0.22\textwidth]{./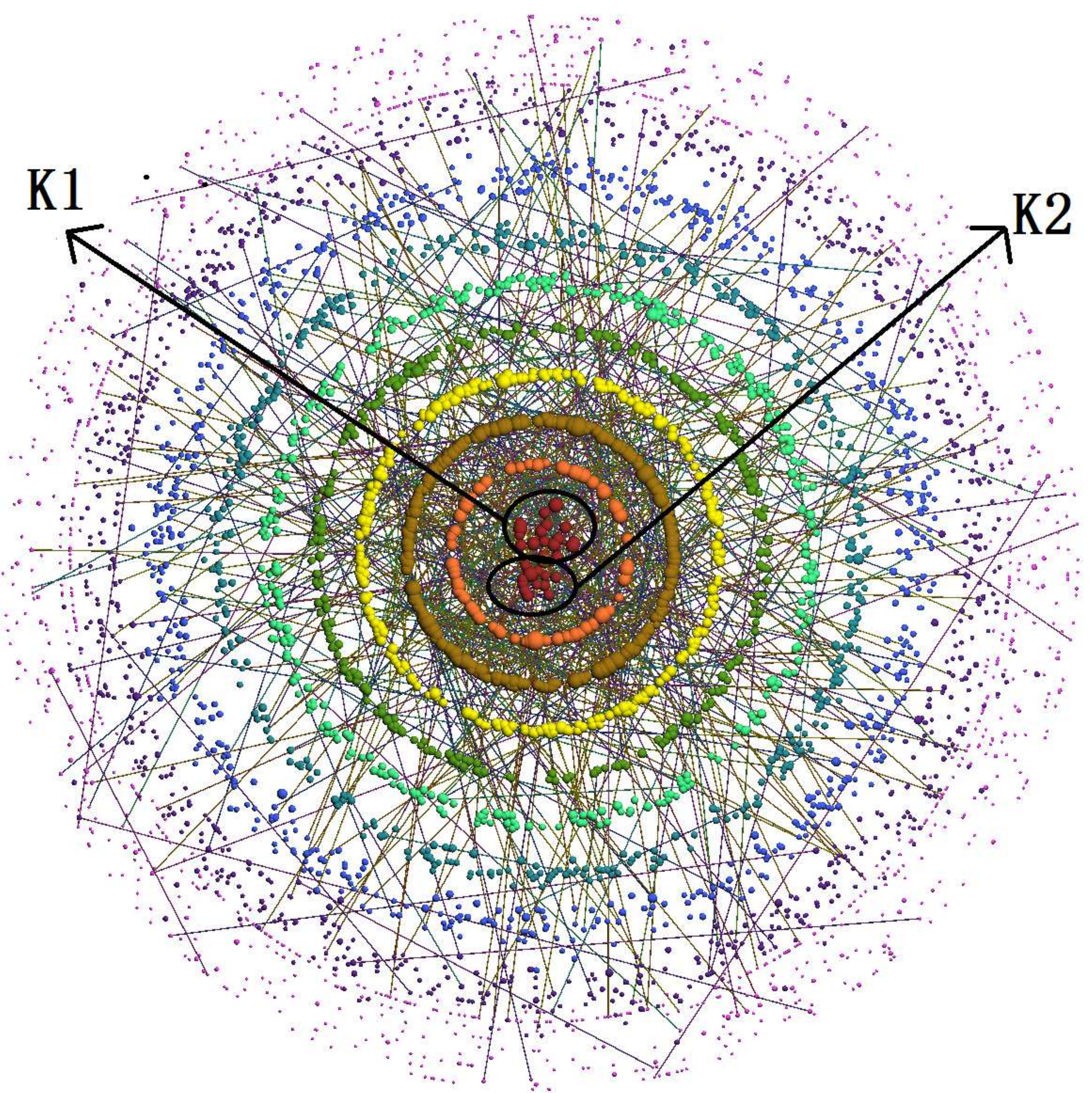}
\label{gene_kcore}
}
 \quad
\subfigure[PPI (OpenOrd)]
{
\hspace*{-0.0cm}\includegraphics[width=0.21\textwidth]{./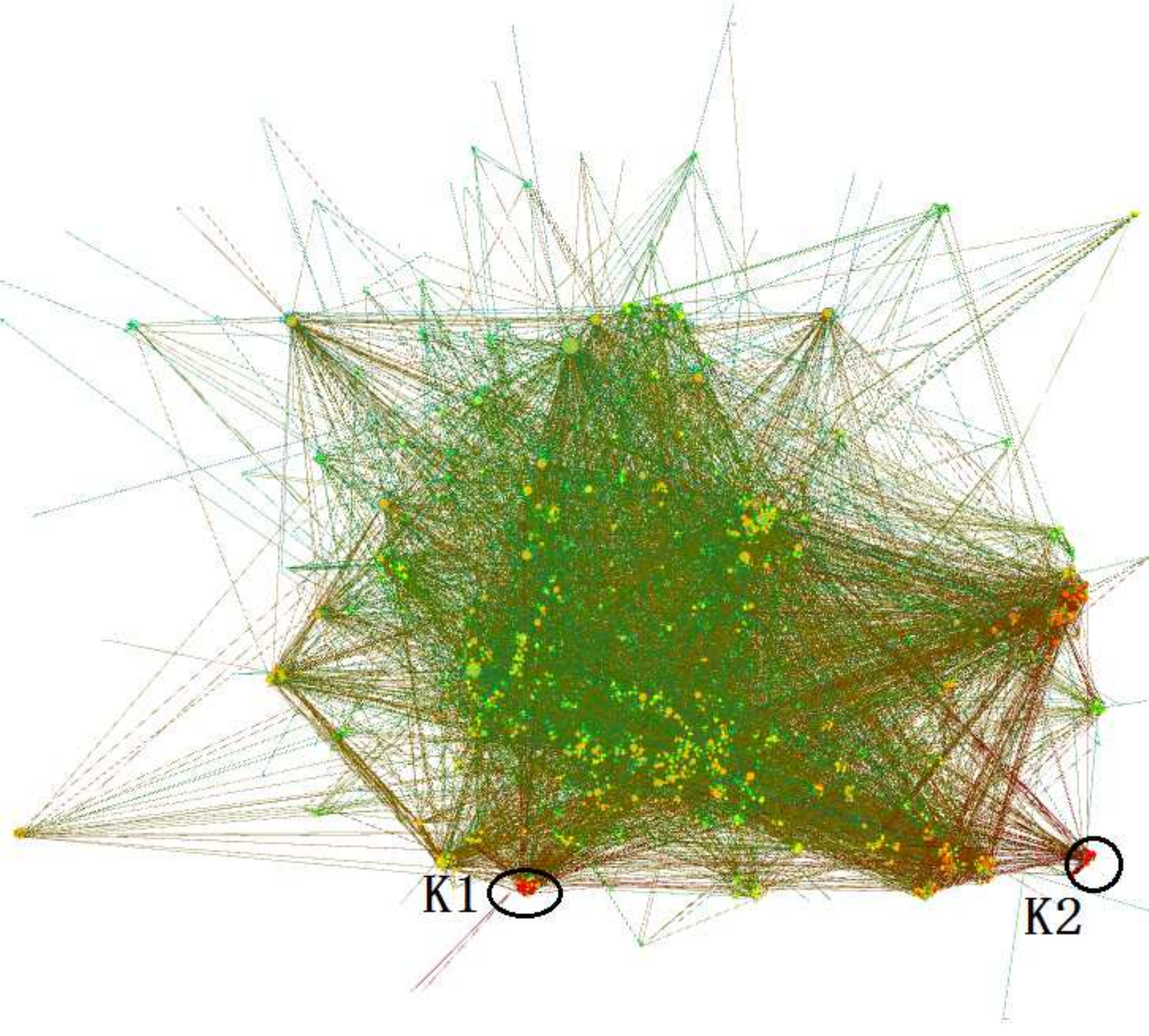}
\label{Gene_OpenOrd}
}

\subfigure[DBLP (Terrain)]
{
\hspace*{-0.0cm}\includegraphics[width=0.18\textwidth]{./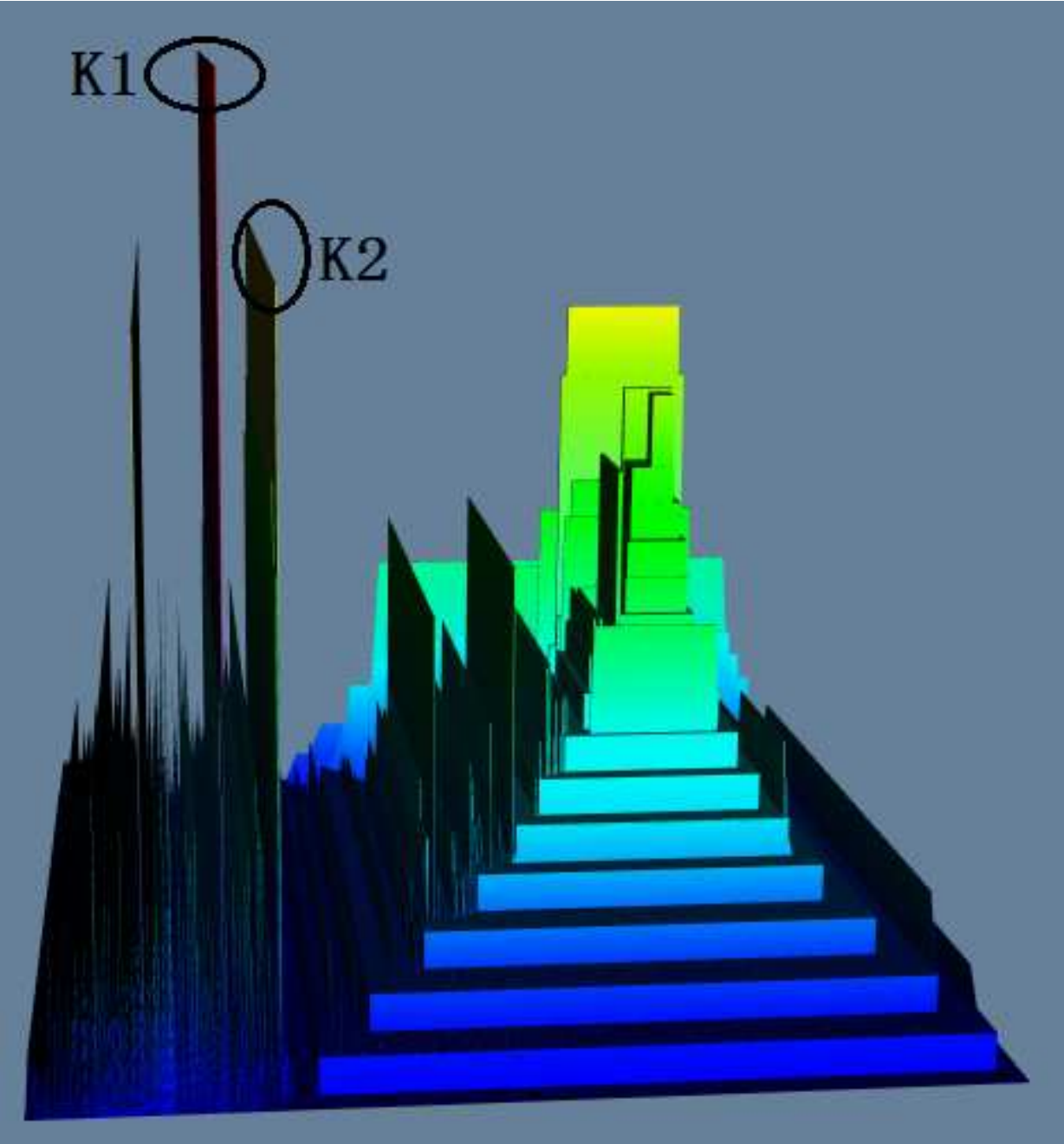}
\label{dblp_terrain}
}
 \quad
\subfigure[DBLP (LaNet-vi)]
{
\hspace*{-0.0cm}\includegraphics[width=0.22\textwidth]{./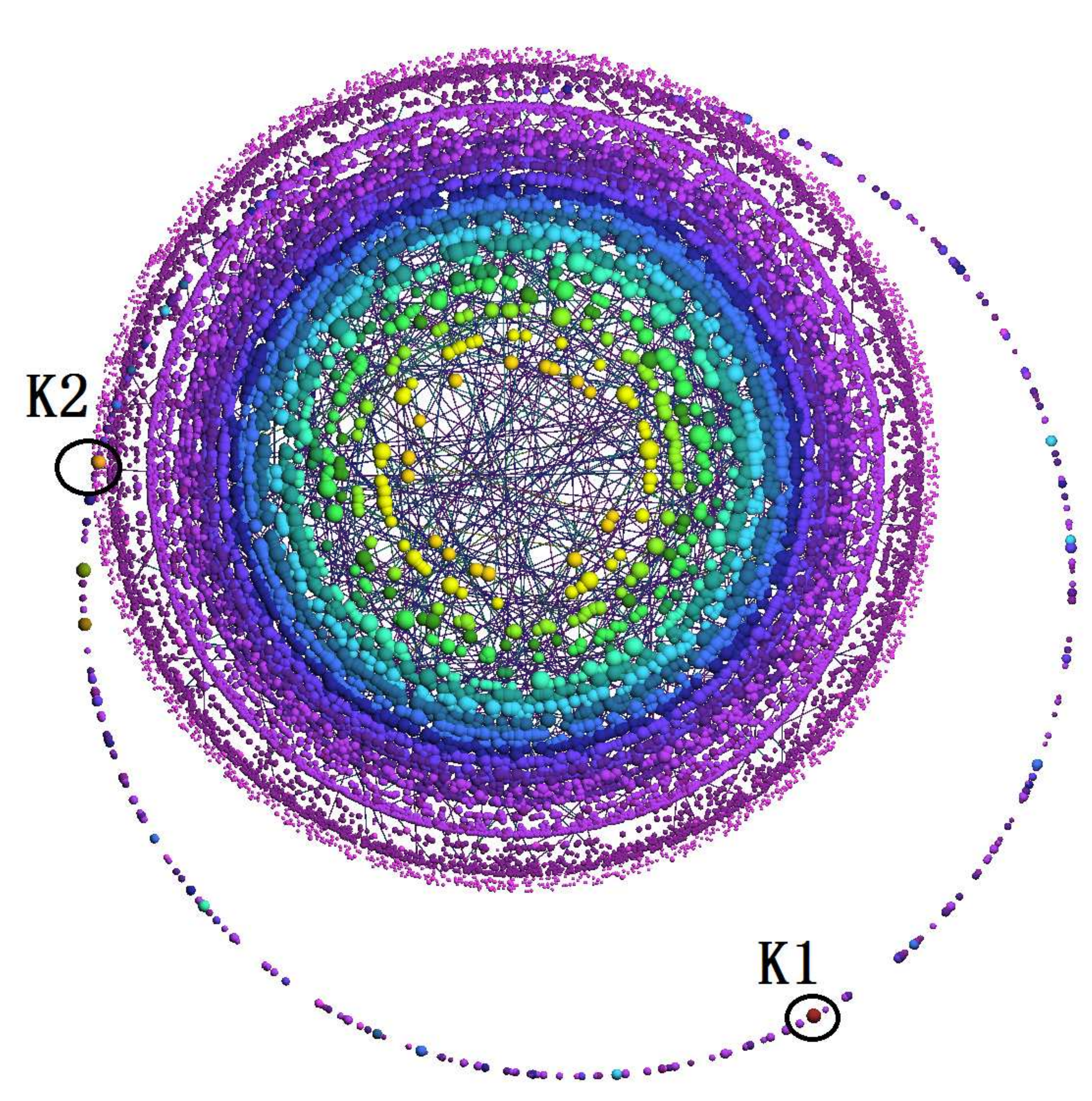}
\label{dblp_kcore}
}
 \quad
\subfigure[DBLP (OpenOrd)]
{
\hspace*{-0.0cm}\includegraphics[width=0.21\textwidth]{./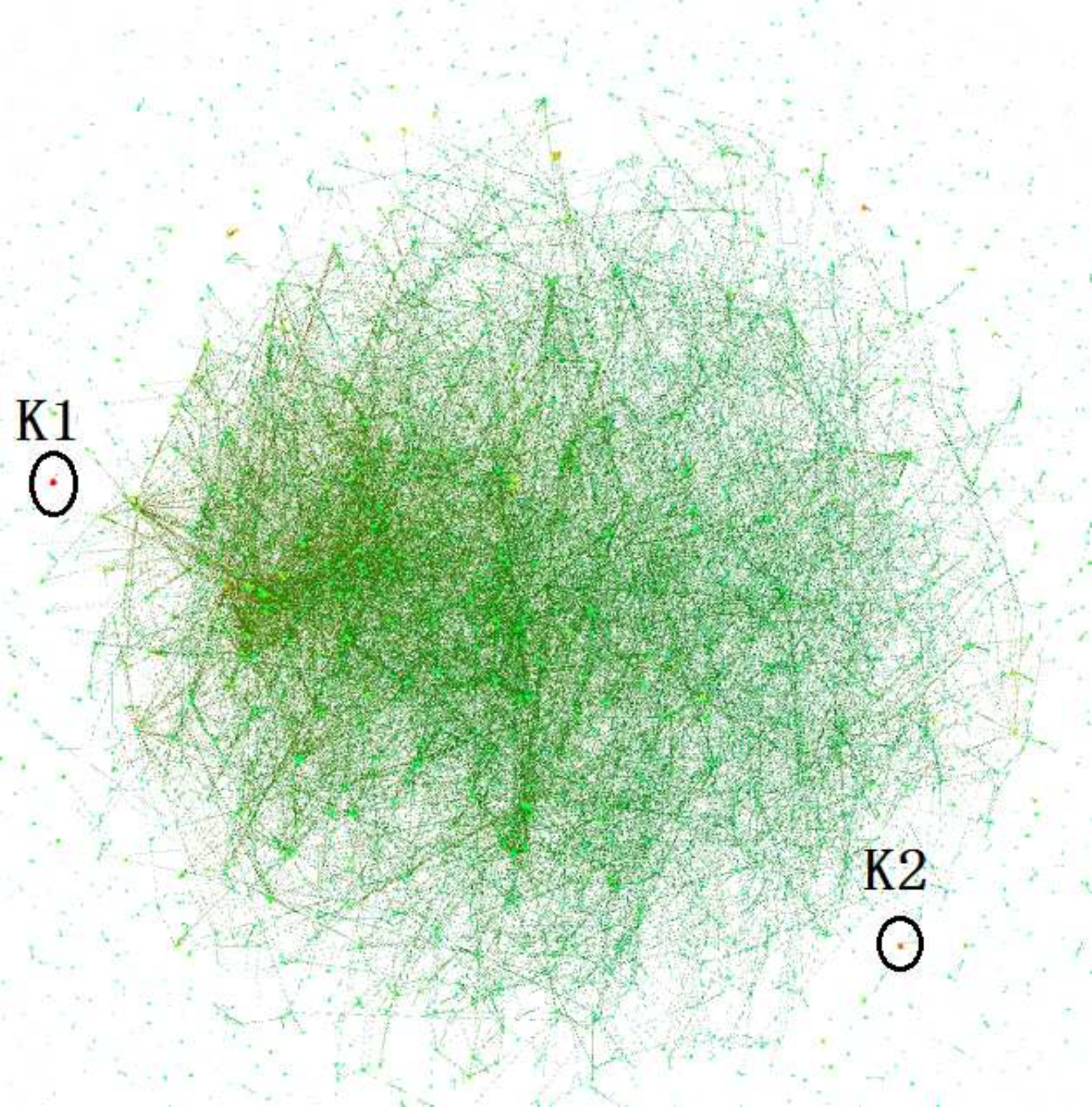}
\label{dblp_OpenOrd}
}
\vspace{-0.05in}
\caption[Optional caption for list of figures]{Visualizations of Single Scalar Field (GrQc, PPI and DBLP)}
\label{user_study}
\end{figure*}

For Task 1 and Task 2, we list the average completion time and
accuracy of all users in Table~\ref{task1} and Table~\ref{task2}.
All the pictures generated by the three visualization methods on the three datasets are from Figure~\ref{GrQc_terrain} to Figure~\ref{dblp_OpenOrd}.
In each picture we label the K-Core to be identified in Task 1/Task 2 as K1/K2.

\begin{table}[!h]
\fontsize{8}{10}\selectfont
\centering
\caption{Average Time(sec) and Accuracy of Task 1}
\begin{tabular}{|c|c|c|c|c|c|c|c|c|}\hline
   & \multicolumn{2}{c|}{Terrain}& \multicolumn{2}{c|}{LaNet-vi} & \multicolumn{2}{c|}{OpenOrd} \\\hline
  Dataset & accuracy & time & accuracy & time & accuracy & time \\\hline
  GrQc &  1 &  2.6  &  1 & 6.7  & 1 & 7.6 \\\hline
  PPI &  1 & 4.9  &  1  & 5.3 & 0.8  & 10.7\\\hline
  DBLP & 1  & 4.6  & 0.8  & 6.6  &  1 & 10.9\\\hline
\end{tabular}
\label{task1}
\end{table}

\vspace{-0.05in}

 \begin{table}[!h]
\fontsize{8}{10}\selectfont
\centering
\caption{Average Time(sec) and Accuracy of Task 2}
\begin{tabular}{|c|c|c|c|c|c|c|c|c|}\hline
   & \multicolumn{2}{c|}{Terrain}& \multicolumn{2}{c|}{LaNet-vi} & \multicolumn{2}{c|}{OpenOrd} \\\hline
  Dataset & accuracy & time & accuracy & time & accuracy & time \\\hline
  GrQc &  1 &  3.6  &  0.7 & 10.3   & 0.8 & 8.2\\\hline
  PPI &  1 & 4.1  &  0.2  & 7.7 & 0.7  & 11.6\\\hline
  DBLP & 1  & 5.1  & 0.8  & 8.5  &  0.9 & 9.8\\\hline
\end{tabular}
\label{task2}
\end{table}

\vspace{-0.05in}

 \begin{table}[!h]
\fontsize{8}{10}\selectfont
\centering
\caption{Average Time(sec) and Accuracy of Task 3}
\begin{tabular}{|c|c|c|c|c|c|c|}\hline
   & \multicolumn{2}{c|}{Terrain}& \multicolumn{2}{c|}{OpenOrd} \\\hline
  Dataset & accuracy & time & accuracy & time \\\hline
  Astro & 0.9 & 9.1   &  0.7 & 11.9 \\\hline
\end{tabular}
\label{task3}
\end{table}

In Table~\ref{task1}, we can see that all users successfully finished Task 1 by using terrain visualization.
Two users failed by using LaNet-vi on DBLP dataset, and two users failed by using OpenOrd on the PPI dataset.
The reason is that the densest K-Core in these two visualizations are small and not obvious,
 and they did not notice it or they couldn't correctly identify the K-Core value through the color.
Anecdotally we observed that
although they zoomed-in the two visualizations to see a larger picture,
they lost the full context and could only see portion of the picture,
leading them to choose incorrect densest K-Cores.

Table~\ref{task2} shows that all users successfully finished Task 2 by terrain visualization,
while some users failed by using LaNet-vi and OpenOrd.
One reason is Task 2 requires users to understand the connectivity to the densest K-Core (to avoid finding
one with significant overlap),
the LaNet-vi and OpenOrd both draw edges to indicate connections,
and users need to check the edges carefully to determine whether two K-Cores are connected,
it is time consuming and led to mistakes being made.
In both tasks, we can see from Table~\ref{task1} and Table~\ref{task2} that users spent least time on terrain visualization.

The results of Task 3 is presented in the Table~\ref{task3}, and the visualization pictures are in Figure~\ref{betweenness_terrain} and Figure~\ref{betweenness_color}.
The result shows that users achieved higher accuracy and spent less time on terrain visualization.
In the visualization generated by OpenOrd, some nodes are blocked by other nodes, which caused  users to make
the incorrect decision.
One participant also pointed out that it is easier to identify the outlier areas (circle in Figure~\ref{betweenness_terrain}) in the terrain visualization than in the OpenOrd visualization. Anecdotally,several (more than 10) users found the ability to rotate, as well as the linked 2D-Maps,
a useful feature in the Terrain Visualization strategy.

\begin{figure}[!h]
\centering

\subfigure[Astro (Terrain)]
{
\includegraphics[width=0.17\textwidth]{./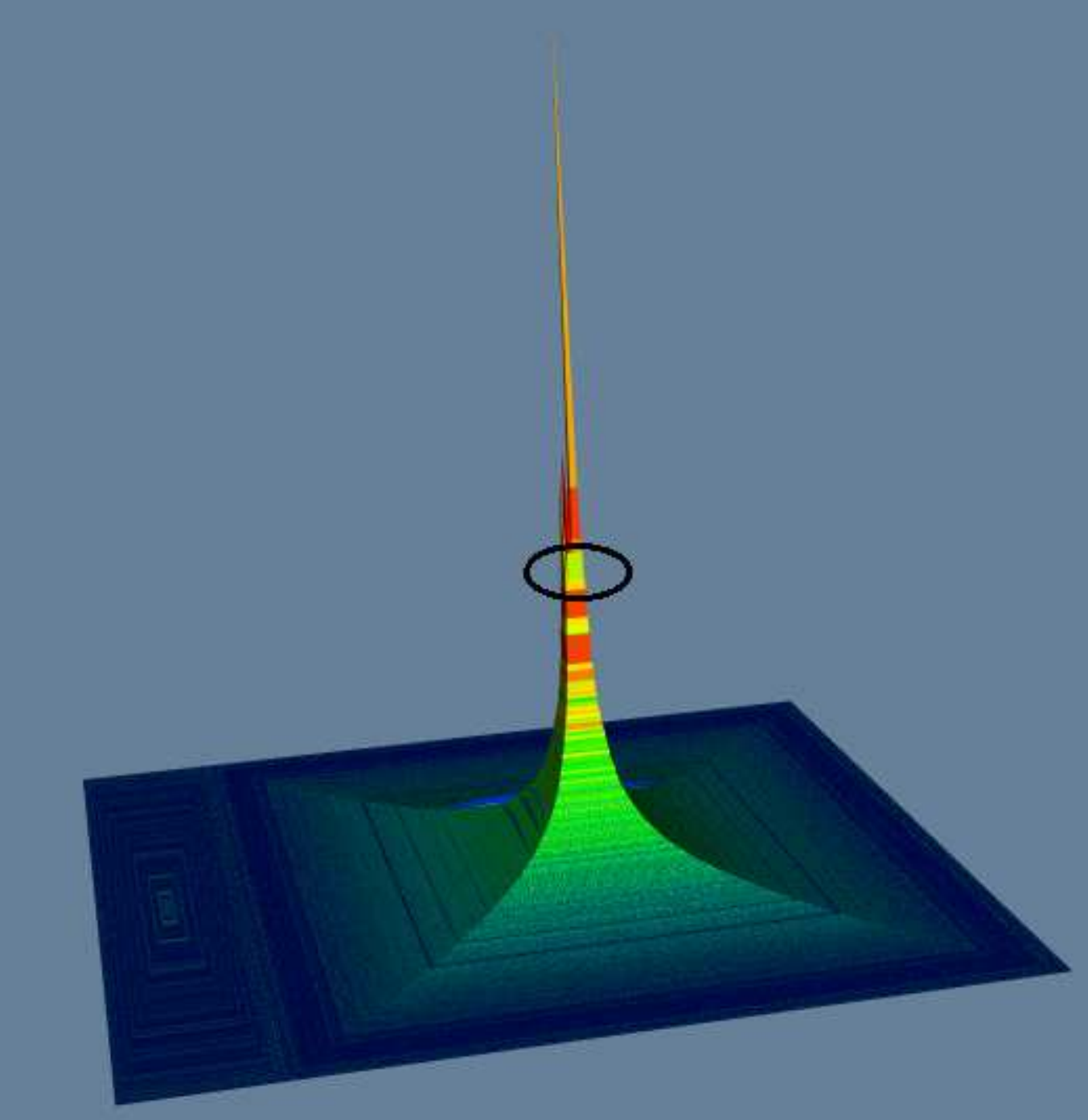}
\label{betweenness_terrain}
}
 \quad
\subfigure[Astro (OpenOrd)]
{
\includegraphics[width=0.18\textwidth]{./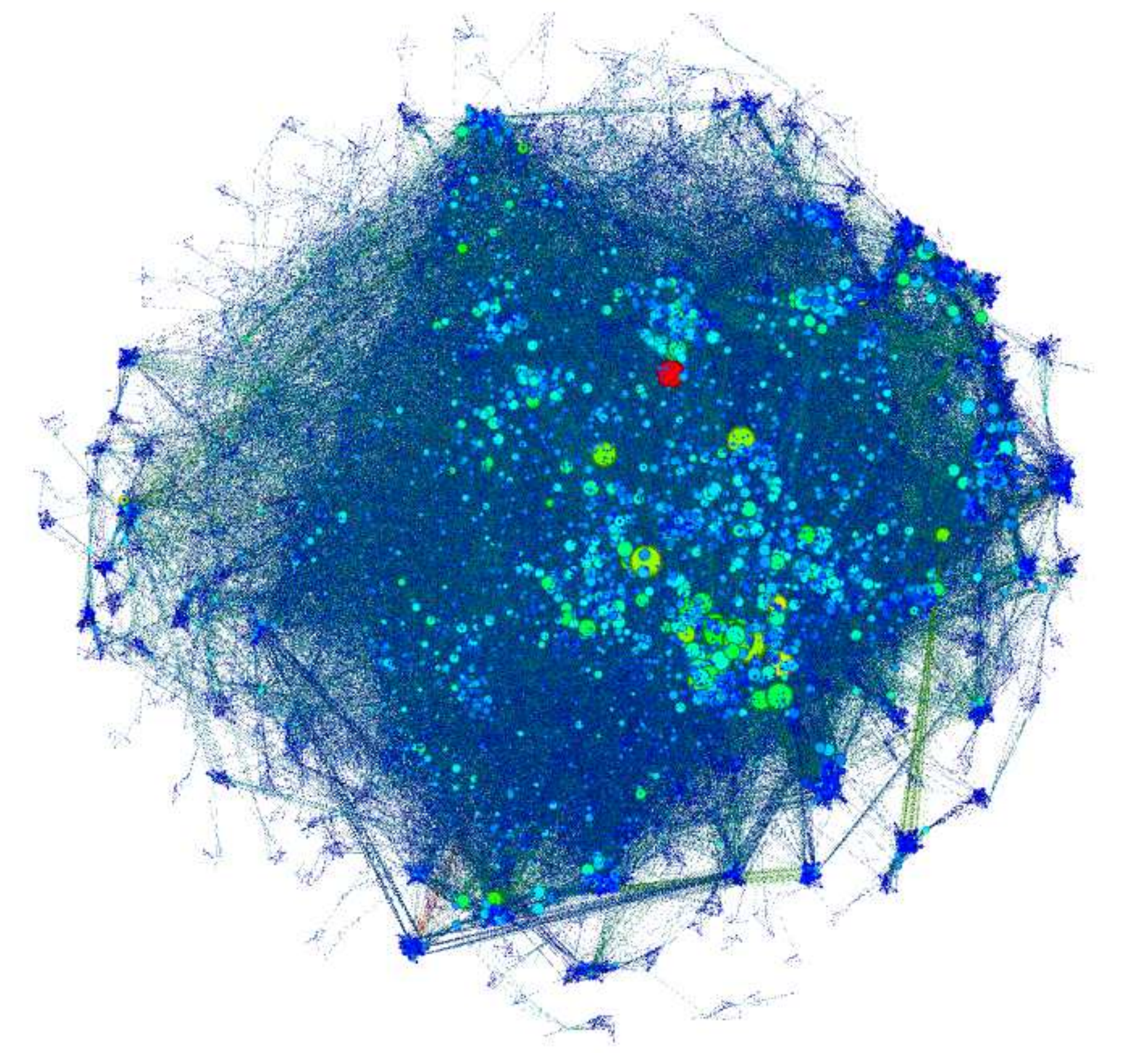}
\label{betweenness_color}
}

\vspace{-0.05in}
\caption[Optional caption for list of figures]{Visualizations of Task 3 in User Study}
\label{user_study2}
\end{figure}

\vspace{-0.10in}

\section{Conclusion}
In this paper, we propose a visualization method to analyze scalar graphs.
We demonstrate that our visualization method can reveal important information
such as the overall distribution of attribute values over a graph,
while simultaneously highlighting components-of-interest (dense subgraphs, social communities, etc.).
It can also be extended to analyze relationship between multiple attributes.
We evaluate our system on a range of real-world data and
demonstrated its effectiveness and scalability on a range of data science tasks. 
One interesting avenue of ongoing research is to incorporate these ideas
within a database -- where one can view the result of a query as an attributed graph (attributed by similarities among query result tuples). We envision that such visualizations can aid and abet users in creating more natural interfaces to interrogate and understand their data.

\vspace{-0.10in}





%
\bibliographystyle{IEEEtran}
\bibliography{icde_ref}

\begin{thebibliography}{10}
\providecommand{\url}[1]{#1}
\csname url@samestyle\endcsname
\providecommand{\newblock}{\relax}
\providecommand{\bibinfo}[2]{#2}
\providecommand{\BIBentrySTDinterwordspacing}{\spaceskip=0pt\relax}
\providecommand{\BIBentryALTinterwordstretchfactor}{4}
\providecommand{\BIBentryALTinterwordspacing}{\spaceskip=\fontdimen2\font plus
\BIBentryALTinterwordstretchfactor\fontdimen3\font minus
  \fontdimen4\font\relax}
\providecommand{\BIBforeignlanguage}[2]{{%
\expandafter\ifx\csname l@#1\endcsname\relax
\typeout{** WARNING: IEEEtran.bst: No hyphenation pattern has been}%
\typeout{** loaded for the language `#1'. Using the pattern for}%
\typeout{** the default language instead.}%
\else
\language=\csname l@#1\endcsname
\fi
#2}}
\providecommand{\BIBdecl}{\relax}
\BIBdecl

\bibitem{csv}
N.~Wang, S.~Parthasarathy, K.-L. Tan, and A.~K.~H. Tung, ``Csv: Visualizing and
  mining cohesive subgraphs,'' \emph{ACM SIGMOD}, 2008.

\bibitem{clus_model}
Z.~Xu, Y.~Ke, Y.~Wang, H.~Cheng, and J.~Cheng, ``A model-based approach to
  attributed graph clustering,'' \emph{ACM SIGMOD}, 2012.

\bibitem{wang:dngraph}
N.~Wang, J.~Zhang, K.-L. Tan, and A.~K.~H. Tung, ``On triangulation-based dense
  neighborhood graph discovery,'' \emph{VLDB}, 2010.

\bibitem{tkcore12}
Y.~Zhang and S.~Parthasarathy, ``Extracting analyzing and visualizing triangle
  k-core motifs within networks,'' \emph{IEEE ICDE}, 2012.

\bibitem{Vlad:kcore}
V.~Batagelj and M.~Zaversnik, ``{An O(m) Algorithm for Cores Decomposition of
  Networks},'' \emph{CoRR, arXiv.org/cs.DS/0310049}, 2003.

\bibitem{kcore_vis}
I.~Alvarez-Hamelin, L.~Dall'Asta, A.~Barrat, and A.~Vespignani, ``k-core
  decomposition: A tool for the analysis of large scale internet graphs.''
  \emph{arXiv:cs.NI/0511007}, 2005.

\bibitem{isosurface}
R.~Wenger, \emph{Isosurfaces: Geometry, Topology, and Algorithms}.\hskip 1em
  plus 0.5em minus 0.4em\relax CRC Press, 2013.

\bibitem{multi_field}
N.~Sauber, H.~Theisel, and H.-P. Seidel, ``Multifield-graphs: An approach to
  visualizing correlations in multifield scalar data,'' \emph{IEEE TVCG}, pp.
  917--924, 2006.

\bibitem{landscapes}
G.~Weber, P.-T. Bremer, and V.~Pascucci, ``Topological landscapes: A terrain
  metaphor for scientific data,'' \emph{IEEE TVCG}, pp. 1416--1423, 2007.

\bibitem{terrain10}
W.~Harvey and Y.~Wang, ``Generating and exploring a collection of topological
  landscapes for visualization of scalar-valued functions,''
  \emph{Eurographics}, 2010.

\bibitem{graphvis1}
D.~Auber, Y.~Chiricota, F.~Jourdan, and G.~Melancon, ``Multiscale visualization
  of small world networks,'' \emph{Proceedings of the Ninth Annual IEEE
  Conference on Information Visualization}, pp. 75--81, 2003.

\bibitem{measurevis1}
M.~Freire, C.~Plaisant, B.~Shneiderman, and J.~Golbeck, ``Manynets: An
  interface for multiple network analysis and visualization,'' \emph{SIGCHI},
  pp. 213--222, 2010.

\bibitem{clus_attr}
Y.~Zhou, H.~Cheng, and J.~X. Yu, ``Graph clustering based on
  structural/attribute similarities,'' \emph{VLDB}, 2009.

\bibitem{overlap13}
J.~Yang and J.~Leskovec, ``Overlapping community detection at scale: A
  nonnegative matrix factorization approach,'' \emph{ACM WSDM}, 2013.

\bibitem{contour00}
H.~Carr, J.~Snoeyink, and U.~Axen, ``Computing contour trees in all
  dimensions,'' \emph{Proceedings of the 11th Annual ACM-SIAM Symposium on
  Discrete Algorithms}.

\bibitem{james:truss}
J.~Wang and J.~Cheng, ``Truss decomposition in massive networks,'' \emph{VLDB},
  2012.

\bibitem{ktruss_cohen}
J.~Cohen, ``Trusses: Cohesive subgraphs for social network analysis,''
  \emph{Technical report, National Security Agency}, 2008.

\bibitem{L12}
T.~von Landesberger, A.~Kuijper, T.~Schreck, J.~Kohlhammer, J.~van Wijk, J.-D.
  Fekete, and D.~Fellner, ``Visual analysis of large graphs: State-of-the-art
  and future research challenges,'' \emph{Computer Graphics Forum}, 2011.

\bibitem{L1}
M.~Gronemann and M.~Junger, ``Drawing clustered graphs as topographic maps,''
  \emph{Graph Drawing}, 2012.

\bibitem{L4}
J.~C. Athenstadt, R.~Gorke, M.~Krug, and M.~Nollenburg, ``Visualizing large
  hierarchically clustered graphs with a landscape metaphor,'' \emph{Graph
  Drawing}, 2012.

\bibitem{L10}
R.~van Liere and W.~de~Leeuw, ``Graphsplatting: Visualizing graphs as
  continuous fields,'' \emph{IEEE TVCG}, 2003.

\bibitem{L9}
A.~Telea and O.~Ersoy, ``Image-based edge bundles: simplified visualization of
  large graphs,'' \emph{Eurographics}, 2010.

\bibitem{L11}
A.~Bezerianos, F.~Chevalier, P.~Dragicevic, N.~Elmqvist, and J.~D. Fekete,
  ``Graphdice: A system for exploring multivariate social networks,''
  \emph{Eurographics}, 2010.

\bibitem{wattenberg2006visual}
M.~Wattenberg, ``Visual exploration of multivariate graphs,'' \emph{SIGCHI},
  2006.

\bibitem{dense_hierarchy}
A.~E. Sariyuce, C.~Seshadhri, A.~Pinar, and U.~V. Catalyurek, ``Finding the
  hierarchy of dense subgraphs using nucleus decompositions,'' \emph{WWW},
  2015.

\bibitem{open_ord}
S.~Martin, W.~M. Brown, R.~Klavans, and K.~W. Boyack;, ``Openord: an
  open-source toolbox for large graph layout,'' \emph{Proc. SPIE}, vol. 7868,
  2011.

\bibitem{Bajaj97thecontour}
C.~L. Bajaj, V.~Pascucci, and D.~R. Schikore, ``The contour spectrum,''
  \emph{IEEE Visualization '97}, 1997.

\bibitem{1263272}
G.~H. Weber, S.~E. Dillard, H.~Carr, V.~Pascucci, and B.~Hamann,
  ``Topology-controlled volume rendering,'' \emph{IEEE TVCG}, vol.~13, no.~2,
  pp. 330--341, 2007.

\bibitem{769927}
H.~Carr and J.~Snoeyink, ``Path seeds and flexible isosurfaces using topology
  for exploratory visualization,'' \emph{Proceedings of the symposium on Data
  visualisation 2003}.

\bibitem{snapnets}
J.~Leskovec and A.~Krevl, ``{SNAP Datasets}: {Stanford} large network dataset
  collection,'' http://snap.stanford.edu/data.

\bibitem{graphdraw}
T.~M.~J. Fruchterman and E.~M. Reingold, ``Graph drawing by force-directed
  placement,'' \emph{Softw. Pract. Exper.}, 1991.

\bibitem{rolx}
K.~Henderson, B.~Gallagher, T.~Eliassi-Rad, H.~Tong, S.~Basu, L.~Akoglu,
  D.~Koutra, C.~Faloutsos, and L.~Li, ``Structural role extraction and mining
  in large graphs,'' \emph{ACM SIGKDD}, pp. 1231--1239, 2012.

\bibitem{commrole14}
Y.~Ruan and S.~Parthasarathy, ``Simultaneous detection of communities and roles
  from large networks,'' \emph{Proceedings of the 2nd ACM Conference on Online
  Social Networks}, pp. 203--214, 2014.

\end{thebibliography}

\end{document}